\documentclass[a4paper,12pt]{article}
\usepackage[utf8]{inputenc}
\usepackage{amsthm,amsmath,amssymb,amsfonts,upgreek,tikz}
\usepackage{color}
\usetikzlibrary{arrows,positioning,cd,calc}
\usetikzlibrary{decorations.pathreplacing,decorations.markings}
\tikzset{
	on each segment/.style={
		decorate,
		decoration={
			show path construction,
			moveto code={},
			lineto code={
				\path [#1]
				(\tikzinputsegmentfirst) -- (\tikzinputsegmentlast);
			},
			curveto code={
				\path [#1] (\tikzinputsegmentfirst)
				.. controls
				(\tikzinputsegmentsupporta) and (\tikzinputsegmentsupportb)
				..
				(\tikzinputsegmentlast);
			},
			closepath code={
				\path [#1]
				(\tikzinputsegmentfirst) -- (\tikzinputsegmentlast);
			},
		},
	},
	mid arrow/.style={postaction={decorate,decoration={
				markings,
				mark=at position .5 with {\arrow[#1]{stealth}}
	}}},
}

\usepackage[pdftex, bookmarksnumbered=true]{hyperref}

\usepackage{titlesec}
\titleformat*{\section}{\Large \normalfont \bfseries}
\titleformat*{\subsection}{\large \normalfont \bfseries}
\usepackage{setspace}

\setlength{\headheight}{-2cm}
\setlength{\evensidemargin}{0.5cm}
\setlength{\oddsidemargin}{-0.5cm}
\setlength{\textheight}{24cm}
\setlength{\textwidth}{17cm}

\newcommand{\nc}{\newcommand}

\nc{\ny}{\nonumber}
\nc{\ra}{\rangle}
\nc{\la}{\langle}
\nc{\D}{\Delta}
\nc{\lmb}{\lambda}
\nc{\sg}{\sigma}
\nc{\ora}{\overrightarrow}
\nc{\td}{\widetilde}
\nc{\NSR}{\textsf{NSR}}
\nc{\sfF}{\textsf{F}}
\nc{\Vir}{\textsf{Vir}}
\nc{\Asl}{\widehat{\mathfrak{sl}}}
\nc{\sfV}{\textsf{V}}
\nc{\tr}{\mathrm{Tr}}
\nc{\red}{\color{red}}
\nc{\NS}{\scriptscriptstyle{\textsf{NS}}}
\nc{\R}{\scriptscriptstyle{\textsf{R}}}
\nc{\nsr}{\scriptscriptstyle{\textsf{NSR}}}
\nc{\vac}{\varnothing}
\nc{\G}{\mathsf{G}}
\nc{\BNS}{\textsf{NS}}
\nc{\BR}{\textsf{R}}
\nc{\Q}{\mathbf{Q}}
\nc{\B}{\operatorname{B}}
\nc{\ti}{\mathfrak{t}}

\numberwithin{equation}{section}
\newtheorem{thm}{Theorem}[section]
\newtheorem{prop}{Proposition}[section]
\newtheorem{lemma}{Lemma}[section]
\newtheorem{conj}{Conjecture}[section]

\newtheorem{Remark}{Remark}[section]
\newtheorem{defn}{Definition}[section]
\title{Painlev\'e equations from Nakajima-Yoshioka \\ blowup relations}
\author{M.~Bershtein, A.~Shchechkin}
\date{}
\begin{document}
	\maketitle
	\begin{abstract}
		Gamayun, Iorgov and Lisovyy in 2012 proposed that tau function of the Painlev\'e equation equals to the series of $c=1$ Virasoro conformal blocks. We study similar series of $c=-2$ conformal blocks and relate it to Painlev\'e theory. The arguments are based on Nakajima-Yoshioka blowup relations on Nekrasov partition functions. 
		
		We also study series of $q$-deformed $c=-2$ conformal blocks and relate it to $q$-Painlev\'e equation. As an application we prove formula for the tau function of $q$-Painlev\'e $A_7^{(1)'}$ equation.

	\end{abstract}
	\tableofcontents
	\section{Introduction}
	\label{sec:intro}
	\paragraph{Motivation.} The subject of the paper is the relation between conformal field theory and Painlev\'e equations (and more generally equations of isomonodromic deformation). The first example of this relation was conjectured in \cite{GIL12} and states that tau function of the Painlev\'e VI equation is equal to the (Fourier) series of $c=1$ Virasoro conformal blocks 
	\begin{equation}
	\tau(\boldsymbol{\theta};\sigma,s|z)=\sum_{n\in\mathbb{Z}} s^n  \mathcal{Z}_{c=1}(\boldsymbol{\theta};\sigma+n|z). \label{eq:GIL}
	\end{equation}
	Due to AGT relation conformal block $\mathcal{Z}_{c=1}(\boldsymbol{\theta},\sigma|z)$ is equal to 4d Nekrasov partition function, the central charge condition $c=1$ corresponds to a condition $\epsilon_1+\epsilon_2=0$ for Nekrasov parameters. The formula \eqref{eq:GIL} was proven in \cite{ILT14}, \cite{BS14}, \cite{GL16} by different methods, namely the proof in \cite{ILT14} is based on the monodromy properties of conformal blocks with degenerate field, the proof in \cite{BS14} is based on bilinear relations and proof in \cite{GL16} actually uses only combinatorial formula for Nekrasov partition function.
	
	The Painlev\'e VI equation is a particular case of the equation of the isomonodromic deformation.  There are plenty of  generalizations of the formula \eqref{eq:GIL} for the tau functions of isomonodromic deformation problems including irregular singularities and rank $N$ linear systems (see e.g. \cite{GIL13}, \cite{N15}, \cite{GM16}). 
	
%
	It is interesting to look for the analog of the formula \eqref{eq:GIL} with right side given as a series of Virasoro conformal blocks with $c\neq 1$.\footnote{It was proposed in the paper \cite{BGM17} that such series for generic central charge can be viewed as a quantum Painlev\'e tau function. We do not discuss quantization in this paper.} There are several reasons to believe that at least for special central charges the tau functions defined in such manner possess nice properties. First, the crucial step in the proof in \cite{ILT14} was the commutativity of operator valued monodromies which holds for central charges of (logarithmic extension of) minimal models $\mathcal{M}(1,n)$
	\begin{equation}\label{eq:cc:M(1,n)}
	c=1-6\frac{(n-1)^2}{n},\; n \in \mathbb{Z}\setminus \{0\}. 
	\end{equation}
	In terms of Nekrasov parameters it means that $\epsilon_1+n\epsilon_2=0$. Second, in \cite{BS14} the bilinear relations for conformal blocks were proven for any central charge and it was also observed that they could lead to the relations on tau functions for central charges \eqref{eq:cc:M(1,n)}. Third, B. Feigin in the paper \cite{F17} argued that the relation to isomonodromic deformations comes from the action of $SL(2,\mathbb{C})$ on the vertex algebra and for central charges \eqref{eq:cc:M(1,n)} such logarithmic algebras exist \cite{FGST06}.
	
	Another direction for generalization of the formula \eqref{eq:GIL} is a $q$-deformation. It was conjectured in \cite{BS16q}, \cite{JNS17} that formulas of this type solve tau form of $q$-difference Painlev\'e equations. Here one needs to replace function $\mathcal{Z}_{c=1}(\boldsymbol{\theta},\sigma|z)$ by 5d Nekrasov partition functions, or $q$-deformed conformal blocks. 
	
	In this paper we consider the case of $c=-2$, which corresponds to $\mathcal{M}(1,2)$ minimal model. Using the pure Nekrasov partition functions (4d or 5d) we introduce $c=-2$ tau functions by a (Fourier) series 
	\begin{equation}
	\uptau^{\pm}(a,s|z)=\sum_{n\in\mathbb{Z}}s^{n/2}\mathcal{Z}(a+2n\epsilon;\mp\epsilon,\pm 2\epsilon|z).    \label{eq:tau:pm}
	\end{equation}
	We study their properties and relate them to parameterless Painlev\'e III and $q$-Painlev\'e III equations. And as a by-product we prove the formula \eqref{eq:GIL} for the tau function of this $q$-Painlev\'e equation, conjectured in \cite{BS16q}.\footnote{After the paper was written, we noticed preprint \cite{MN18} with another (based on degeneration of the  results of \cite{JNS17}) proof of this conjecture .}
	
	\paragraph{Method.} The main idea is to use Nakajima-Yoshioka blowup relations on Nekrasov partition functions \cite{NY03}, \cite{NY05}, \cite{GNY06}, \cite{NY09}. They have the form 
    	\begin{equation}\label{eq:Z=ZZ}
    	\beta_D\mathcal{Z}(a,\epsilon_1,\epsilon_2|z)=\sum_{n \in \mathbb{Z}+j/2} \mathrm{D}\Big(\mathcal{Z}(a+2n\epsilon_1,\epsilon_1,-\epsilon_1+\epsilon_2|z),\mathcal{Z}(a+2n\epsilon_2,\epsilon_1-\epsilon_2,\epsilon_2|z)\Big),\quad j=0,1    
    	\end{equation}
    	where $\mathrm{D}$ is some differential (or difference) operator and $\beta_D$ is some function, possibly equal to zero, see eq. \eqref{NY}, \eqref{NY2}-\eqref{NY1}, \eqref{qNY1}-\eqref{qNY3} in the main text. 
	
	Now set $\epsilon_1+\epsilon_2=0$, then on the left side we get $c=1$ conformal block and on the right side a bilinear expression of $c=-2$ conformal blocks. Taking the sum of these relations with coefficients $s^n$ one gets $\tau(z)$ on the l.h.s. The r.h.s. can be written in terms of $c=-2$ tau functions defined in \eqref{eq:tau:pm}, namely 
	\begin{equation}
	\beta_D\tau(z)=\mathrm{D}(\uptau^+(z),\uptau^-(z)). \label{eq:tau=tau+tau-}
	\end{equation}
	Excluding $\tau(z)$ from these equations one gets closed system of bilinear relations on functions $\uptau^+(z)$, $\uptau^-(z)$. Moreover, the equations \eqref{eq:Z=ZZ} can be used to prove the bilinear relations on $\tau(z)$ (see Prop. \ref{doubleprop}, \ref{qdoubleprop})
	
	This system of bilinear relations on $\tau(z)$ is a bilinear form of Painlev\'e equations. It was shown in \cite{BS14},\cite{BS16q} that they follow from relations \begin{equation}\label{eq:Z2=ZZ}
	\tilde{\mathcal{Z}}(a,\epsilon_1,\epsilon_2|z)=\sum_{n \in \mathbb{Z}+j/2} \mathrm{D}\Big(\mathcal{Z}(a+2n\epsilon_1,2\epsilon_1,-\epsilon_1+\epsilon_2|z),\mathcal{Z}(a+2n\epsilon_2,\epsilon_1-\epsilon_2,2\epsilon_2|z)\Big),\quad j=0,1 
	\end{equation}
	after specialization $\epsilon_1+\epsilon_2=0$ and exclusion of $\tilde{\mathcal{Z}}$ (here $\tilde{\mathcal{Z}}$ is another Nekrasov partition function). The relation \eqref{eq:Z2=ZZ} is a  blowup relation for $\mathbb{C}^2/\mathbb{Z}_2$ \cite{BPSS13}. In this sense derivation of Painlev\'e equations from \eqref{eq:tau=tau+tau-} is a derivation of some  $\mathbb{C}^2/\mathbb{Z}_2$ blowup equation from ordinary $\mathbb{C}^2$ blowup equations (in case $\epsilon_1+\epsilon_2=0$).
	
	\paragraph{Content.} In Section \ref{sec:setup} we recall necessary facts about Painlev\'e equations and their solutions. In the paper we mainly consider two Painleve equations which we denote by Painlev\'e~III($D_8^{(1)}$) and $A_7^{(1)'}$ following Sakai geometric notations \cite{S01}. These equations are parameterless Painlev\'e~III equation (also called Painlev\'e~$\mathrm{III}'_3$ in the literature, see e.g. \cite{GIL13}) and its $q$-deformation. These equations are written in terms of two tau functions $\tau$ and $\tau_1$ related by B\"acklund transformation. The precise statement about solutions in terms of $\epsilon_1+\epsilon_2=0$ Nekrasov functions is given in Theorem \ref{Kievthm}, which was proven in \cite{ILT14},\cite{BS14},\cite{GL16} and Theorem \ref{qKievthm} which we prove in Section \ref{sec:qtau}.
	
	In Section \ref{sec:tau} we discuss differential equations and 4d Nekrasov partition functions. We construct $\uptau^+$ and $\uptau^-$ and their B\"acklund transformations $\uptau_1^+$ and $\uptau_1^-$. The bilinear equations, which follow from the Nakajima-Yoshioka blowup relations take the form
	\begin{eqnarray}
	\uptau^+\uptau^-=\tau,\quad D^1_{[\log z]}(\uptau^+,\uptau^-)=z^{1/4}\tau_1, \quad
	D^2_{[\log z]}(\uptau^+,\uptau^-)=0,\\
	D^3_{[\log z]}(\uptau^+,\uptau^-)=z^{1/4}\left(z\frac{d}{dz}\right)\tau_1,\quad D^4_{[\log z]}(\uptau^+,\uptau^-)=-2z\tau. 
	\end{eqnarray}
	Also we have B\"acklund transformed versions of these equations. Here $D^k_{[\log z]}$ denotes Hirota differential operator. We also find definition of $\uptau^+$ and $\uptau^-$ in terms of Painlev\'e theory, namely 
	\begin{equation}\label{eq:tau+-KZ}
	z\frac{d}{dz}\uptau^{\pm}=\frac12(\zeta\mp i\sqrt{\zeta'})\uptau^{\pm}, 
	\end{equation}
	where $\zeta$ is non-autonomous Hamiltonian of Painlev\'e equation \eqref{zeta3} (see e.g. \cite{GIL13}). It is instructive to compare this equation with a defining equation on Painlev\'e tau function $z\frac{d\tau}{dz}=\zeta\tau$. From the CFT point of view the  equation \eqref{eq:tau+-KZ} should be a Knizhnik–Zamolodchikov equation for the conformal field theory with central charge~$c=-2$ and $SL(2,\mathbb{C})$ action, this theory is symplectic fermion theory, see e.g.~\cite{K00}. 
	
	In Section \ref{sec:qtau} we discuss $\uptau^+$ and $\uptau^-$ in $q$-difference setting. First, in Section \ref{ssec:qtaum2}, from Nakajima-Yoshioka blowup relation we obtain 
	equations 
	\begin{eqnarray}\label{eq:tau+-q:eq}
	\uptau^+\uptau^-=\tau,\quad \overline{\uptau^+}\underline{\uptau^-}+\underline{\uptau^+}\overline{\uptau^-}=2\tau, \quad
	\overline{\uptau^+}\underline{\uptau^-}-\underline{\uptau^+}\overline{\uptau^-}=-2z^{1/4}\tau_1.
	\end{eqnarray} 
	Here we use notations $\overline{f(z)}=f(qz), \underline{f(z)}=f(q^{-1}z)$. We deduce from equations \eqref{eq:tau+-q:eq} bilinear equations on functions $\tau,\tau_1$, thus proving Theorem \ref{qKievthm}. In Section \ref{ssec:CS} we use the same idea for 5d $SU(2)$ Nekrasov functions with Chern-Simons term at the level $m=1,2$. We show that tau functions for $m=2$ again solve $q$-Painlev\'e $A_7^{(1)'}$ and for $m=1$ they  solve $q$-Painlev\'e $A_7^{(1)}$ in agreement with a conjecture of \cite{BGM18}.
	
	It is natural to ask what is the meaning of the dynamics \eqref{eq:tau+-q:eq} for functions $\uptau^+,\uptau^-, \uptau_1^+, \uptau_1^-$. In Section \ref{ssec:clust2} we show that this is a particular case of another $q$-Painlev\'e, namely $A_3^{(1)}$ equation, which is $q$-Painlev\'e VI. Jimbo, Nagoya, Sakai in \cite{JNS17} proposed the formula for tau functions of this equation, this formula has the form of Fourier series \eqref{eq:GIL} of 5d Nekrasov partition function with four matters. Therefore we get a surprising relation between $c=-2$ pure Nekrasov partition function and $c=1$ Nekrasov partition function with special values of masses of matter fields, see Conjecture \ref{conj:prdx}. 
	
	Bonelli, Grassi, Tanzini in the paper \cite{BGT17} observed that the function \eqref{eq:GIL} has a natural meaning in framework of topological strings/spectral theory duality \cite{GHM14}. Namely for $|q|=1$, $s=1$ and appropriate choice of $\mathcal{Z}$ the tau function conjecturally equals to the spectral determinant $\Xi$ up to some simple factor. Moreover for $z=q^M,\,M\in\mathbb{Z}$ the function $\Xi$ becomes grand canonical partition function of ABJ theory. And in this case there exists natural factorization $\Xi=\Xi^+\Xi^-$, moreover it was proposed in \cite{GHM14'} that functions $\Xi^+,\Xi^-$ satisfy additional, so called Wronskian-like, relations. We show in Section \ref{ssec:ABJM} that $\Xi^+$, $\Xi^-$ should be equal to our $\uptau^+$, $\uptau^-$ up to some simple factor. Probably, this means the existence of generalization of topological strings/spectral theory duality for the case of $q=t^2$ on topological strings side. So we see that $c=-2$ tau functions appear naturally in this approach too (and hopefully in any approach to CFT/isomonodromy correspondence).
	
	We conclude with the several directions for future study and partial results in Discussion.

	\paragraph{Acknowledgements.}
	We thank  B.~Feigin, P.~Gavrylenko, A.~Grassi, A.~Marshakov, A.~Mi\-ro\-nov, H.~Nakajima for interest to our work and discussions. We are grateful to the referee for valuable comments.
	
	This work is partially supported by HSE University Basic Research Program and funded (partially) by the Russian Academic Excellence Project '5-100' and by the RFBR grant mol\_a\_ved 18-31-20062. Authors were also supported in part by Young Russian Mathematics award. The work of M.B. in Landau Institute has been funded by FANO assignment 0033-2018-0006.
	
	\section{Painlev\'e equations and $c=1$ tau functions}
	\label{sec:setup}
	
	\subsection{Painlev\'e III($D_8^{(1)}$) and $A_7^{(1)'}$  equations}
	\label{ssec:setupeq}
	We recall several facts about one of the simplest Painlev\'e equations --- Painlev\'e III($D_8^{(1)}$) (or Painlev\'e $\mathrm{III}'_3$) 
	and its $q$-deformation --- Painlev\'e $A_7^{(1)'}$ equation following \cite{BS16b}, \cite{BS16q} and references therein.
	
	\paragraph{Painlev\'e III($D_8^{(1)})$ equation.}
	
	We will start from the so-called $\zeta$-form of 
	this equation
	\begin{equation}
	(z\ddot\zeta(z))^2=4\dot\zeta(z)^2(\zeta(z)-z\dot\zeta(z))-4\dot\zeta(z),\quad \ddot\zeta(z)\neq 0, \label{zeta3}
	\end{equation}
	where we use dot notation for differentiation with respect to $z$. This equation is also called radial sine-Gordon equation (see e.g. \cite[Eq. (2.6)]{BS16b}).
	
	This equation has a $\mathbb{Z}_2$ symmetry $\pi$: $z\mapsto z,\,\zeta\mapsto \zeta_1$, where $\dot{\zeta}\dot{\zeta_1}=z^{-1}$ and
	constant of integration for $\zeta_1$ is defined by the fact that $\zeta_1$ satisfies \eqref{zeta3}.
	This transformation is called B\"acklund transformation of given  Painlev\'e equation.
	We will mark B\"acklund transformed variables by the subscript ${}_1$. Also, we will optionally mark initial variables by the subscript ${}_0$ where it is convenient.
	
	We introduce the tau function, defined up to a $z$-constant, by the formula
	\begin{equation}
	\zeta(z)=\frac{d \log\tau(z)}{d\log z} \quad \textrm{and conversely} \quad \tau=\exp\left(\int\zeta(z) d\log z\right) .\label{tauzeta}
	\end{equation}
	It is convenient to write any Painlev\'e equation as an equation(s) on tau function. It appears that such
	equations are bilinear in $\tau$ and it is convenient to write them using Hirota differential operators $D^k_{[x]}$
	\begin{equation}
	f(e^{\alpha}z)g(e^{-\alpha}z)=\sum\limits_{k=0}^{\infty}D^{k}_{[\log z]}(f(z),g(z))\frac{\alpha^k}{k!}.\label{Hiru}
	\end{equation}
	The first examples of Hirota operators are
	\begin{equation}
	D^0_{[\log z]}(f(z),g(z))=f(z)g(z),\qquad D^1_{[\log z]}(f(z),g(z))=f'(z)g(z)-f(z)g'(z), 
	\end{equation}
	where we use prime notation for differentiation with respect to $\log z$.
	In this paper we use only Hirota derivatives with respect to the logarithm of a variable.
	
	For Painlev\'e III($D_8^{(1)}$), the equation on the single tau function $\tau$ is a rather cumbersome equation of order $4$ (see \cite[Eq. (4.7)]{BS14}).
	We will use a more simple tau-form: the so-called Toda-like equations on the tau function and its B\"acklund transformation $\tau_1$
	\begin{equation}
	\begin{aligned}
	D^2_{[\log z]}(\tau,\tau)&=-2z^{1/2}\tau_1^2, \\
	D^2_{[\log z]}(\tau_1,\tau_1)&=-2z^{1/2}\tau^2, \label{eq:Toda:old}
	\end{aligned}
	\end{equation}
	where we used appropriate constant normalization of $\tau$ and $\tau_1$ to make two equations look the same.
	These equations are almost equivalent to \eqref{zeta3}.
	Namely, we have the freedom of multiplying the solutions of \eqref{eq:Toda:old} by $z^K$ as $\tau\mapsto z^K \tau, \tau_1\mapsto z^K \tau_1$.
	This freedom corresponds via \eqref{tauzeta} to the freedom to add a constant $K$ to $\zeta$ and  $\zeta_1$. 
	The above equivalence is exactly up to this freedom. 
	\begin{prop}\cite{BS16b}
		The functions $\zeta$ and $\zeta_1$ are solutions of \eqref{zeta3} if and only if the corresponding functions $\tau$ and $\tau_1$ multiplied on $z^K$ (with certain $K$) are solutions of \eqref{eq:Toda:old}
		\label{prop:Toda}
	\end{prop}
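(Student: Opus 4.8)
The plan is to translate the bilinear system \eqref{eq:Toda:old} into first order equations for $\zeta=d\log\tau/d\log z$ and $\zeta_1=d\log\tau_1/d\log z$, and then to recognise \eqref{zeta3} as a first integral of the resulting flow. Writing $'=d/d\log z$, the starting point is the elementary identity
\begin{equation}
D^2_{[\log z]}(\tau,\tau)=2\big(\tau''\tau-(\tau')^2\big)=2\tau^2(\log\tau)''=2\tau^2\zeta',
\end{equation}
and likewise $D^2_{[\log z]}(\tau_1,\tau_1)=2\tau_1^2\zeta_1'$. Substituting these into \eqref{eq:Toda:old} and setting $w=\log(\tau_1/\tau)$ reduces the two Toda equations to
\begin{equation}
\zeta'=-z^{1/2}e^{2w},\qquad \zeta_1'=-z^{1/2}e^{-2w},
\end{equation}
while $w'=(\log\tau_1)'-(\log\tau)'=\zeta_1-\zeta$ holds by construction.

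From these relations the expected structure drops out immediately. Multiplying them gives $\zeta'\zeta_1'=z$, that is the B\"acklund relation $\dot\zeta\dot\zeta_1=z^{-1}$; differentiating $w'=\zeta_1-\zeta$ yields
\begin{equation}
w''=\zeta_1'-\zeta'=2z^{1/2}\sinh 2w,
\end{equation}
the radial sine--Gordon equation written for $w$. Thus \eqref{eq:Toda:old} is equivalent to this second order equation for $w$ together with $\zeta'=-z^{1/2}e^{2w}$, which recovers $\zeta$, and then $\zeta_1=\zeta+w'$.

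The key step is to identify \eqref{zeta3} with a conserved quantity of this flow. In the $'$ notation \eqref{zeta3} reads $(\zeta''-\zeta')^2=4(\zeta')^2(\zeta-\zeta')-4z\zeta'$; from $\zeta'=-z^{1/2}e^{2w}$ one gets $\zeta''=\zeta'\big(\tfrac12+2w'\big)$, hence $\zeta''-\zeta'=\zeta'\big(2w'-\tfrac12\big)$, while $z/\zeta'=\zeta_1'=-z^{1/2}e^{-2w}$. Dividing by $(\zeta')^2$ turns \eqref{zeta3} into the algebraic relation
\begin{equation}
F:=\big(2w'-\tfrac12\big)^2-4\zeta-8z^{1/2}\cosh 2w=0 .
\end{equation}
I would then compute $F'$, inserting $w''=2z^{1/2}\sinh 2w$ and $\zeta'=-z^{1/2}e^{2w}$: the $w'\sinh 2w$ terms cancel and the remainder collapses to $4z^{1/2}\big(e^{2w}-\sinh 2w-\cosh 2w\big)=0$. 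Hence $F$ is constant on every solution of \eqref{eq:Toda:old}, so \eqref{zeta3} holds precisely when this constant vanishes.

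It remains to account for the normalisation $z^K$ and to close both implications. Under $\tau\mapsto z^K\tau$, $\tau_1\mapsto z^K\tau_1$ the ratio $\tau_1/\tau$, hence $w$ and $\zeta'$, are unchanged while $\zeta\mapsto\zeta+K$, so $F\mapsto F-4K$; moreover repeating the computation with $\tau\leftrightarrow\tau_1$ (that is $w\mapsto-w$) gives the companion integral $F_1=\big(2w'+\tfrac12\big)^2-4\zeta_1-8z^{1/2}\cosh 2w$, and $F_1-F=4w'-4(\zeta_1-\zeta)=0$, so a single $K$ normalises $\zeta$ and $\zeta_1$ simultaneously. Consequently a solution of \eqref{eq:Toda:old} has $F\equiv\mathrm{const}$, and choosing $K$ to cancel it makes $\zeta,\zeta_1$ satisfy \eqref{zeta3}; conversely, given solutions of \eqref{zeta3} I would set $e^{2w}=-z^{-1/2}\zeta'$ and $\tau_1=\tau e^{w}$, the identity $(\log\tau_1)'=\zeta_1$ reducing to $w'=\zeta_1-\zeta$, which follows by differentiating \eqref{zeta3} and comparing with $\dot\zeta\dot\zeta_1=z^{-1}$. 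The nondegeneracy $\ddot\zeta\neq0$ excludes the locus $w\equiv\mathrm{const}$ where the correspondence breaks down. I expect the one real obstacle to be the $F'=0$ computation---the integrability that promotes the second order equation \eqref{zeta3} to a first integral of the third order Toda flow---together with the careful bookkeeping of the single integration constant that the two formulations encode differently.
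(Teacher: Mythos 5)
Your proof is correct, and it is essentially the argument of \cite{BS16b} that the paper relies on: your conserved quantity $F$ is precisely the first integral $f/\dot\zeta^2$ appearing later in the paper (where $z^2f=(\zeta''-\zeta')^2-4\zeta'^2(\zeta-\zeta')+4z\zeta'$), whose constant value $4K$ is absorbed by the rescaling $\tau\mapsto z^K\tau$. The sinh-Gordon variables $w=\log(\tau_1/\tau)$ merely repackage the same computation, and your verification of $F'=0$ and of the converse via the third-order consequence of \eqref{zeta3} matches the paper's route.
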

	This freedom and constant $K$ will appear also in the Subsection \ref{ssec:KZdif}.
	
	It is convenient to write Toda-like equations as
	\begin{equation}
	D^2_{[\log z]}(\tau_j,\tau_j)=-2z^{1/2}\tau_{j+1}\tau_{j-1},\quad j\in \mathbb{Z}/2\mathbb{Z}. \label{eq:Toda}
	\end{equation}
	
	\paragraph{Painlev\'e $A_7^{(1)'}$ equation.}
	This is a second order $q$-difference equation on $G(z)$
	\begin{equation}
	\overline{G}\underline{G}=\frac{(G-z)^2}{(G-1)^2}, \label{eq:qG}
	\end{equation}
	where $G=G(z)$ and we denote $\overline{f(z)}=f(qz), \underline{f(z)}=f(q^{-1}z)$ for the arbitrary function $f(z)$. 
	Note that this equation is not a direct $q$-analog of $\zeta$-equation \eqref{zeta3} but a $q$-analog of the "standard" coordinate form
	of the Painlev\'e III($D_8^{(1)}$) equation.
	Equation \eqref{eq:qG} also has $\mathbb{Z}_2$ B\"acklund symmetry $\pi$:$z\mapsto z, G(z)\mapsto z/G(z)$.
	
	Below we will use only Toda-like tau form of equation \eqref{eq:qG} \cite{BS16q}
	\begin{equation}
	\overline{\tau_j}\underline{\tau_j}=\tau_j^2-z^{1/2}\tau_{j+1}\tau_{j-1}, \quad j\in \mathbb{Z}/2\mathbb{Z}, \label{eq:qToda}
	\end{equation}
	Connection between $G(z)$ and $\tau_j$ are given by
	\begin{prop}\cite{BS16q}
		If the functions $\tau_j(z)$, $j\in \mathbb{Z}/2\mathbb{Z}$ satisfy \eqref{eq:qToda}, then $G(z)=z^{1/2}\frac{\tau_0^2}{\tau_1^2}$
		satisfies \eqref{eq:qG}. 
	\end{prop}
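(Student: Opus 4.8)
The plan is a direct substitution followed by an elementary algebraic check: since $G(z)$ is a rational expression in $\tau_0,\tau_1$, everything on both sides of \eqref{eq:qG} can be reduced to rational functions of $\tau_0,\tau_1$ once the Toda-like relations \eqref{eq:qToda} are used to eliminate the shifted tau functions $\overline{\tau_j},\underline{\tau_j}$. First I would compute the left-hand side of \eqref{eq:qG} for the proposed $G=z^{1/2}\tau_0^2/\tau_1^2$. Applying the shift operators and using $\overline{z^{1/2}}\,\underline{z^{1/2}}=(qz)^{1/2}(q^{-1}z)^{1/2}=z$, one obtains
\[
\overline{G}\,\underline{G}=z\,\frac{(\overline{\tau_0}\,\underline{\tau_0})^2}{(\overline{\tau_1}\,\underline{\tau_1})^2}.
\]
The key step is now to insert \eqref{eq:qToda}. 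Here the indices are read in $\mathbb{Z}/2\mathbb{Z}$, so that $\tau_{-1}=\tau_1$ and $\tau_2=\tau_0$, and the two relations specialize to the symmetric pair $\overline{\tau_0}\,\underline{\tau_0}=\tau_0^2-z^{1/2}\tau_1^2$ and $\overline{\tau_1}\,\underline{\tau_1}=\tau_1^2-z^{1/2}\tau_0^2$. Substituting yields
\[
\overline{G}\,\underline{G}=z\,\frac{\bigl(\tau_0^2-z^{1/2}\tau_1^2\bigr)^2}{\bigl(\tau_1^2-z^{1/2}\tau_0^2\bigr)^2}.
\]

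Next I would compute the right-hand side of \eqref{eq:qG} directly, with no reference to any shift. From $G=z^{1/2}\tau_0^2/\tau_1^2$ one has $G-z=z^{1/2}\bigl(\tau_0^2-z^{1/2}\tau_1^2\bigr)/\tau_1^2$ and $G-1=-\bigl(\tau_1^2-z^{1/2}\tau_0^2\bigr)/\tau_1^2$. Squaring and forming the quotient, the common factors of $\tau_1^2$ cancel and one recovers precisely $z\,\bigl(\tau_0^2-z^{1/2}\tau_1^2\bigr)^2/\bigl(\tau_1^2-z^{1/2}\tau_0^2\bigr)^2$. Comparing with the expression found above for $\overline{G}\,\underline{G}$, the two sides coincide, which is exactly \eqref{eq:qG}.

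I expect no genuine obstacle: the argument is purely algebraic and requires no analytic input. The only points demanding care are (i) reducing the indices $j\pm1$ modulo $2$ correctly in \eqref{eq:qToda}, so that both Toda-like equations collapse to the symmetric pair in $\tau_0$ and $\tau_1$ used above, and (ii) tracking the branch of $z^{1/2}$ under the $q$-shifts, which is harmless because only the product $\overline{z^{1/2}}\,\underline{z^{1/2}}=z$ enters the final comparison.
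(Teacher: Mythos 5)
Your proposal is correct and follows essentially the same route as the paper: substitute $G=z^{1/2}\tau_0^2/\tau_1^2$ into \eqref{eq:qG}, reduce both sides to the identity $\bigl(\overline{\tau_0}\underline{\tau_0}/\overline{\tau_1}\underline{\tau_1}\bigr)^2=\bigl((\tau_0^2-z^{1/2}\tau_1^2)/(\tau_1^2-z^{1/2}\tau_0^2)\bigr)^2$, and observe that it holds by \eqref{eq:qToda}. You have merely written out the intermediate algebra that the paper leaves implicit.
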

	For the proof, substitute $G(z)$ to \eqref{eq:qG} and get
	\begin{equation}
	\left(\frac{\overline{\tau_0}\underline{\tau_0}}{\overline{\tau_1}\underline{\tau_1}}\right)^2=\left(\frac{\tau_0^2-z^{1/2}\tau_1^2}{\tau_1^2-z^{1/2}\tau_0^2}\right)^2.
	\end{equation}
	\begin{Remark}
		Naively, the substitution in the proof of this Proposition could be used for the proof of inverse statement:
		if we take a solution $G(z)$ of \eqref{eq:qG} then there exist $\tau_0$ and $\tau_1$ satisfying \eqref{eq:qToda} such that
		$G(z)=z^{1/2}\frac{\tau_0^2}{\tau_1^2}$. Indeed, it follows from the previous equation that there
		exists some function $f(z)$ such that
		\begin{equation}
		\overline{\tau_j}\underline{\tau_j}=f(z)(\tau_j^2-z^{1/2}\tau_{j+1}\tau_{j-1}), \quad j\in \mathbb{Z}/2\mathbb{Z},
		\end{equation}
		where we omit the sign which could appear when the square root was applied.
		We have the freedom to multiply $\tau_j$ by the same function $\lmb$ in the $G(z)=-z^{1/2}\frac{\tau_0^2}{\tau_1^2}$.
		Taking this function $\lmb$ satisfying $\frac{\overline{\lmb}\underline{\lmb}}{\lmb^2}=f(z)$
		we can remove $f(z)$. These are of course rough arguments neglecting the questions of the analytic properties of the tau functions.
	\end{Remark}

	Continuous limit from \eqref{eq:qToda} to \eqref{eq:Toda} is given by $z^{1/2}\mapsto R^2 z^{1/2}$, $q=e^{-R}$ and $R\rightarrow 0$.
	
	Note that we will use in this paper the same notations for tau functions, $z$-variable and some other objects related both to continuous and
	$q$-deformed case. 
	We hope this would not lead to confusion.
	
	\paragraph{Algebraic solutions of Painlev\'e III($D_8^{(1)}$) and $A_7^{(1)'}$ equations.}
	Painlev\'e III($D_8^{(1)}$) equation has only two algebraic solutions \cite{Gr84}. These are the only solutions invariant under the B\"acklund
	transformation $\pi$. The corresponding tau functions are
	\begin{equation}\label{eq:algebraic}
	\tau(z)=\tau_1(z)=z^{1/16}e^{\mp 4\sqrt{z}}. 
	\end{equation}
	These algebraic solutions can be $q$-deformed to the algebraic solutions of Painlev\'e $A_7^{(1)'}$ equation \cite{BS16q}.
	These solutions are also invariant under B\"acklund transformation $\pi$. I.e. they satisfy $G(z)=z/G(z)$, hence $G(z)=\pm\sqrt{z}$, i.e. the term "algebraic" corresponds to $G(z)$. 
	The corresponding tau functions are given by
	\begin{equation}\label{eq:q-algebraic}
	\tau=\tau_1=z^{1/16}(\pm q^{1/2}z^{1/2};q^{1/2},q^{1/2})_{\infty},
	\end{equation}
	where the double $q$-Pochhammer symbol $(\cdot;\cdot,\cdot)_{\infty}$ is defined in the formula \eqref{Pochhammer_def}.

	\subsection{Power series representation for the tau function}
	\label{ssec:setuptau}
	
	Power series for the tau function will be Fourier series consisting of  pure gauge SUSY $SU(2)$ partition functions --- 4d and 5d for the cases of
	Painlev\'e $D_8^{(1)}$ and $A_7^{(1)'}$ equations correspondingly. These partition functions $\mathcal{Z}$ split into three factors (we follow conventions of \cite{NY03L}, \cite{NY05})
	\begin{equation}
	\mathcal{Z}=\mathcal{Z}_{cl}\mathcal{Z}_{1-loop}\mathcal{Z}_{inst}.\label{Zstr}
	\end{equation}
	In loc. cit. $\mathcal{Z}_{cl}$ and $\mathcal{Z}_{1-loop}$ appear from the so-called "perturbative" part. More details about 
	$\mathcal{Z}_{cl}$ and $\mathcal{Z}_{1-loop}$ and connection with conventions from loc. cit. one can find in the Appendix~ \ref{sec:perturb}.
	
	\paragraph{Four-dimensional Nekrasov functions.}
	The 4d partition function depend on parameters of the $\Omega$-background $\epsilon_1, \epsilon_2$ and vacuum expectation values $a_1, a_2$
	with condition $a_1+a_2=0$ (we denote $a=a_1-a_2$). Then different factors of the 4d partition function are given by the formulas
	\begin{equation}
	\begin{aligned}
	\mathcal{Z}_{cl}(a;\epsilon_1,\epsilon_2|\Lambda)&=\Lambda^{-\frac{a^2}{\epsilon_1\epsilon_2}},\\
	\mathcal{Z}_{1-loop}(a;\epsilon_1,\epsilon_2)&=\exp(-\gamma_{\epsilon_1,\epsilon_2}(a;1)-\gamma_{\epsilon_1,\epsilon_2}(-a;1)), \\
	\mathcal{Z}_{inst}(a;\epsilon_1,\epsilon_2|\Lambda)&=\sum_{\lmb^{(1)},\lmb^{(2)}}\frac{(\Lambda^4)^{|\lmb^{(1)}|+|\lmb^{(2)}|}}
	{\prod_{i,j=1}^2\mathsf{N}_{\lmb^{(i)},\lmb^{(j)}}(a_i-a_j;\epsilon_1,\epsilon_2)}, \quad |\lmb|=\sum\lmb_j, \\
	\mathsf{N}_{\lmb,\mu}(a;\epsilon_1,\epsilon_2)&=\prod_{s\in\mathbb{\lmb}}(a-\epsilon_2(a_{\mu}(s)+1)+\epsilon_1l_{\lmb}(s))\prod_{s\in\mathbb{\mu}}(a+\epsilon_2a_{\lmb}(s)-\epsilon_1(l_{\mu}(s)+1)),
	\label{4d_def}
	\end{aligned}
	\end{equation}
	where $\lmb^{(1)}, \lmb^{(2)}$ are partitions, $a_{\lmb}(s), l_{\lmb}(s)$ denote the lengths of arms and legs for the box $s$ in the
	Young diagram corresponding to the partition $\lmb$. 
	The function  $\gamma_{\epsilon_1,\epsilon_2}(x;\Lambda)$ is defined in the Appendix \ref{sec:q} by the formula \eqref{gammaNYdef}.
	The function $\mathcal{Z}_{inst}(a;\epsilon_1,\epsilon_2|\Lambda)$ satisfies elementary symmetry properties\footnote{The symmetry $\epsilon_1\leftrightarrow\epsilon_2$
		however is not obvious from the definition \eqref{4d_def} but it follows from the general construction of the instanton partition function.}
	\begin{equation}
	\mathcal{Z}_{inst}(a;\epsilon_1,\epsilon_2|\Lambda)=\mathcal{Z}_{inst}(a;\epsilon_2,\epsilon_1|\Lambda)=\mathcal{Z}_{inst}(-a;\epsilon_1,\epsilon_2|\Lambda)=\mathcal{Z}_{inst}(a;-\epsilon_1,-\epsilon_2|\Lambda). \label{Zsymm} 
	\end{equation}
	The symmetries $\epsilon_1\leftrightarrow \epsilon_2$, $a\mapsto -a$ are also satisfied by the full Nekrasov function $\mathcal{Z}$. However, for the
	symmetry $\epsilon_1, \epsilon_2 \mapsto -\epsilon_1,-\epsilon_2$ we have invariance of the $\mathcal{Z}_{cl}$ but 
	\begin{equation}
	\mathcal{Z}_{1-loop}(a;-\epsilon_1,-\epsilon_2)=-\frac{\sin(\pi a/\epsilon_2)}{\sin(\pi a/\epsilon_1)}\mathcal{Z}_{1-loop}(a;\epsilon_1,\epsilon_2),\quad \operatorname{Re}\epsilon_1<0, \operatorname{Re}\epsilon_2>0, \label{loopassym}
	\end{equation} 
	where we used \eqref{gammatrans}, \eqref{gammashift}, \eqref{gamma1trans}, \eqref{gamma1scal}, \eqref{gamma1} successively. So symmetry is broken unless $\epsilon_1 + \epsilon_2 = 0$.
	
	\paragraph{Five-dimensional Nekrasov functions.}
	In the 5d case, this partition function also depends on the radius $R$ of the 5th compact dimension.
	It is convenient to write parameters as $u_i=e^{Ra_i}$, $q_i=e^{R\epsilon_i}, i=1,2$ with condition $u_1u_2=1$  (we denote $u=u_1/u_2$).
	Then different factors in the definition of the 5d partition function are given by
	\begin{equation}
	\begin{aligned}
	\mathcal{Z}_{cl}(u;q_1,q_2|\Lambda)&=(q_1^{-1}q_2^{-1}\Lambda^4)^{-\frac{\log^2 u_1+\log^2 u_2}{2\log q_1\log q_2}},\\
	\mathcal{Z}_{1-loop}(u;q_1,q_2)&=(u_1/u_2;q_1,q_2)_{\infty}(u_2/u_1;q_1,q_2)_{\infty}, \\
	\mathcal{Z}_{inst}(u;q_1,q_2|\Lambda)&=\sum_{\lmb^{(1)},\lmb^{(2)}}\frac{(q_1^{-1}q_2^{-1}\Lambda^4)^{|\lmb^{(1)}|+|\lmb^{(2)}|}}
	{\prod_{i,j=1}^2\mathsf{N}_{\lmb^{(i)},\lmb^{(j)}}(u_i/u_j;q_1,q_2)},\\
	\mathsf{N}_{\lmb,\mu}(u;q_1,q_2)&=\prod_{s\in\mathbb{\lmb}}\left(1-uq_2^{-a_{\mu}(s)-1} q_1^{l_{\lmb}(s)}\right)\prod_{s\in\mathbb{\mu}}\left(1-uq_2^{a_{\lmb}(s)} q_1^{-l_{\mu}(s)-1}\right).
	\label{5d_def}
	\end{aligned}
	\end{equation}
	After rescaling $\Lambda^2\mapsto R^2 \Lambda^2$ in the limit $R\rightarrow 0$ 5d Nekrasov function goes to its 4d analog.
	The function $\mathcal{Z}_{inst}(a;q_1,q_2|\Lambda)$ satisfies elementary symmetry properties:
	\begin{equation}
	\mathcal{Z}_{inst}(u;q_1,q_2|\Lambda)=\mathcal{Z}_{inst}(u;q_2,q_1|\Lambda)=\mathcal{Z}_{inst}(u^{-1};q_1,q_2|\Lambda)=\mathcal{Z}_{inst}(u;q_1^{-1},q_2^{-1}|\Lambda). \label{qZsymm} 
	\end{equation}
	The symmetries $q_1\leftrightarrow q_2$, $u\mapsto u^{-1}$ are also satisfied by the full Nekrasov function $\mathcal{Z}$. However, for the
	symmetry $q_1, q_2\mapsto q_1^{-1}, q_2^{-1}$ we have
	\begin{equation}
	\mathcal{Z}_{1-loop}(u;q_1^{-1},q_2^{-1})=-u\theta^{-1}(u;q_1)\theta^{-1}(u;q_2) \mathcal{Z}_{1-loop}(u;q_1,q_2) \label{qloopassym}
	\end{equation} 
	(where we used \eqref{qtrans}, \eqref{qshift}, \eqref{eq:theta} successively) and
	\begin{equation}
	\mathcal{Z}_{cl}(u;q_1^{-1},q_2^{-1}|\Lambda)=(q_1q_2)^{-\frac{\log^2 u}{2\log q_1\log q_2}}\mathcal{Z}_{cl}(u;q_1,q_2|\Lambda), \label{qclassym}
	\end{equation}
	so symmetry is broken for all cases except $q_1q_2=1$.
	
	In this paper both for 5d and 4d case we will consider only region $\operatorname{Re}\epsilon_1\lessgtr0, \operatorname{Re}\epsilon_2\gtrless 0$.
	Both for 5d and 4d cases, if  ${\epsilon_1}/{\epsilon_2}\in\mathbb{Q}$, then one can analogously
	to the \cite[Prop. 3.1.]{BS16q} prove that power series in $z$ for $\mathcal{Z}_{inst}$ converges uniformly and absolutely
	on every bounded subset of~$\mathbb{C}$.
	
	\paragraph{CFT notations.}
	These partition functions via AGT relation correspond to the Whittaker limits of the Virasoro (for 4d) \cite{G09} or $q$-Virasoro (for 5d) \cite{AY09}, \cite{Y14} conformal blocks 
	with central charge $c$ and conformal weight $\D$ given by the formulas
	\begin{equation}
	c=1+6\frac{(\epsilon_1+\epsilon_2)^2}{\epsilon_1\epsilon_2}, \quad \D=\frac{(\epsilon_1+\epsilon_2)^2-a^2}{4\epsilon_1\epsilon_2}. \label{cD}
	\end{equation} 
	Note that 4d partition function $\mathcal{Z}$ depend only on ratios of $a_1, a_2, \epsilon_1, \epsilon_2, \Lambda$ so it will be convenient to move to
	these CFT notations without such scaling. Additionaly to the formulas \eqref{cD} we also denote
	\begin{equation}
	\sg=-\frac{a}{2\epsilon_1}, \quad z=\frac{\Lambda^4}{\epsilon_1^2\epsilon_2^2}. \label{CFT_not_aux}
	\end{equation}
	So for the 4d partition function we change notations $\mathcal{Z}(a;\epsilon_1,\epsilon_2|\Lambda)\rightarrow \mathcal{Z}_{c=\ldots}(\sg|z)$.
	In the 5d case, we denote $z=\Lambda^4$ and occasionally use the notation $\sg$ in this case.
	
	\paragraph{Tau function as a series of the partition functions.}
	We are now ready to formulate the statement already announced at the beginning of the subsection.
	Introduce a tau function $\tau(\sg,s|z)$ corresponding to some partition function $\mathcal{Z}(\sg|z)$ as a series 
	\begin{equation}
	\tau(\sg,s|z)=\sum_{n\in\mathbb{Z}}s^n \mathcal{Z}(\sg+n|z).\label{Kiev}
	\end{equation}
	This function satisfies elementary properties
	\begin{equation}
	\tau(\sg+1,s|z)=s^{-1}\tau(\sg,s|z), \quad \tau(-\sg,s^{-1}|z)=\tau(\sg,s|z), \label{Kievsymm}
	\end{equation}
	where the second property is due to the relation $\mathcal{Z}(-\sg|z)=\mathcal{Z}(\sg|z)$, which holds for all partition functions we consider.
	
	The following result was proposed in \cite{GIL12}, \cite{GIL13} and it was proved in \cite{BS14}, \cite{ILT14}, \cite{GL16} in different ways.
	\begin{thm}\label{Kievthm}
		Tau function $\tau(\sg,s|z)$ of Painlev\'e III($D_8^{(1)}$) equation is given by the formula \eqref{Kiev} with $\mathcal{Z}(\sg|z)=\mathcal{Z}_{c=1}(\sg|z)$. 
	\end{thm}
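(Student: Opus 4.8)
The plan is to verify that the Fourier series \eqref{Kiev} built from $\mathcal{Z}_{c=1}$ satisfies the Toda-like bilinear system \eqref{eq:Toda} and then to invoke Proposition \ref{prop:Toda}. By that proposition the $\zeta$-form \eqref{zeta3} is equivalent, up to the $z^K$ rescaling freedom, to the pair $D^2_{[\log z]}(\tau_j,\tau_j)=-2z^{1/2}\tau_{j+1}\tau_{j-1}$ on $\tau_0=\tau$ and its B\"acklund transform $\tau_1$. Among the three strategies recalled in the introduction I would follow the bilinear one, since it is internal to the combinatorial framework of the paper and avoids the analytic control of degenerate-field monodromies of \cite{ILT14}. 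I would first record that the symmetries \eqref{Kievsymm} already reproduce the quasi-periodicity and the $s\mapsto s^{-1}$ invariance expected of a Painlev\'e tau function, so that only the differential content \eqref{eq:Toda} remains.

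The engine is the Nakajima-Yoshioka blowup relations. For the pure $SU(2)$ theory the $\mathbb{C}^2/\mathbb{Z}_2$ relation \eqref{eq:Z2=ZZ} is an identity valid for generic $\epsilon_1,\epsilon_2$, in which $\mathrm{D}$ is an explicit finite-order operator and the internal sum runs over $n\in\mathbb{Z}+j/2$. I would specialize $\epsilon_2=-\epsilon_1=-\epsilon$. Then both bilinear factors $\mathcal{Z}(a\pm 2n\epsilon;2\epsilon,-2\epsilon|z)$ and the left-hand side $\tilde{\mathcal{Z}}(a;\epsilon,-\epsilon|z)$ have vanishing $\epsilon_1'+\epsilon_2'$, hence are $c=1$ objects by \eqref{cD}; this is exactly the locus where the one-loop asymmetry \eqref{loopassym} becomes trivial. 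Under the specialization the operator $\mathrm{D}$ collapses to a second-order Hirota operator in $\log z$ and the instanton/coupling combination produces the power $z^{1/2}$ appearing on the right of \eqref{eq:Toda}.

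Next I would promote these identities to relations between tau functions. Applying the specialized \eqref{eq:Z2=ZZ} with $a$ shifted by $2m\epsilon$ along a global Fourier index $m$, multiplying by $s^m$ and summing, the left-hand side collects into the tau function $\tilde\tau$ built from $\tilde{\mathcal{Z}}$, while the internal blowup sum combines with the global sum so that the right-hand side is expressed through $\tau_0=\tau$ and, in the half-integer sector, through $\tau_1$ (the shift $\sg\mapsto\sg+m$ of \eqref{Kiev} with $\sg=-a/2\epsilon_1$ being precisely the shift in the blowup sum). The operator $\mathrm{D}$, now acting only on the common $z$-dependence, becomes $D^2_{[\log z]}$. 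Writing the resulting identities for $j=0,1$ and eliminating the auxiliary $\tilde\tau$ between them closes the system to \eqref{eq:Toda}; Proposition \ref{prop:Toda} then yields a solution of \eqref{zeta3}. Alternatively one may use the ordinary $\mathbb{C}^2$ relation \eqref{eq:Z=ZZ}, whose resummation gives \eqref{eq:tau=tau+tau-} in terms of the $c=-2$ functions $\uptau^\pm$, and eliminate those instead.

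The main obstacle will be the bookkeeping of the prefactors $\mathcal{Z}_{cl}$ and $\mathcal{Z}_{1-loop}$ through the specialization and the Coulomb shifts. At $\epsilon_1+\epsilon_2=0$ the factor \eqref{loopassym} is controlled, but the half-integer shifts of the $j=1$ sector bring in ratios of the $\gamma_{\epsilon_1,\epsilon_2}$ functions of \eqref{4d_def} that must conspire to the clean coefficient $-2z^{1/2}$ and, in particular, to render the two equations of \eqref{eq:Toda} identical after the normalization fixed in \eqref{eq:Toda:old}. One must also justify the termwise resummation, which is legitimate since the series in $z$ converges absolutely ($\epsilon_1/\epsilon_2=-1\in\mathbb{Q}$), and check that the residual $z^K$ freedom is exactly the one allowed by Proposition \ref{prop:Toda}. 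I expect the matching of $D^2_{[\log z]}$ with the specialized $\mathrm{D}$, together with this prefactor accounting, to be the only genuinely delicate point; the remainder is a formal resummation.
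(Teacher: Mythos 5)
Your plan would prove the theorem, but the engine you actually develop is not the one the paper uses. Your primary route goes through the $\mathbb{C}^2/\mathbb{Z}_2$ blowup relation \eqref{eq:Z2=ZZ}: specialize $\epsilon_2=-\epsilon_1$ so that both bilinear factors become $c=1$ objects, resum against $s^m$, and eliminate the auxiliary $\tilde\tau$ between the $j=0$ and $j=1$ sectors. That is essentially the existing proof of \cite{BS14}, which the paper cites precisely as one of the prior proofs; it is correct, but it takes \eqref{eq:Z2=ZZ} (established in \cite{BPSS13}, or by CFT arguments in \cite{BS14}) as a nontrivial external input, and you should say so explicitly if you follow it. The paper's own proof is the alternative you dispatch in a single sentence: it uses only the ordinary $\mathbb{C}^2$ relations \eqref{NY}, \eqref{NYdiffIS}, \eqref{NYdiffHIS1} of Nakajima--Yoshioka, whose Fourier resummation produces the $c=-2$ tau functions $\uptau^{\pm}$ together with the three relations $\tau=\uptau^+\uptau^-$, $D^2_{[\log z]}(\uptau^+,\uptau^-)=0$ and $D^1_{[\log z]}(\uptau^+,\uptau^-)\propto z^{1/4}\tau_1$. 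The elimination of $\uptau^{\pm}$ is then not the delicate bookkeeping you anticipate but a one-line consequence of the Hirota identity $D^2_{[x]}(fg,fg)=2fg\,D^2_{[x]}(f,g)-2(D^1_{[x]}(f,g))^2$, which immediately gives $D^2_{[\log z]}(\tau,\tau)=-2z^{1/2}\tau_1^2$ (Proposition \ref{doubleprop}); the passage to \eqref{zeta3} via Proposition \ref{prop:Toda} and the fixing of the residual $z^K$ freedom by asymptotics are exactly as you describe. What your route buys is that it stays entirely within $c=1$ objects and needs no $c=-2$ functions; what the paper's route buys is logical independence from the $\mathbb{C}^2/\mathbb{Z}_2$ relations --- indeed one of the paper's stated points is that those relations (at $\epsilon_1+\epsilon_2=0$) are a \emph{consequence} of the ordinary blowup equations, so assuming them would make the new proof circular in spirit.
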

	Partition function $\mathcal{Z}_{c=1}$ corresponds to the case $\epsilon_1+\epsilon_2=0$ via the first equation of \eqref{cD}. 
	For the Painlev\'e III($D_8^{(1)}$) equation $\sg,s$ play role of the integration constants.
	Note that 1-loop term in this case is usually written in terms of Barnes $\mathsf{G}$-function as 
	\begin{equation}
	\mathcal{Z}_{1-loop}(\sg)=\frac{1}{\mathsf{G}(1-2\sg)\mathsf{G}(1+2\sg)}.
	\end{equation}
	See the last paragraph (formula \eqref{loopG}) of Appendix \ref{sec:q} for deriving this formula from \eqref{4d_def}.
	
	Note that the power series representation for $\tau_1$ is given by the formula slightly different from \eqref{Kiev}:
	\begin{equation}
	\tau_1(\sg,s|z)=\sum_{n\in\mathbb{Z}+1/2}s^n \mathcal{Z}_{c=1}(\sg+n|z)=s^{1/2}\tau(\sg+1/2,s|z),
	\end{equation}
	the only difference is in the region 
	of summation. Therefore equations \eqref{eq:Toda} could be rewritten as a single equation on $\tau(\sg,s|z)$
	\begin{equation}
	D^2_{[\log z]}(\tau(\sg,s|z),\tau(\sg,s|z))=-2z^{1/2}\tau(\sg+1/2,s|z)\tau(\sg-1/2,s|z). \label{eq:Todasg} 
	\end{equation}

	In the work \cite{BS16q}, a $q$-deformation of the Theorem \ref{Kievthm} was conjectured.
	We will present the proof of this result in Subsection
	\ref{ssec:qtaum2}
	(see Proposition \ref{qdoubleprop}). For another proof, see \cite{MN18}.
	\begin{thm}\label{qKievthm}
		Consider the tau function given by the formula \eqref{Kiev}, with $\mathcal{Z}(\sg|z)$ 
		given by the 5d pure gauge SUSY $SU(2)$ partition function \eqref{5d_def} with $\epsilon_1+\epsilon_2=0$ and $e^{R\epsilon_2}=q$. Then this tau function is a tau function of Painlev\'e ($A_7^{(1)'}$) equation.\label{qtauconj}
	\end{thm}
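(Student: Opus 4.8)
The plan is to deduce the $q$-Toda bilinear system \eqref{eq:qToda} for the pair $\tau=\tau_0$ and its B\"acklund transform $\tau_1$, after which the Proposition of Subsection~\ref{ssec:setupeq} (producing $G(z)=z^{1/2}\tau_0^2/\tau_1^2$ solving \eqref{eq:qG}) finishes the argument. The mechanism is the auxiliary $c=-2$ tau functions $\uptau^{\pm}$ of \eqref{eq:tau:pm}: with $\epsilon_1=\mp\epsilon$, $\epsilon_2=\pm2\epsilon$ we have $\epsilon_1+\epsilon_2=\pm\epsilon\neq0$ and $\epsilon_1\epsilon_2=-2\epsilon^2$, so by \eqref{cD} these series are assembled from central charge $c=-2$ Nekrasov functions, while $\tau$ and $\tau_1$ of \eqref{Kiev} come from the $c=1$ locus $\epsilon_1+\epsilon_2=0$. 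First I would prove the three bilinear identities \eqref{eq:tau+-q:eq} tying $\uptau^{\pm}$ to $\tau$ and $\tau_1$, and then eliminate $\uptau^{\pm}$.

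To obtain \eqref{eq:tau+-q:eq} I would take the 5d $q$-deformed Nakajima--Yoshioka blowup relations, that is the difference version of \eqref{eq:Z=ZZ} in which the operator $\mathrm{D}$ acts through the shifts $f\mapsto\overline{f}=f(qz)$ and $f\mapsto\underline{f}=f(q^{-1}z)$, and specialize to $\epsilon_1+\epsilon_2=0$, i.e. $q_1q_2=1$ with $e^{R\epsilon_2}=q$. Under this specialization the two partition functions on the right-hand side of \eqref{eq:Z=ZZ} acquire exactly the $\Omega$-backgrounds $(\mp\epsilon,\pm2\epsilon)$ of \eqref{eq:tau:pm}, up to the reflection $a\mapsto-a$ under which $\mathcal{Z}_{inst}$ is invariant by \eqref{qZsymm}. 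I would then apply the blowup relation to a shifted argument, weight it by the appropriate power of $s$, and sum over the lattice; as explained around \eqref{eq:tau=tau+tau-}, the resulting double sum (over the Fourier index and the blowup index) reorganizes, after a change of summation variables, into $\mathrm{D}$ acting on the two independent single series $\uptau^{+}$ and $\uptau^{-}$, so that the left-hand side collapses to $\tau$ for an integer-indexed relation and to $z^{1/4}\tau_1$ for a half-integer-indexed one (the shifted summation range being exactly what turns \eqref{Kiev} into the series for $\tau_1$), while the right-hand side becomes one of $\uptau^+\uptau^-$, $\overline{\uptau^+}\,\underline{\uptau^-}$, $\underline{\uptau^+}\,\overline{\uptau^-}$. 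Three such blowup equations then give the three identities of \eqref{eq:tau+-q:eq}.

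The elimination is purely algebraic. Adding and subtracting the last two identities of \eqref{eq:tau+-q:eq} gives
\[
\overline{\uptau^+}\,\underline{\uptau^-}=\tau-z^{1/4}\tau_1,\qquad \underline{\uptau^+}\,\overline{\uptau^-}=\tau+z^{1/4}\tau_1,
\]
while from $\tau=\uptau^+\uptau^-$ and multiplicativity of the shifts one has $\overline{\tau}=\overline{\uptau^+}\,\overline{\uptau^-}$ and $\underline{\tau}=\underline{\uptau^+}\,\underline{\uptau^-}$. Regrouping the four factors,
\[
\overline{\tau}\,\underline{\tau}=\bigl(\overline{\uptau^+}\,\underline{\uptau^-}\bigr)\bigl(\underline{\uptau^+}\,\overline{\uptau^-}\bigr)=\bigl(\tau-z^{1/4}\tau_1\bigr)\bigl(\tau+z^{1/4}\tau_1\bigr)=\tau^2-z^{1/2}\tau_1^2,
\]
which is the $j=0$ case of \eqref{eq:qToda} (recall $\tau_{-1}=\tau_1$ in $\mathbb{Z}/2\mathbb{Z}$). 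The $j=1$ equation follows identically from the B\"acklund-transformed functions $\uptau_1^{\pm}$ (whose relations swap the roles of $\tau_0$ and $\tau_1$), or equivalently by applying the symmetry $\pi$ of Subsection~\ref{ssec:setupeq}. With \eqref{eq:qToda} in hand, the quoted Proposition yields a solution of \eqref{eq:qG}, as required.

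I expect the main obstacle to be the specialization $\epsilon_1+\epsilon_2\to0$ inside the blowup relations together with the bookkeeping of the prefactors $\beta_D$ and the normalization factors of $\mathcal{Z}$. The classical factor $\mathcal{Z}_{cl}$ must be tracked to produce the fractional powers of $z$ (the $z^{1/4}$ and $z^{1/2}$ above), and the 1-loop factor transforms anomalously under $q_1,q_2\mapsto q_1^{-1},q_2^{-1}$, generating the $\theta$-function prefactors of \eqref{qloopassym} and the extra power in \eqref{qclassym}; one has to check that these combine to leave precisely the coefficients $\pm2$ and $-2z^{1/4}$ of \eqref{eq:tau+-q:eq}. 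Interchanging the blowup sum with the Fourier sum also needs the absolute and uniform convergence of the $z$-series for $\mathcal{Z}_{inst}$, which holds in the region considered by the analogue of \cite[Prop.~3.1]{BS16q}. Once these analytic and combinatorial points are settled the remaining manipulations are formal.
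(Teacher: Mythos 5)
Your proposal follows essentially the same route as the paper: specialize the $q$-deformed Nakajima--Yoshioka blowup relations to $q_1q_2=1$, Fourier-sum to get the three bilinear identities \eqref{qNYtaupm}, \eqref{qNYD2diff}, \eqref{qNYD1diff}, and then eliminate $\uptau^{\pm}$ by the same factorization $\overline{\tau}\,\underline{\tau}=(\overline{\uptau^+}\,\underline{\uptau^-})(\underline{\uptau^+}\,\overline{\uptau^-})=\tau^2-z^{1/2}\tau_1^2$, which is exactly the computation in Proposition \ref{qdoubleprop}. The remaining points you flag (tracking $\mathcal{Z}_{cl}$, the $1$-loop asymmetry, convergence) are the same ones the paper handles, and passing to the $\sigma$-dependent form \eqref{eq:qTodasg} takes care of the $j=1$ equation as you indicate.
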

	Equations \eqref{eq:qToda} could also be rewritten as a single equation on $\tau(u,s|z)$
	\begin{equation}
	\tau(u,s|qz)\tau(u,s|q^{-1}z)=\tau^2(u,s|z)-z^{1/2}\tau(uq,s|z)\tau(uq^{-1},s|z) \label{eq:qTodasg}.
	\end{equation}
	
	\begin{Remark}\label{CRemark}
		Note that this theorem holds not only when $\mathcal{Z}_{cl}$ is defined by \eqref{5d_def} but for
		arbitrary $\mathcal{Z}_{cl}=C(u;q|z)$ which satisfies \cite[(3.7)-(3.9)]{BS16q}.
	\end{Remark}


	\section{From blowup relations to $c={-}2$ tau functions}
	\label{sec:tau}
	\subsection{Definition of $c={-}2$ tau functions}
	\label{ssec:taum2}
	
	The functions $\mathcal{Z}(a;\epsilon_1,\epsilon_2|\Lambda)$ are known to satisfy Nakajima-Yoshioka blowup relations \cite[(6.13)]{NY03} 
	(see also \cite[(5.3)]{BFL13} for CFT interpretation). We write them in terms of the full partition functions as in \cite[(5.2)]{NY03L}
	\begin{equation}
	\mathcal{Z}(a;\epsilon_1,\epsilon_2|\Lambda)=\sum_{n\in\mathbb{Z}} \mathcal{Z}(a+2\epsilon_1 n;\epsilon_1,\epsilon_2-\epsilon_1|\Lambda)\mathcal{Z}(a+2\epsilon_2 n;\epsilon_1-\epsilon_2,\epsilon_2|\Lambda).
	\label{NY}
	\end{equation}
	Imposing condition $\epsilon_1+\epsilon_2=0$ we get in the CFT notations
	\begin{equation}
	\mathcal{Z}_{c=1}(\sg|z)=\sum_{n\in\mathbb{Z}}\mathcal{Z}^+_{c=-2}\left(\sg-n\left|\frac{z}4\right.\right) 
	\mathcal{Z}^-_{c=-2}\left(\sg+n\left|\frac{z}4 \right.\right), \label{NYs}
	\end{equation}
	where $\mathcal{Z}^{\pm}_{c=-2}(\sg|z)=
	\mathcal{Z}(\sg;\pm\epsilon_1, \mp2\epsilon_1|z)$.
	Only in the case of $c=1$ in the l.h.s. we obtain a product of the partition functions with the same ($c=-2$) central charges in r.h.s.
	Partition functions $\mathcal{Z}^{\pm}_{c=-2}$ are not equal due to an asymmetry of the $1$-loop factor \eqref{loopassym}.
	However, this asymmetry is expressed as
	\begin{equation}
	\mathcal{Z}^+_{c=-2}\left(\sg\left|\frac{z}4\right.\right) =\frac{1}{2\cos\pi\sg}\mathcal{Z}^-_{c=-2}\left(\sg\left|\frac{z}4 \right.\right),\quad \operatorname{Re}\epsilon_1>0\label{pmsym}
	\end{equation}
	and it will be useful to introduce the combination
	\begin{equation}
	\mathcal{Z}_{c=-2}\left(\sg\left|\frac{z}4\right.\right)=(2\cos\pi\sg)^{1/2}\mathcal{Z}^+_{c=-2}\left(\sg\left|\frac{z}4\right.\right).
	\end{equation}
	
	It is natural to make discrete time Fourier transform of  \eqref{NYs} to obtain in the l.h.s. a tau function of type \eqref{Kiev}.
	In the r.h.s. we obtain
	\begin{equation}
	\begin{aligned}
	&\sum_{n_1,n_2\in\mathbb{Z}}s^{n_1}\mathcal{Z}^+_{c=-2}\left(\sg+n_1-n_2\left|\frac{z}4\right.\right) \mathcal{Z}^-_{c=-2}\left(\sg+n_1+n_2\left|\frac{z}4\right.\right)=\\
	&=\sum_{n_1,n_2\in\mathbb{Z}|n_1+n_2\in 2\mathbb{Z}}+\sum_{n_1,n_2\in\mathbb{Z}|n_1+n_2\in 2\mathbb{Z}+1}=
	\left|\left|n_{\pm}=\frac12(n_1\pm n_2)\right|\right|=\\
	&=\sum_{n_+\in\mathbb{Z}}s^{n_+}\mathcal{Z}^+_{c=-2}\left(\sg+2n_+\left|\frac{z}4\right.\right)\sum_{n_-\in\mathbb{Z}}s^{n_-}\mathcal{Z}^-_{c=-2}\left(\sg+2n_-\left|\frac{z}4\right.\right)+\\
	&+\sum_{n_+\in\mathbb{Z}+1/2}s^{n_+}\mathcal{Z}^+_{c=-2}\left(\sg+2n_+\left|\frac{z}4\right.\right)\sum_{n_-\in\mathbb{Z}+1/2}s^{n_-}\mathcal{Z}^-_{c=-2}\left(\sg+2n_-\left|\frac{z}4\right.\right)=\\
	&=\sum_{n_+\in\mathbb{Z}}s^{n_+/2}\mathcal{Z}^+_{c=-2}\left(\sg+n_+\left|\frac{z}4\right.\right)\sum_{n_-\in\mathbb{Z}}s^{n_-/2}\mathcal{Z}^-_{c=-2}\left(\sg+n_-\left|\frac{z}4\right.\right),\label{Ftrcalc}
	\end{aligned}
	\end{equation}
	where the last equality follows from 
	\begin{equation}
	\mathcal{Z}^+_{c=-2}(\sg+2n_++1) \mathcal{Z}^-_{c=-2}(\sg+2n_-)+\mathcal{Z}^-_{c=-2}(\sg+2n_++1) \mathcal{Z}^+_{c=-2}(\sg+2n_-)=0, \quad n_{+},n_-\in\mathbb{Z},
	\end{equation}
	which itself follows from \eqref{pmsym}.
	
	Therefore, from the last row of \eqref{Ftrcalc} it follows that
	\begin{equation}
	\tau(\sg,s|z)=\uptau^+(\sg,s|z)\uptau^-(\sg,s|z), \label{NYtaupm}
	\end{equation}
	where we use the following notation:
	\begin{defn}
		The functions $\uptau^{\pm}(\sg,s|z)$ given by the formula
		\begin{equation}
		\uptau^{\pm}(\sg,s|z)=\sum_{n\in\mathbb{Z}}s^{n/2}\mathcal{Z}^{\pm}_{c=-2}(\sg+n|z/4),   \label{taum2pm}
		\end{equation} 
		are called short $c=-2$ tau functions.
	\end{defn}
	
	On the other hand, from the penultimate row of \eqref{Ftrcalc} and due to \eqref{pmsym} we have
	\begin{equation}
	\tau(\sg,s|z)=\uptau^0(\sg,s|z)^2+\uptau^1(\sg,s|z)^2, \label{NYtau01}
	\end{equation}
	where we use another notation:
	\begin{defn}
		The functions $\uptau^i(\sg,s|z)$ given by the formula
		\begin{equation}
		\uptau^i(\sg,s|z)=\sum_{n\in\mathbb{Z}+i/2}s^{n}\mathcal{Z}_{c=-2}(\sg+2n|z/4), \quad i\in\mathbb{Z}/2\mathbb{Z}    \label{taum201}
		\end{equation}
		are called long $c=-2$ tau functions.
	\end{defn}
	
	Short and long tau functions are connected by the relations
	\begin{equation}
	\uptau^{\pm}(\sg,s|z)=(2\cos\pi\sg)^{\mp1/2}(\uptau^0(\sg,s|z) \pm i \uptau^1(\sg,s|z)). \label{lsc}
	\end{equation}
	
	These tau functions also have symmetry properties
	\begin{eqnarray}
	&\uptau^{\pm}(\sg+1,s|z)=s^{-1/2}\uptau^{\pm}(\sg,s|z), \quad &\uptau^{\pm}(-\sg,s^{-1}|z)=\uptau^{\pm}(\sg,s|z),\\
	&\uptau^i(\sg+1,s|z)=(-1)^i s^{-1/2}\uptau^{i-1}(\sg,s|z), \quad &\uptau^i(-\sg,s^{-1}|z)=\uptau^i(\sg,s|z). \label{Kievsymm2}
	\end{eqnarray}
	Note that the first column agrees with the first property from \eqref{Kievsymm} via \eqref{NYtaupm} or \eqref{NYtau01}. 
	
	The naming for "short" and "long" $c=-2$ tau functions is inspired by the length of shifts of $\sg$ in the sums \eqref{taum2pm} and \eqref{taum201} correspondingly. 
	

	\subsection{Painlev\'e III($D_8^{(1)}$) equation from $c={-}2$ tau functions}
	\label{ssec:KZdif}
	
	There are also differential Nakajima-Yoshioka blowup relations on 4d pure gauge $\mathcal{N}=2$ $SU(2)$ Nekrasov partition functions \cite{NY03},
	\cite{NY03L}, \cite{BFL13}. Relations \cite[(5.2)]{NY03L} (\cite[(6.14)]{NY03}) have the form
	\begin{equation}
	\sum_{n\in\mathbb{Z}} \mathcal{Z}(a+2\epsilon_1 n;\epsilon_1,\epsilon_2-\epsilon_1|\Lambda e^{-\frac12\epsilon_1 \alpha})
	\mathcal{Z}(a+2\epsilon_2 n;\epsilon_1-\epsilon_2,\epsilon_2|\Lambda e^{-\frac12\epsilon_2 \alpha})=O(\alpha^4),\, \textrm{for}\, \alpha\rightarrow0. \label{NY2}
	\end{equation}
	The precise coefficient of the power $\alpha^4$ was obtained in \cite{BFL13}
	\begin{equation}
	\begin{aligned}
	\sum_{n\in\mathbb{Z}} \mathcal{Z}(a+2\epsilon_1 n;\epsilon_1,\epsilon_2-\epsilon_1|\Lambda e^{-\frac12\epsilon_1 \alpha})
	\mathcal{Z}(a+2\epsilon_2 n;\epsilon_1-\epsilon_2,\epsilon_2|\Lambda e^{-\frac12\epsilon_2 \alpha})|_{\alpha^4}=\\
	=\frac{(2\alpha)^4}{4!}\left(\left(\frac{\epsilon_1+\epsilon_2}{4}\right)^4-2\Lambda^4\right)\mathcal{Z}(a;\epsilon_1,\epsilon_2|\Lambda).
	\label{NY4}
	\end{aligned}
	\end{equation}

	There are also blowup relations in the  half-integer sector ($n$ runs over $\mathbb{Z}+1/2$)\footnote{It seems that they first appear in
		$q$-deformation version \cite[(2.5)]{NY05}. Continuous case could be obtained by the limit from \eqref{qNYD12diff} for $j=1$.
	}
	\begin{equation}
	\begin{aligned}
	\sum_{n\in\mathbb{Z}+1/2} \mathcal{Z}(a+2\epsilon_1 n;\epsilon_1,\epsilon_2-\epsilon_1|\Lambda e^{-\frac12\epsilon_1 \alpha})
	\mathcal{Z}(a+2\epsilon_2 n;\epsilon_1-\epsilon_2,\epsilon_2|\Lambda e^{-\frac12\epsilon_2 \alpha})|_{\alpha^1}
	=i\alpha\Lambda\mathcal{Z}(a;\epsilon_1,\epsilon_2|\Lambda)\label{NY1}
	\end{aligned}
	\end{equation}

	Setting $\epsilon_1+\epsilon_2=0$ and moving to the CFT notations we obtain for integer shifts case
	\begin{equation}
	\begin{aligned}
	&\sum_{n\in\mathbb{Z}} D^2_{[\log z]}(\mathcal{Z}^+_{c=-2}(\sg-n|z/4),\mathcal{Z}^-_{c=-2}(\sg+n|z/4))=0,\\
	&\sum_{n\in\mathbb{Z}} D^4_{[\log z]}(\mathcal{Z}^+_{c=-2}(\sg-n|z/4),\mathcal{Z}^-_{c=-2}(\sg+n|z/4))=-2z \mathcal{Z}_{c=1}(\sg|z), \label{NYdiffIS}
	\end{aligned}
	\end{equation}
	and for half-integer shifts
	\begin{equation}
	\sum_{n\in\mathbb{Z}+1/2} D^1_{[\log z]}(\mathcal{Z}^+_{c=-2}(\sg-n|z/4),\mathcal{Z}^-_{c=-2}(\sg+n|z/4))=\frac{i}2z^{1/4}\mathcal{Z}_{c=1}(\sg|z).\label{NYdiffHIS1}
	\end{equation}
	There exists also a half-integer shift relation with Hirota derivative of order $3$ 
	\begin{equation}
	\sum_{n\in\mathbb{Z}+1/2} D^3_{[\log z]}(\mathcal{Z}^+_{c=-2}(\sg-n|z/4),\mathcal{Z}^-_{c=-2}(\sg+n|z/4))=\frac{i}2z^{1/4}\left(z\frac{d}{dz}\right)\mathcal{Z}_{c=1}(\sg|z). \label{NYdiffHIS3}
	\end{equation}
	We have not found it explicitly in the literature but it follows from the results of \cite{NY09}.\footnote{Namely it follows from the \cite[Theorem 2.6]{NY09} that left side of this equation should be a $P(z\dfrac{d}{dz},z)\mathcal{Z}_{c=1}(\sg|z)$, where $P$ is a certain polynomial with coefficients in $\mathbb{C}[\epsilon_1,\epsilon_2]$. The order of this polynomial in $z$ is bounded by the homogeneity. Then, the dependence on $z\dfrac{d}{dz}$ can determined by the action on the first nontrivial term in $z$ expansion. We are grateful to H. Nakajima for the explanation on this point.}

	
	These relations could be also rewritten in terms of the $c=-2$ tau functions just in the same manner as relations \eqref{NYtau01}
	or \eqref{NYtaupm} were obtained.
	Namely, relations \eqref{NYdiffIS} become
	\begin{eqnarray}
	D^2_{[\log z]}(\uptau^0,\uptau^0)+D^2_{[\log z]}(\uptau^1,\uptau^1)&=&D^2_{[\log z]}(\uptau^+,\uptau^-)=0, \label{NYD2diff}\\
	D^4_{[\log z]}(\uptau^0,\uptau^0)+D^4_{[\log z]}(\uptau^1,\uptau^1)&=&D^4_{[\log z]}(\uptau^+,\uptau^-)=-2z\tau. \label{NYD4diff}
	\end{eqnarray}
	To rewrite \eqref{NYdiffHIS1}, \eqref{NYdiffHIS3} in terms of tau functions we should additionaly make substitution $\sg\mapsto\sg+1/2$ and change the index of
	summation in the l.h.s. $n\mapsto n+1/2$. Then in r.h.s. we obtain B\"acklund transformed tau function $\tau_1$
	\begin{eqnarray}
	D^1_{[\log z]}(\uptau^0,\uptau^1)&=&\frac{i}2 D^1_{[\log z]}(\uptau^+,\uptau^-)=\frac{i}2z^{1/4}\tau_1,\label{NYD1diff}\\
	D^3_{[\log z]}(\uptau^0,\uptau^1)&=&\frac{i}2 D^3_{[\log z]}(\uptau^+,\uptau^-)=\frac{i}2z^{1/4}\left(z\frac{d}{dz}\right)\tau_1. \label{NYD3diff}
	\end{eqnarray}
	The blowup equations from the above express $c=1$ tau functions as a bilinear combination of $c=-2$ tau functions.
	Of course, excluding $c=1$ tau function from these relations (for instance, by the substitution of $\tau$ given by \eqref{NYtaupm}) we will obtain
	bilinear relations only on $c=-2$ tau functions.

	We have obtained many relations between $c=1$ tau function given by \eqref{Kiev} and $c=-2$ tau functions: \eqref{NYtaupm}, \eqref{NYD2diff}, \eqref{NYD4diff},
	\eqref{NYD1diff}, \eqref{NYD3diff}. Now we deduce from them Toda-like equations \eqref{eq:Toda} on tau function given by the formula \eqref{Kiev}. 
	Therefore we will obtain a new proof of Theorem \ref{Kievthm}. We will do this in two slightly different ways.
	
	\begin{prop}\label{doubleprop}
		Let $\uptau^{\pm}$ satisfy equations \eqref{NYD2diff}. Then  $\tau_0(z)=\tau(z)$ and $\tau_1(z)$ defined by the \eqref{NYtaupm}, \eqref{NYD1diff} 
		satisfy Toda-like equation \eqref{eq:Toda} for $j=0$.
	\end{prop}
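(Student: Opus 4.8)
The plan is to establish the Toda-like equation
\begin{equation}
D^2_{[\log z]}(\tau,\tau)=-2z^{1/2}\tau_1^2 \ny
\end{equation}
directly from the bilinear blowup relations, treating $\tau=\uptau^+\uptau^-$ and $z^{1/4}\tau_1=-i\,D^1_{[\log z]}(\uptau^+,\uptau^-)$ as \emph{defined} by \eqref{NYtaupm} and \eqref{NYD1diff}, and using \eqref{NYD2diff} as the only additional input. The strategy is to expand $D^2_{[\log z]}(\uptau^+\uptau^-,\uptau^+\uptau^-)$ using the Leibniz-type rule for Hirota operators acting on products, and to show that the terms organise exactly into $-2z^{1/2}\tau_1^2$ once \eqref{NYD2diff} is invoked.

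First I would recall the composition identity for Hirota operators applied to products of functions. Writing $f=\uptau^+$, $g=\uptau^-$, the key algebraic fact is that $D^2_{[\log z]}(fg,fg)$ can be rewritten, via the definition \eqref{Hiru} of Hirota derivatives through the shift $z\mapsto e^{\pm\alpha}z$, in terms of the lower Hirota brackets $D^0(f,f)=f^2$, $D^1(f,g)$, $D^2(f,g)$ and their analogues. Concretely, expanding $f(e^\alpha z)g(e^\alpha z)f(e^{-\alpha}z)g(e^{-\alpha}z)$ to order $\alpha^2$ and collecting, I expect to obtain a combination of $D^2_{[\log z]}(f,g)\cdot fg$ and $\bigl(D^1_{[\log z]}(f,g)\bigr)^2$. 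The second step is then to substitute the blowup relation $D^2_{[\log z]}(\uptau^+,\uptau^-)=0$ from \eqref{NYD2diff}, which kills the $D^2\cdot fg$ term, leaving precisely the square of the first-order bracket. Since \eqref{NYD1diff} identifies $D^1_{[\log z]}(\uptau^+,\uptau^-)=-2i\,z^{1/4}\tau_1$ (up to the stated constant), squaring produces a term proportional to $z^{1/2}\tau_1^2$, and matching the overall constant yields the coefficient $-2$.

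\textbf{The main obstacle} I anticipate is the bookkeeping of the Hirota product identity together with the correct handling of the $z^{1/4}$ prefactors. Because the Hirota operator $D^k_{[\log z]}$ differentiates with respect to $\log z$, the explicit $z$-dependent factors ($z^{1/4}$ in \eqref{NYD1diff}, $z^{1/2}$ in the target) interact with the derivatives: when I square $D^1_{[\log z]}(\uptau^+,\uptau^-)=-2i z^{1/4}\tau_1$, the factor $z^{1/2}$ emerges automatically, but I must check that no extra contributions from the logarithmic derivative of $z^{1/4}$ spoil the cancellation, and that the sign $-2i$ squares to give the correct real negative constant $-2$. Verifying the Hirota product expansion coefficient — that the cross term really appears as $\bigl(D^1\bigr)^2$ with the right numerical factor and no stray $D^0\cdot D^2$ remnants beyond the one annihilated by \eqref{NYD2diff} — is the computational heart of the argument.

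Finally I would note that the $j=0$ case is the statement proved here; the $j=1$ case (the second line of \eqref{eq:Toda:old}) follows by the same computation after the B\"acklund substitution $\sg\mapsto\sg+1/2$ together with the symmetry properties \eqref{Kievsymm2}, so it need not be repeated. This reduction to a single Hirota product identity is what makes the proposition a clean consequence of the blowup relations rather than requiring any further input about Nekrasov functions.
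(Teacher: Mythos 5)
Your proposal is correct and follows essentially the same route as the paper: expand $D^2_{[\log z]}(\uptau^+\uptau^-,\uptau^+\uptau^-)$ via the product identity $D^2_{[x]}(fg,fg)=2fg\,D^2_{[x]}(f,g)-2\bigl(D^1_{[x]}(f,g)\bigr)^2$, kill the first term with \eqref{NYD2diff}, and identify the remaining square through \eqref{NYD1diff}. One small correction: \eqref{NYD1diff} gives $D^1_{[\log z]}(\uptau^+,\uptau^-)=z^{1/4}\tau_1$ exactly (the factors $\tfrac{i}{2}$ cancel on both sides), not $-2i\,z^{1/4}\tau_1$, and with this normalization $-2\bigl(D^1\bigr)^2=-2z^{1/2}\tau_1^2$ immediately; also your worry about logarithmic derivatives of $z^{1/4}$ is moot, since the final step is pure algebraic substitution with no further differentiation.
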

	\begin{proof}
		The proof is elementary: we just substitute $\tau$ and $\tau_1$ given by \eqref{NYtaupm}, \eqref{NYD1diff}
		into the Toda-like equation and check that under \eqref{NYD2diff} it is an identity
		\begin{equation}
		D^2_{[\log z]}(\tau,\tau)=D^2_{[\log z]}(\uptau^+\uptau^-,\uptau^+\uptau^-)=-2(D^1_{[\log z]}(\uptau^+,\uptau^-))^2=-2z^{1/2}\tau_1^2,
		\end{equation}
		where we used the identity
		\begin{equation}
		D^2_{[x]}(f(x)g(x),f(x)g(x))=2 f(x) g(x) D^2_{[x]}(f(x),g(x))-2(D^1_{[x]}(f(x),g(x)))^2.
		\end{equation}
	\end{proof}
	We considered equations \eqref{NYtaupm}, \eqref{NYD1diff}, \eqref{NYD2diff} on functions depending only on $z$.
	Let us consider these equations as equations on functions depending on $\sg$ and $z$ such that
	$\tau(z)\mapsto \tau(\sg|z)$, $\tau_1(z)\mapsto s^{1/2}\tau(\sg+1/2|z)=s^{-1/2}\tau(\sg-1/2|z)$, $\uptau^{\pm}(z)\mapsto \uptau^{\pm}(\sg|z)$.
	Then previous Proposition states that the function $\tau(\sg,s|z)$ satisfy \eqref{eq:Todasg}.
	Therefore the Theorem \ref{Kievthm} follows from the Proposition \ref{doubleprop} (up to the freedom $z^K$ from Proposition \ref{prop:Toda},
	which is fixed by imposing the asymptotic behaviour \cite[Prop. 2.1.]{BS16b}).
	

	%
	
	\paragraph{Knizhnik-Zamolodchikov equation.}
	A second way to obtain Toda-like equations \eqref{eq:Toda} from the relations between $c=1$ tau function and $c=-2$ tau functions is based on first order linear differential equations on $\uptau^{\pm}$.
	We expect them to be Knizhnik-Zamolodchikov (KZ) equation on the conformal blocks of symplectic fermions.
	
	At first we could write the KZ equation on Painlev\'e VI isomonodromic  tau function. Namely, Painlev\'e VI equation is an equation on isomonodromic deformations of the rank $2$ meromorphic flat connections on $\mathbb{CP}^1$ with $4$ poles, i.e. of the linear system
	\begin{equation}
	\frac{d\Phi(t)}{dt}=A(t)\Phi(t), \quad A(t)=\frac{A_z}{t-z}+\frac{A_0}{t}+\frac{A_1}{t-1}. 
	\end{equation}
	Isomonodromic tau function is introduced by the integration of 
	closed form which is as follows
	\begin{equation}
	d\log\tau=\left(\frac{\tr A_0A_z}{z}+\frac{\tr A_1A_z}{z-1}\right) dz. \label{KZVI} \end{equation}
	The isomonodromic Painlev\'e VI tau function is a conformal block  of free complex fermions (see, for example, \cite[Sec. 2.3.]{GIL12}).
	In this case, the space of conformal blocks is $1$-dimensional and the corresponding KZ equation is just \eqref{KZVI}. 
	We can make irregular limit of \eqref{KZVI} to KZ equation on the Painlev\'e III($D_8^{(1)}$) tau function
	\begin{equation}
	z\frac{d\tau}{dz}=\zeta\tau, \label{tauzetad}
	\end{equation}
	which is just \eqref{tauzeta}. 
	
	Similarly to the $c=1$ tau function, $c=-2$ tau functions   
	are expected to be conformal blocks of symplectic fermions. 
	Space of these conformal blocks is $2$-dimensional in accordance with the two $c=-2$ tau functions we have. 
	We expect that appropriate KZ equation on the space of $c=-2$ tau functions is possible to obtain from the CFT, but in this paper we
	will just find it by hands.
	To do that, let us write an identity
	\begin{equation}
	z\frac{d}{dz}
	\begin{pmatrix}
	\uptau^0\\
	\uptau^1
	\end{pmatrix}=
	\begin{pmatrix}
	a(z) & b(z)\\
	-b(z) & a(z) 
	\end{pmatrix}
	\begin{pmatrix}
	\uptau^0\\
	\uptau^1
	\end{pmatrix},\label{KZab}
	\end{equation}
	where 
	\begin{equation}
	a(z)=\frac{\uptau^0(\uptau^0)'+\uptau^1(\uptau^1)'}{(\uptau^0)^2+(\uptau^1)^2}, \qquad
	b(z)=\frac{\uptau^1(\uptau^0)'-\uptau^0(\uptau^1)'}{(\uptau^0)^2+(\uptau_1)^2}.  \end{equation}
	Due to \eqref{NYtau01} we rewrite
	\begin{equation}
	a(z)=\frac{\tau'}{2\tau}, \qquad b(z)=\frac{D^1_{[\log z]}(\uptau^0,\uptau^1)}{\tau},\label{abdef}
	\end{equation}
	and we have from \eqref{tauzetad} $a(z)=\frac12\zeta(z)$
	Using substitution \eqref{KZab} for $(\uptau^0)',(\uptau^1)'$ 
	in the l.h.s. of \eqref{NYD2diff} twice
	\begin{equation}
	\begin{aligned}
	&\frac12D^2_{[\log z]}(\uptau^0,\uptau^0)+\frac12D^2_{[\log z]}(\uptau^1,\uptau^1)=(\uptau^0)''\uptau^0-(\uptau^0)'^2+(\uptau^1)''\uptau^1-(\uptau^1)'^2=\\
	&=(a(z)\uptau^0+b(z)\uptau^1)'\uptau^0-(a(z)\uptau^0+b(z)\uptau^1)^2+(-b(z)\uptau^0+a(z)\uptau^1)'\uptau^1-(-b(z)\uptau^0+a(z)\uptau^1)^2=\\
	&=(a'(z)-a(z)^2-b(z)^2)((\uptau^0)^2+(\uptau^1)^2)+
	(a(z)(\uptau^0)'+b(z)(\uptau^1)')\uptau^0+
	(-b(z)(\uptau^0)'+a(z)(\uptau^1)')\uptau^1=\\
	&=(a'(z)-a(z)^2-b(z)^2)((\uptau^0)^2+(\uptau^1)^2)+
	(a(z)(a(z)\uptau^0+b(z)\uptau^1)+b(z)(-b(z)\uptau^0+a(z)\uptau^1))\uptau^0+\\
	&+(-b(z)(a(z)\uptau^0+b(z)\uptau^1)+a(z)(-b(z)\uptau^0+a(z)\uptau^1))\uptau^1=(a'(z)-2b(z)^2)((\uptau^0)^2+(\uptau^1)^2),
	\end{aligned}
	\label{D2KZcalc}
	\end{equation}
	which should be equal to zero according to \eqref{NYD2diff}, we obtain that 
	$b(z)=\frac12\sqrt{\zeta'}$
	
	So \eqref{KZab} is system of first order linear differential equations on $\uptau_0,\uptau_1$ with coefficients expressed in terms of 
	Painlev\'e III($D_8^{(1)}$) function $\zeta(z)$.
	
	It follows from \eqref{lsc} that short $c=-2$ tau functions are "eigenfunctions" for $z\frac{d}{dz}$ with "eigenvalues" $\frac12(\zeta\mp i\sqrt{\zeta'})$
	\begin{equation}
	z\frac{d}{dz}\uptau^{\pm}=\frac12(\zeta\mp i\sqrt{\zeta'})\uptau^{\pm}. \label{KZ}
	\end{equation}
	
	Above we did not use that $\zeta$ and corresponding $\tau$
	are solutions of Painlev\'e III($D_8^{(1)}$) equation.
	Now we will deduce another proof of the Theorem \ref{Kievthm} from the     KZ equation \eqref{KZ} and bilinear equation \eqref{NYD4diff}.
	\begin{prop}
		Let functions $\uptau^{\pm}$ satisfy equations \eqref{KZ}
		and bilinear equation \eqref{NYD4diff}.
		Then there exist such complex number $K$ that $\zeta-K$ satisfy \eqref{zeta3}.
	\end{prop}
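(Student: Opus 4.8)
The plan is to convert the bilinear relation \eqref{NYD4diff} into a scalar differential equation for $\zeta$ by using the linear equations \eqref{KZ} to eliminate every derivative of $\uptau^{\pm}$. Writing $A=\tfrac12(\zeta-i\sqrt{\zeta'})$ and $B=\tfrac12(\zeta+i\sqrt{\zeta'})$, equations \eqref{KZ} say $(\log\uptau^+)'=A$ and $(\log\uptau^-)'=B$, where the prime denotes $\tfrac{d}{d\log z}$. Hence
\[
\frac{\uptau^+(e^{\alpha}z)\,\uptau^-(e^{-\alpha}z)}{\uptau^+(z)\,\uptau^-(z)}=\exp\Big(\int_{\log z}^{\log z+\alpha}A+\int_{\log z}^{\log z-\alpha}B\Big),
\]
and expanding the exponent in $\alpha$ produces the Hirota generating function \eqref{Hiru} with all coefficients written through $A,B$ and their derivatives. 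Setting $v:=B-A=i\sqrt{\zeta'}$, so that $v^2=-\zeta'$, the coefficient of $\alpha^4$ collapses to the compact expression $D^4_{[\log z]}(\uptau^+,\uptau^-)=2(vv''-v'^2-v^4)\,\uptau^+\uptau^-$. I would carry this out through the exponential form above rather than the equivalent but messier recursive differentiation of $\uptau^{\pm}$.

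First I would feed $\tau=\uptau^+\uptau^-$ from \eqref{NYtaupm} and the right-hand side $-2z\tau$ of \eqref{NYD4diff} into this formula; dividing by $\tau$ yields a closed second-order equation for $v$,
\[
v^4-vv''+v'^2=z.\qquad(*)
\]
Although $(*)$ is second order in $v$, it is third order in $\zeta$ because $v^2=-\zeta'$; this mismatch of exactly one order is the origin of the constant $K$ in the statement.

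Next I would recover \eqref{zeta3}. Rewriting that equation for the shifted variable $\zeta-K$ in $\log z$-derivatives turns it into
\[
(\zeta''-\zeta')^2=4\zeta'^2(\zeta-K-\zeta')-4z\zeta',\qquad(\star)
\]
which I solve for $K$; substituting $\zeta'=-v^2$, $\zeta''=-2vv'$ and removing $z$ by means of $(*)$ gives the regular expression $K=\zeta+2v^2-\tfrac{v''}{v}+\tfrac{v'}{v}-\tfrac14$. The decisive step is to prove that this $K$ is constant. Differentiating and clearing denominators reduces $K'=0$ to the single identity $-v^4+4v^3v'-vv'''+v'v''+vv''-v'^2=0$, and this is exactly the difference between $(*)$ and its $\tfrac{d}{d\log z}$-derivative, both of which equal $z$. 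Therefore $K'=0$, so $(\star)$ holds with a genuine constant $K$, which is precisely the assertion that $\zeta-K$ solves \eqref{zeta3}.

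The main obstacle is purely computational: extracting the order-four Hirota coefficient in closed form and checking that $K'$ collapses onto the derivative of $(*)$. Both are streamlined by the exponential representation of $\uptau^{\pm}$. Throughout one stays in the generic regime $\zeta'\neq0$ (equivalently $v\neq0$ and $\ddot\zeta\neq0$), which excludes the two algebraic solutions \eqref{eq:algebraic} and legitimizes the divisions by $v$ and $\tau$.
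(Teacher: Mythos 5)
Your proof is correct and follows essentially the same route as the paper: the KZ equations convert the fourth Hirota derivative in \eqref{NYD4diff} into a third-order scalar ODE for $\zeta$ (your $(*)$ is exactly the paper's \eqref{eq:zetac} divided by $-2\zeta'$), which is then integrated once to yield \eqref{zeta3} with the integration constant $K$. The only cosmetic differences are that you work with $v=i\sqrt{\zeta'}$ instead of $\zeta$ itself, and you produce the first integral by solving for $K$ and checking $K'=0$ directly, whereas the paper recognizes the integrating factor via $2\ddot\zeta f=\dot\zeta\dot f$.
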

	
	\begin{proof}
		Let us use the substitution \eqref{KZ} for $(\uptau^+)',(\uptau^-)'$ in the l.h.s. of \eqref{NYD4diff} four times analogously to the calculation \eqref{D2KZcalc}.
		Then from \eqref{NYD4diff} and \eqref{NYtaupm} we obtain equation on $\zeta(z)$
		\begin{equation}
		-2(\zeta')^3+(\zeta'')^2-\zeta'\zeta'''+2z\zeta'=0. \label{eq:zetac}
		\end{equation}
		
		This equation is almost equivalent to the \eqref{zeta3} (cf. \cite[(2.26)]{BS16b}).
		Indeed, following proof of \cite[Prop. 2.3]{BS16b} (which is Proposition \ref{prop:Toda}) we rewrite equation \eqref{zeta3} in form
		$f(z)=0$, where
		\begin{equation}
		f(z)=\frac{1}{z^2}((\zeta''-\zeta')^2-4\zeta'^2(\zeta-\zeta')+4z\zeta').
		\end{equation}
		
		Differentiating the expression for $f(z)$, we have
		\begin{equation}
		\frac{z^2}{2(\zeta''-\zeta')}f'=\zeta'''-2\zeta''+\zeta'+6\zeta'^2-4\zeta\zeta'+2z.
		\end{equation}
		Then equation \eqref{eq:zetac} can be rewritten as
		\begin{equation}
		z^2f=\frac{z^2\zeta'}{2(\zeta''-\zeta')}f' \Leftrightarrow 2\ddot\zeta f=\dot\zeta \dot f \Leftrightarrow f=4K\dot\zeta^2.
		\end{equation}
		Recall that dot means differentiation with respect to $z$.
		This means that $\zeta-K$ satisfies \eqref{zeta3}. 
	\end{proof}
	
	\begin{Remark}
		This $K$ is the same as in Proposition \ref{prop:Toda}.
	\end{Remark}
	
	\begin{Remark}
		We deduced Painlev\'e III($D_8^{(1)}$) equation from Nakajima-Yoshioka blowup relations without using the half-integer sector
		\eqref{NYD1diff} of them. Moreover, it appears that \eqref{NYD1diff} follows from integer-sector relations using the Painlev\'e III($D_8^{(1)}$) equation.
		Indeed, we have from the first Toda-like equation \eqref{eq:Toda}
		that $\zeta'=-z^{1/2}\frac{\tau_1^2}{\tau^2} $ and from the  second relation of \eqref{abdef} it follows that   $\zeta'=4\left(\frac{D^1_{[\log z]}(\uptau^0,\uptau^1)}{\tau}\right)^2$.
	\end{Remark}
	
	\paragraph{Algebraic solution and $c=-2$ tau functions.}
	Concluding the section let us find $\uptau^{\pm}$ corresponding to the algebraic solution, see equation \eqref{eq:algebraic} and discussion below it.
	Substituting into \eqref{NYD1diff} $\tau_1=\tau$ and using \eqref{NYtaupm} we obtain
	\begin{equation}
	D^1_{[\log z]}(\uptau^+,\uptau^-)=z^{1/4}\uptau^+\uptau^-,\quad\textrm{i.e.}\quad (\log \uptau^+-\log \uptau^-)'=z^{1/4}.        
	\end{equation}
	Integrating this we obtain $\uptau^+=e^{4 z^{1/4}}\uptau^-$. 
	
	Two algebraic solutions $\tau=\uptau^+\uptau^-=z^{1/16}e^{\mp 4z^{1/2}}$ just correspond to the two choices of branch of $z^{1/2}$.
	We already choose some branch of $z^{1/4}$ when use equation \eqref{NYD1diff}. This leads to choice of the branch of $z^{1/2}=(z^{1/4})^2$ in the above calculation, i.e. to the choice of one of the algebraic solutions.
	The right and wrong choices give us
	\begin{equation}
	\uptau^{+}=z^{1/32}e^{2z^{1/4}\mp2z^{1/2}}, \quad
	\uptau^{-}=z^{1/32}e^{-2z^{1/4}\mp2z^{1/2}}.
	\end{equation}
	One could easily check that functions $\uptau^{+}, \uptau^{-}$ given by the previous formula with the sign "-" satisfy 	\eqref{NYD2diff}, but with the sign “+” do not satisfy.
	So the correct answer for $\uptau^{\pm}$ is given  by previous formula with the sign "-".

	Analytic continuation around $z=0$ give us $\uptau^{\pm}$, which product is $\tau=\tau_1=z^{1/16}e^{+4\sqrt{z}}$
	\begin{equation}
	\uptau^{\pm}=z^{1/32}e^{\pm 2i z^{1/4}+2z^{1/2}}.
	\end{equation}
	If we traverse the cycle around  $z=0$ twice then we obtain the initial answers with permuted $\uptau^+$ and $\uptau^-$.

	\section{$q$-deformed $c={-}2$ tau functions}
	\label{sec:qtau}
	\subsection{$q$-deformed $c={-}2$ tau functions and $q$-Painlev\'e $A_7^{(1)'}$ equation}
	\label{ssec:qtaum2}
	Nakajima-Yoshioka blowup relations on 5d partition functions \eqref{5d_def} are given by
	\cite[(2.5)-(2.7)]{NY05}. We rewrite these equations as equations on full partition functions
	\begin{align}
	(q_1^{-1/4}q_2^{-1/4}\Lambda)^{j}\mathcal{Z}(u;q_1,q_2|\Lambda&)=\sum_{n\in\mathbb{Z}+j/2}\mathcal{Z}(uq_1^{2n};q_1,q_2q_1^{-1}|q_1^{-\frac14}\Lambda) \mathcal{Z}(uq_2^{2n};q_1q_2^{-1},q_2|q_2^{-\frac14}\Lambda),\label{qNY1}\\
	(1-j)\mathcal{Z}(u;q_1,q_2|\Lambda)&=\sum_{n\in\mathbb{Z}+j/2}\mathcal{Z}(uq_1^{2n};q_1,q_2q_1^{-1}|\Lambda) \mathcal{Z}(uq_2^{2n};q_1q_2^{-1},q_2|\Lambda), \label{qNY2}\\
	(-q_1^{1/4}q_2^{1/4}\Lambda)^{j}\mathcal{Z}(u;q_1,q_2|\Lambda)&=\sum_{n\in\mathbb{Z}+j/2}\mathcal{Z}(uq_1^{2n};q_1,q_2q_1^{-1}|q_1^{\frac14}\Lambda) \mathcal{Z}(uq_2^{2n};q_1q_2^{-1},q_2|q_2^{\frac14}\Lambda), \label{qNY3}
	\end{align}
	where $j=0,1$.
	Moving to the case $q_2=q_1^{-1}=q$ (i.e. $\epsilon_1+\epsilon_2=0$) we obtain that \eqref{qNY1} and \eqref{qNY3} become identical
	and both have the form
	\begin{equation}
	z^{j/4}\mathcal{Z}(u;q^{-1},q|z)=\sum_{n\in\mathbb{Z}+j/2}\mathcal{Z}(uq^{-2n};q^{-1},q^2|qz) \mathcal{Z}(uq^{2n};q,q^{-2}|q^{-1}z), \quad j=0,1. \label{qNYD12diff}\\
	\end{equation}
	The relation \eqref{qNY2} in the case $q_2=q_1^{-1}=q$ for $j=1$ becomes trivial and for $j=0$ it reads
	\begin{equation}
	\mathcal{Z}(u;q^{-1},q|z)=\sum_{n\in\mathbb{Z}}\mathcal{Z}(uq^{-2n};q^{-1},q^2|z) \mathcal{Z}(uq^{2n};q,q^{-2}|z).
	\end{equation}
	
	Analogously to the continuous case we obtain from the last relation
	\begin{equation}
	\tau(u,s|z)=\uptau^+(u,s|z)\uptau^-(u,s|z), \label{qNYtaupm}
	\end{equation}
	where we use the following notation:
	\begin{defn}
		The functions $\uptau^{\pm}(u,s|z)$ given by the formula
		\begin{equation}
		\uptau^{\pm}(u,s|z)=\sum_{n\in\mathbb{Z}}s^{n/2}\mathcal{Z}(uq^{2n};q^{\mp1},q^{\pm2}|z)  \label{qtaum2pm}
		\end{equation}
		are called short $q$-deformed $c=-2$ tau functions.
	\end{defn}
	
	From the relation \eqref{qNYD12diff} for $j=0,1$ we obtain $q$-difference equations on $\uptau^{\pm}$
	\begin{eqnarray}
	j=0: \quad \overline{\uptau^+}\underline{\uptau^-}+\underline{\uptau^+}\overline{\uptau^-}&=&2\tau, \label{qNYD2diff}\\
	j=1: \quad \overline{\uptau^+}\underline{\uptau^-}-\underline{\uptau^+}\overline{\uptau^-}&=&-2z^{1/4}\tau_1.\label{qNYD1diff}
	\end{eqnarray}
	Excluding $\tau$ from \eqref{qNYD2diff} and \eqref{qNYtaupm} we obtain
	equation only on $c=-2$ tau functions
	\begin{equation}
	\overline{\uptau^+}\underline{\uptau^-}+\underline{\uptau^+}\overline{\uptau^-}=2\uptau^+\uptau^-. \label{qNYD2diffp} 
	\end{equation}
	
	Then we have an analog of the Proposition \ref{doubleprop}
	\begin{prop}
		Let $\uptau^{\pm}$ satisfy equations \eqref{qNYD2diffp}. Then  $\tau_0(z)=\tau(z)$ and $\tau_1(z)$ defined by  \eqref{qNYtaupm}, \eqref{qNYD1diff} correspondingly 
		satisfy Toda-like equation \eqref{eq:qToda} for $j=0$.\label{qdoubleprop}
	\end{prop}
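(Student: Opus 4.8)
The plan is to mimic the elementary verification of Proposition~\ref{doubleprop}: substitute the defining expressions for $\tau$ and $\tau_1$ into the $q$-Toda equation \eqref{eq:qToda} with $j=0$ and check that the hypothesis \eqref{qNYD2diffp} collapses it into a formal identity. First I would record that, since $\tau=\uptau^+\uptau^-$ by \eqref{qNYtaupm}, the shifts factorize as $\overline{\tau}=\overline{\uptau^+}\,\overline{\uptau^-}$ and $\underline{\tau}=\underline{\uptau^+}\,\underline{\uptau^-}$, so the left-hand side of \eqref{eq:qToda} (recalling $\tau_{-1}=\tau_1$ since $j\in\mathbb{Z}/2\mathbb{Z}$) rearranges into a product of the two cross terms,
\[
\overline{\tau}\,\underline{\tau}=\bigl(\overline{\uptau^+}\,\overline{\uptau^-}\bigr)\bigl(\underline{\uptau^+}\,\underline{\uptau^-}\bigr)=\bigl(\overline{\uptau^+}\,\underline{\uptau^-}\bigr)\bigl(\underline{\uptau^+}\,\overline{\uptau^-}\bigr).
\]
It is therefore natural to abbreviate $X=\overline{\uptau^+}\,\underline{\uptau^-}$ and $Y=\underline{\uptau^+}\,\overline{\uptau^-}$, so that the target becomes $XY=\tau^2-z^{1/2}\tau_1^2$.

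The second step is to express the right-hand side through $X$ and $Y$ as well. By the defining relation \eqref{qNYD1diff} one has $-2z^{1/4}\tau_1=X-Y$, hence $z^{1/2}\tau_1^2=\tfrac14(X-Y)^2$, the power of $z$ cancelling precisely against the $z^{1/4}$ built into the normalization of $\tau_1$. Meanwhile the hypothesis \eqref{qNYD2diffp} reads $X+Y=2\uptau^+\uptau^-=2\tau$, so $\tau^2=\tfrac14(X+Y)^2$. With these substitutions the claim reduces to the elementary polynomial identity $(X+Y)^2-(X-Y)^2=4XY$, which holds identically; this is the $q$-analog of the Hirota identity invoked in Proposition~\ref{doubleprop}.

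I do not expect a genuine obstacle in the argument: once the short tau functions and the $z^{1/4}$-normalization of $\tau_1$ are fixed as in \eqref{qNYtaupm} and \eqref{qNYD1diff}, the verification is purely formal. The only points that require care are bookkeeping: confirming that $\overline{\phantom{x}}$ and $\underline{\phantom{x}}$ distribute over the product $\uptau^+\uptau^-$, and that the fractional powers of $z$ match so that $z^{1/2}\tau_1^2$ is genuinely a polynomial in $X$ and $Y$. The conceptual content—namely, reducing the $q$-Painlev\'e Toda equation to a bilinear consequence of the Nakajima--Yoshioka blowup relations—lies entirely in having arranged the relations \eqref{qNYtaupm}, \eqref{qNYD2diffp} and \eqref{qNYD1diff} in advance, so that the proof itself is a one-line identity.
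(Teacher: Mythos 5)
Your argument is correct and coincides with the paper's own proof: both factor $\overline{\tau}\,\underline{\tau}$ into the cross terms $X=\overline{\uptau^+}\,\underline{\uptau^-}$ and $Y=\underline{\uptau^+}\,\overline{\uptau^-}$, identify $X+Y=2\tau$ via \eqref{qNYD2diffp} and \eqref{qNYtaupm} and $X-Y=-2z^{1/4}\tau_1$ via \eqref{qNYD1diff}, and conclude with $XY=\tfrac14(X+Y)^2-\tfrac14(X-Y)^2$. Nothing is missing.
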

	\begin{proof}
		The proof is even more elementary than in the continuous case. Namely
		\begin{equation}
		\begin{aligned}
		\overline{\tau}\underline{\tau} =\overline{\uptau^+} \overline{\uptau^-} \underline{\uptau^+} \underline{\uptau^-}=
		\frac14(\overline{\uptau^+} \underline{\uptau^-}+\underline{\uptau^+} \overline{\uptau^-})^2-
		\frac14(\overline{\uptau^+} \underline{\uptau^-}-\underline{\uptau^+} \overline{\uptau^-})^2=\tau^2-z^{1/2}\tau_1^2.
		\end{aligned}
		\end{equation}
		\end{proof}
	Toda-like equation \eqref{eq:qTodasg} on $\tau(u|z)$ follows from this proposition as in the continuous case.
	Thus this proposition gives us automatical proof of the Theorem \ref{qtauconj}, deducing it from the Nakajima-Yoshioka
	blowup equations.
	
	We obtained $q$-relations \eqref{qNYtaupm}, \eqref{qNYD2diff}, \eqref{qNYD1diff}, their continuous analogs
	are \eqref{NYtaupm}, \eqref{NYD2diff}, \eqref{NYD1diff} correspondingly. 
	The analog of \eqref{NYD4diff} could be obtained from the previously listed equations
	\begin{multline}
	(\overline{\overline{\uptau^+}} \underline{\underline{\uptau^-}}+\underline{\underline{\uptau^+}} \overline{\overline{\uptau^-}})\uptau^+\uptau^-=
	(\overline{\tau}-(qz)^{1/4}\overline{\tau_1})(\underline{\tau}-(q^{-1}z)^{1/4}\underline{\tau_1})\\+
	(\overline{\tau}+(qz)^{1/4}\overline{\tau_1})(\underline{\tau}+(q^{-1}z)^{1/4}\underline{\tau_1})
	=2\overline{\tau}\underline{\tau}+2 z^{1/2}\overline{\tau_1}\underline{\tau_1}=2(1-z)\tau^2,	    
	\end{multline}
	where we used \eqref{qNYD2diff}, \eqref{qNYD1diff}     and then \eqref{eq:qToda}.
	Therefore from \eqref{qNYtaupm} we obtain
	\begin{equation}
	\overline{\overline{\uptau^+}} \underline{\underline{\uptau^-}}+\underline{\underline{\uptau^+}} \overline{\overline{\uptau^-}}=2(1-z)\tau. \label{qNYD4diff}
	\end{equation}
	
	\paragraph{Algebraic solution and $q$-deformed $c=-2$ tau functions.}
	Let us now find $\uptau^{\pm}$ corresponding to the algebraic $c=1$ tau function of Painlev\'e $A_7^{(1)'}$ equation given by
	$\tau=\tau_1=z^{1/16}(\pm q^{1/2}z^{1/2};q^{1/2},q^{1/2})_{\infty}$.
	Substituting $\tau=\tau_1$, we have from \eqref{qNYD2diff}, \eqref{qNYD1diff}
	\begin{equation}
	\overline{\uptau^+} \underline{\uptau^-}=(1-z^{1/4})\tau. 
	\end{equation}
	Then, dividing the both sides by $\underline{\tau}$ and using 
	\eqref{qNYtaupm} we obtain
	\begin{equation}
	\frac{\overline{\uptau^+}}{\underline{\uptau^+}}=(1-z^{1/4})\frac{\tau}{\underline{\tau}}.    
	\end{equation}
	
	As in the continuous case, we have to choose the branch of $z^{1/2}$ in  $(\pm q^{1/2}z^{1/2};q^{1/2},q^{1/2})_{\infty}$ which agrees
	with \eqref{qNYD1diff}. Now we will make calculations with both choices and finally find that one of them leads to contradiction.
	Substituting $\tau=z^{1/16}(\pm q^{1/2}z^{1/2};q^{1/2},q^{1/2})_{\infty}$ we obtain
	\begin{equation}
	\frac{\overline{\uptau^+}}{\underline{\uptau^+}}=q^{1/16}\frac{1-z^{1/4}}{(\pm z^{1/2};q^{1/2})_{\infty}},
	\end{equation}
	where we used \eqref{qshift} and then \eqref{permult}.
	We have that
	\begin{equation}
	\uptau^+=
	z^{1/32} \frac{(\pm q^{1/2}z^{1/2};q^{1/2},q)_{\infty}}{(q^{1/4}z^{1/4};q^{1/2})_{\infty}}
	\end{equation}
	is the only solution (up to $z$-constant factor) of the previous equation which is a power series in $z$. 
	Analogously from $\overline{\uptau^-} \underline{\uptau^+}=(1+z^{1/4})\tau$ we obtain
	\begin{equation}
	\uptau^-=z^{1/32} \frac{(\pm q^{1/2}z^{1/2};q^{1/2},q)_{\infty}}{(-q^{1/4}z^{1/4};q^{1/2})_{\infty}}. 
	\end{equation}
	Let us now check \eqref{qNYtaupm}. For the branch corresponding to the sign "+" we obtain
	\begin{equation}
	\uptau^+\uptau^-=z^{1/16} \frac{(q^{1/2}z^{1/2};q^{1/2},q)^2_{\infty}}{(q^{1/2}z^{1/2};q)_{\infty}}=
	z^{1/16}(q^{1/2}z^{1/2};q^{1/2},q^{1/2})_{\infty},
	\end{equation}
	where we used \eqref{sqmult}, \eqref{qshift} and \eqref{permult} successively. 
	Of course, there is no such relation if we choose sign "-", so the latter choice is wrong and the previous choice give the correct answer.
	
	As in the continuous case, answer for the other branch of $z^{1/2}$ could be obtained by the analytic continuation around $z=0$.
	Going twice around $z=0$ permutes $\uptau^+$ and $\uptau^-$.

	\subsection{Chern-Simons generalization}\label{ssec:CS}
	The work \cite{BGM18} considered a generalization of the Toda-like equations \eqref{eq:qToda}. This generalization
	depends on two integer parameters $N\in\mathbb{N}, 0\leq m \leq N$ and has the form
	\begin{equation}
	\tau_{m;j}(qz)\tau_{m;j}(q^{-1}z)=\tau_{m;j}(z)^2-z^{1/N}\tau_{m;j+1}(q^{m/N}z)\tau_{m;j-1}(q^{-m/N}z), \, j\in \mathbb{Z}/N\mathbb{Z}. 
	\label{eq:qTodaCSg}
	\end{equation}
	Clearly, the original equations \eqref{eq:qToda} correspond to the case $N=2,m=0$.
	In this paper we consider only the cases $N=2, m=0,1,2$, so we consider equations
	\begin{equation}
	\tau_{m;j}(qz)\tau_{m;j}(q^{-1}z)=\tau_{m;j}(z)^2-z^{1/2}\tau_{m;j+1}(q^{m/2}z)\tau_{m;j-1}(q^{-m/2}z), \, j\in \mathbb{Z}/2\mathbb{Z},\, 0\leq m \leq 2, 
	\label{eq:qTodaCS}
	\end{equation}
	and their solutions.
	
	The work \cite{BGM18} also proposed solutions of the equations \eqref{eq:qTodaCSg} for arbitrary $N$ and $0 \leq m \leq N$. For the case $N=2$
	they are given in the form \eqref{Kiev} with the modification of the 5d partition function by the Chern-Simons term. 
	This modification is as follows \cite{T04}, \cite{GNY06}: we multiply each summand of $\mathcal{Z}_{inst}$ in \eqref{5d_def} by the
	multiplier $\prod_{i=1}^2 (q_1q_2)^{-\frac{m}2 |\lmb^{(i)}|} \mathsf{T}^m_{\lmb^{(i)}}(u_i;q_1,q_2)$ where 
	\begin{equation}
	\mathsf{T}_{\lmb}(u;q_1,q_2)=\prod_{(i,j)\in\lmb} u^{-1}q_1^{1-i}q_2^{1-j}. \label{CSterm}
	\end{equation}
	The factors $\mathcal{Z}_{cl}$ and $\mathcal{Z}_{1-loop}$ remain unchanged under this modification.
	The index $m$ is the Chern-Simons level. We will denote Chern-Simons modified full 5d partition functions by $\mathcal{Z}_m$. 
	
	For the function $\mathcal{Z}_m$, the symmetries $q_1\leftrightarrow q_2$ and $u\mapsto u^{-1}$ from \eqref{qZsymm} are obviously satisfied for arbitrary $m$.
	In the case $q_1q_2=1$ the symmetry $q_1, q_2\mapsto q^{-1}_1, q^{-1}_2$ is equivalent to the symmetry $q_1\leftrightarrow q_2$.
	
	For general $q_1, q_2$, the situation with $q_1, q_2\mapsto q^{-1}_1, q^{-1}_2$ symmetry is much more subtle. In the case $m=0$ the proof of
	such symmetry is based on the power series term by term coincidence. But for $m=1,2$ this method does not work.
	For $m=1$, however, one has $\mathcal{Z}_{1,inst}(u;q_1,q_2|\Lambda)=\mathcal{Z}_{1,inst}(u;q_1^{-1},q_2^{-1}|\Lambda)$. 
	The proof for the $q_1=q^{-1},q_2=q^2$ case is given in \cite[Prop. 1.38]{GNY06}.
	For $m=2$ it is satisfied with some elementary multiplier and in the case $q_1=q^{-1},q_2=q^2$ the answer is given below.
	
	Instead of \eqref{eq:qTodaCS}, we will consider a single equation on $\tau_m(u|z)$
	\begin{equation}
	\tau_{m}(u|qz)\tau_{m}(u|q^{-1}z)=\tau_{m}^2(u|z)-z^{1/2}\tau_{m}(uq|q^{m/2}z)\tau_{m}(uq^{-1}|q^{-m/2}z), \label{eq:qTodaCSsg}
	\end{equation}
	just in the same way as before we proceed from \eqref{eq:qToda} to \eqref{eq:qTodasg}.
	
	There is analog of the blowup relations on Nekrasov functions for the Chern-Simons modified case. They were proposed in
	\cite[(1.37)]{GNY06} and proven as Theorem 2.11 in \cite{NY09}. In our notations they read ($0\leq m\leq N$)
	\begin{align}
	\mathcal{Z}_m(u;q_1,q_2|\Lambda)&=\sum_{n\in\mathbb{Z}}\mathcal{Z}_m(uq_1^{2n};q_1,q_2q_1^{-1}|q_1^{-\frac14-\frac{m}8}\Lambda) \mathcal{Z}_m(uq_2^{2n};q_1q_2^{-1},q_2|q_2^{-\frac14-\frac{m}8}\Lambda), \label{qNYCS1} \\
	\mathcal{Z}_m(u;q_1,q_2|\Lambda)&=\sum_{n\in\mathbb{Z}}\mathcal{Z}_m(uq_1^{2n};q_1,q_2q_1^{-1}|q_1^{-\frac{m}{8}}\Lambda) \mathcal{Z}_m(uq_2^{2n};q_1q_2^{-1},q_2|q_2^{-\frac{m}8}\Lambda), \label{qNYCS2}\\
	\mathcal{Z}_m(u;q_1,q_2|\Lambda)&=\sum_{n\in\mathbb{Z}}\mathcal{Z}_m(uq_1^{2n};q_1,q_2q_1^{-1}|q_1^{\frac14-\frac{m}{8}}\Lambda) \mathcal{Z}_m(uq_2^{2n};q_1q_2^{-1},q_2|q_2^{\frac14-\frac{m}{8}}\Lambda) \label{qNYCS3}. 
	\end{align}
	These equations are analogs of \eqref{qNY1}, \eqref{qNY2}, \eqref{qNY3} for $j=0$. We will comment on the $j=1$ sector 
	when consider the case $m=1$ where these relations are necessary.
	
	Below we consider cases $m=1$ and $m=2$ separately.
	
	\paragraph{Case $m=2$.}
	We obtain that equation \eqref{eq:qTodaCS} for $m=2$ is equivalent to $m=0$ equation \eqref{eq:qToda}.
	\begin{prop}
		The functions $\tau_{2;j}$, $j\in\mathbb{Z}/2\mathbb{Z}$ satisfy \eqref{eq:qTodaCS} iff
		the functions $\tau_j=(qz;q,q)_{\infty}\tau_{2;j}$ satisfy \eqref{eq:qToda}.
	\end{prop}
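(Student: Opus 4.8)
The plan is to substitute $\tau_{2;j}=\tau_j/g$ with $g(z)=(qz;q,q)_\infty$ directly into the $m=2$ case of \eqref{eq:qTodaCS} and reduce the whole statement to a single $q$-Pochhammer shift identity together with a $2\times2$ linear elimination that is special to the $\mathbb{Z}/2\mathbb{Z}$ structure. First I would record how $g$ transforms under $z\mapsto q^{\pm1}z$. Using the elementary shift relation for the double Pochhammer symbol $(xq;q,q)_\infty=(x;q,q)_\infty/(x;q)_\infty$ (which follows from the definition \eqref{Pochhammer_def} and the one-step recursions \eqref{qshift}, \eqref{permult}) one gets $g(qz)=g(z)/(qz;q)_\infty$ and $g(q^{-1}z)=(z;q)_\infty\,g(z)$, hence
\[
g(qz)\,g(q^{-1}z)=\frac{(z;q)_\infty}{(qz;q)_\infty}\,g(z)^2=(1-z)\,g(z)^2,
\]
where the last step is the recursion $(z;q)_\infty=(1-z)(qz;q)_\infty$.

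With this in hand, substituting $\tau_{2;j}=\tau_j/g$ into the $m=2$ equation (where $q^{m/2}=q$) and clearing the common factor $g(z)^2(1-z)$, the equation for $\tau_{2;j}$ becomes exactly
\[
\overline{\tau_j}\,\underline{\tau_j}=(1-z)\,\tau_j^2-z^{1/2}\,\overline{\tau_{j+1}}\,\underline{\tau_{j-1}},\qquad j\in\mathbb{Z}/2\mathbb{Z}.
\]
This differs from the target $m=0$ system \eqref{eq:qToda} in two places: the spurious factor $(1-z)$ in front of $\tau_j^2$, and the $z$-shifts in the cross term $\overline{\tau_{j+1}}\,\underline{\tau_{j-1}}$ in place of $\tau_{j+1}\tau_{j-1}$. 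The key observation is that in $\mathbb{Z}/2\mathbb{Z}$ one has $j+1\equiv j-1$, so the cross term is $\overline{\tau_{j'}}\,\underline{\tau_{j'}}$ with $j'$ the other index; thus the two equations of the system are coupled precisely through the combinations $\overline{\tau_{j'}}\,\underline{\tau_{j'}}$ that the equations themselves compute.

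This reduces the remaining work to pure linear elimination. Writing $P_j=\overline{\tau_j}\,\underline{\tau_j}$ and $Q_j=\tau_j^2$, the transformed system reads $P_0=(1-z)Q_0-z^{1/2}P_1$ and $P_1=(1-z)Q_1-z^{1/2}P_0$, while \eqref{eq:qToda} reads $P_0=Q_0-z^{1/2}Q_1$ and $P_1=Q_1-z^{1/2}Q_0$. For the forward direction I would substitute the second transformed equation into the first to obtain $(1-z)P_0=(1-z)\bigl(Q_0-z^{1/2}Q_1\bigr)$, then cancel $(1-z)$ (invertible as a formal power series in $z$, or for $z\neq1$) to recover \eqref{eq:qToda}; the same computation for $P_1$ and the identical chain run backwards give the reverse implication, proving the equivalence.

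The hard part is conceptual rather than computational: recognizing that the factor $(1-z)$ produced by the Pochhammer prefactor is exactly what is needed to linearly decouple the shifted cross terms through the companion equation, and that this cancellation is special to the $\mathbb{Z}/2\mathbb{Z}$ (i.e.\ $N=2$) case, where $j+1$ and $j-1$ collapse to one index. The main place to be careful is the bookkeeping of the $z$-shifts in $g$ and the direction of the cross-term shift, since a single misplaced shift there would spoil the cancellation.
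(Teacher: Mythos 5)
Your proof is correct and takes essentially the same route as the paper's: both rest on the Pochhammer identity $(q^2z;q,q)_\infty\,(z;q,q)_\infty=(1-z)\,(qz;q,q)_\infty^2$ together with eliminating the cross term between the $j$ and $j+1$ equations using $j+1\equiv j-1$ in $\mathbb{Z}/2\mathbb{Z}$. The only difference is the order of operations --- you rescale by $(qz;q,q)_\infty$ first and then eliminate, while the paper eliminates first to reach the intermediate relation \eqref{eq:qTodaCS2r} and then rescales --- which is the same computation.
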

	This was noticed in \cite{BGM18}.
	\begin{proof}
		Assume that the functions $\tau_{2;j}$, $j\in\mathbb{Z}/2\mathbb{Z}$ satisfy \eqref{eq:qTodaCS}. Then, combining equations \eqref{eq:qTodaCS} for $j$ and $j+1$ we obtain
		\begin{equation}
		\overline{\tau_{2;j}}\underline{\tau_{2;j}}=\tau_{2;j}^2-z^{1/2}\overline{\tau_{2;j+1}}\underline{\tau_{2;j+1}}
		=\tau_{2;j}^2-z^{1/2}\tau_{2;j+1}\tau_{2;j-1}+z\overline{\tau_{2;j}}\underline{\tau_{2;j}}. 
		\end{equation}
		Therefore
		\begin{equation}
		(1-z)\overline{\tau_{2;j}}\underline{\tau_{2;j}}=\tau_{2;j}^2-z^{1/2}\tau_{2;j+1}\tau_{2;j-1},\quad j\in \mathbb{Z}/2\mathbb{Z}, \label{eq:qTodaCS2r}
		\end{equation}
		so due to \eqref{qshift} we obtain that $\tau_j=(qz;q,q)_{\infty}\tau_{2;j}$ satisfy \eqref{eq:qToda}.
		
		Conversely, assume that the functions $\tau_j=(qz;q,q)_{\infty}\tau_{2;j}$ satisfy \eqref{eq:qToda}. Taking equations \eqref{eq:qToda}, defining $\tau_{2;j}=(qz;q,q)^{-1}_{\infty}\tau_j$ we obtain \eqref{eq:qTodaCS2r}.
		Combining these equations for $j$ and $j+1$ we obtain
		\begin{equation}
		(1-z)\overline{\tau_{2;j}}\underline{\tau_{2;j}}=\tau_{2;j}^2-z^{1/2}\tau_{2;j+1}\tau_{2;j+1}=
		(1-z)\tau_{2;j}^2-z^{1/2}(1-z)\overline{\tau_{2;j+1}}\underline{\tau_{2;j+1}}.
		\end{equation}
	\end{proof}
	
	Analogous equivalence holds on the level of solutions of the form  \eqref{Kiev}. 
	Moreover it holds on the level of Nekrasov functions
	\begin{prop}\label{20equivprop}
		Nekrasov functions $\mathcal{Z}_m$ satisfy
		\begin{eqnarray}
		\mathcal{Z}_2(u;q^{-2},q|z)=(z;q^{-2},q)_{\infty}\mathcal{Z}_0(u;q^{-2},q|z), \label{20equiv1} \\
		\mathcal{Z}_2(u;q^{-1},q^2|z)=(z;q^{-1},q^2)_{\infty}\mathcal{Z}_0(u;q^{-1},q^2|z), \label{20equiv2} \\
		\mathcal{Z}_2(u;q^{-1},q|z)=(z;q^{-1},q)_{\infty}\mathcal{Z}_0(u;q^{-1},q|z). \label{20equiv12}
		\end{eqnarray}
	\end{prop}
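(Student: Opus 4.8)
Since the Chern--Simons modification affects only $\mathcal{Z}_{inst}$ and leaves $\mathcal{Z}_{cl}$ and $\mathcal{Z}_{1-loop}$ untouched, while the double Pochhammer factor on the right depends on $z$ alone, each of the three identities is equivalent to the corresponding statement for the instanton parts,
\begin{equation}
\mathcal{Z}_{2,inst}(u;q_1,q_2|z)=(z;q_1,q_2)_{\infty}\,\mathcal{Z}_{0,inst}(u;q_1,q_2|z),
\end{equation}
at $(q_1,q_2)\in\{(q^{-2},q),(q^{-1},q^2),(q^{-1},q)\}$. The plan is to establish the two ``resonant'' cases \eqref{20equiv1}, \eqref{20equiv2} and then deduce \eqref{20equiv12}. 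First, \eqref{20equiv1} and \eqref{20equiv2} are exchanged by the symmetry $q_1\leftrightarrow q_2$ of \eqref{qZsymm} together with $q\mapsto q^{-1}$, under which $(z;q_1,q_2)_{\infty}$ is invariant, so it suffices to treat \eqref{20equiv1}.

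I would obtain \eqref{20equiv12} from \eqref{20equiv1} and \eqref{20equiv2} by a blowup bootstrap. Setting $(q_1,q_2)=(q^{-1},q)$ in the $m=2$ relation \eqref{qNYCS2}, the two child partition functions sit precisely at the parameters $(q^{-1},q^2)$ and $(q^{-2},q)$ (the latter identified with $(q,q^{-2})$ via $q_1\leftrightarrow q_2$), with arguments $qz$ and $q^{-1}z$ respectively since $z=\Lambda^4$. Substituting \eqref{20equiv1} and \eqref{20equiv2} into the summands, the prefactors $(qz;q^{-1},q^2)_{\infty}$ and $(q^{-1}z;q^{-2},q)_{\infty}$ are $n$-independent and pull out of the sum; the remaining sum is exactly the $m=0$ relation \eqref{qNYD12diff} at $j=0$ and hence equals $\mathcal{Z}_0(u;q^{-1},q|z)$. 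It then remains to verify the purely $z$-combinatorial identity
\begin{equation}
(qz;q^{-1},q^2)_{\infty}\,(q^{-1}z;q^{-2},q)_{\infty}=(z;q^{-1},q)_{\infty},
\end{equation}
which is immediate from $\log(z;q_1,q_2)_{\infty}=-\sum_{k\geq1}z^k/\bigl(k(1-q_1^k)(1-q_2^k)\bigr)$: the $z^k$ power sum on the left, $q^k/\bigl((1-q^{-k})(1-q^{2k})\bigr)+q^{-k}/\bigl((1-q^{-2k})(1-q^{k})\bigr)$, collapses to $1/\bigl((1-q^{-k})(1-q^{k})\bigr)$.

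The core of the argument is therefore the direct proof of \eqref{20equiv1}. Here I would expand both sides in powers of $z$ and show order by order that the Chern--Simons weight $\prod_i(q_1q_2)^{-|\lambda^{(i)}|}\mathsf{T}^2_{\lambda^{(i)}}$ of \eqref{CSterm}, after summation over the pair of Young diagrams at fixed instanton number, reproduces the convolution of the coefficients of $(z;q_1,q_2)_{\infty}$ with those of $\mathcal{Z}_{0,inst}$. The mechanism is already visible at order $z^1$: the single-box pairs $(\square,\varnothing)$ and $(\varnothing,\square)$ carry the $u$-dependent factors $(q_1q_2)^{-1}u_1^{-2}$ and $(q_1q_2)^{-1}u_2^{-2}$, and one checks that, against the measure, all $u$-dependence between the two diagrams cancels, so that the difference of the Chern--Simons-weighted coefficient and the plain one reduces to the $u$-independent value $-1/\bigl((1-q_1)(1-q_2)\bigr)$, which is exactly the $z^1$-coefficient of $(z;q_1,q_2)_{\infty}$.

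The main obstacle is to control this cancellation to all orders: at instanton number $k$ each summand is genuinely $u$-dependent, whereas the claimed ratio $(z;q_1,q_2)_{\infty}$ is not, so one must exhibit a systematic collapse of the $u$-dependence. I expect the cleanest way to organize this is through the pole structure in $u$: at each order in $z$ both $\mathcal{Z}_{2,inst}$ and $\mathcal{Z}_{0,inst}$ are rational in $u$ with identical poles at the resonances $u=q_1^{a}q_2^{b}$, and the residues obey a recursion lowering the instanton number. Matching residues, with the Chern--Simons factor evaluated at the resonance, would reduce the all-orders statement to lower orders and close an induction, while the $u\mapsto u^{-1}$ symmetry of \eqref{qZsymm} fixes the residual $u$-independent normalization against the Pochhammer.
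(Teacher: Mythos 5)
Your reductions at the periphery are sound: the Chern--Simons modification indeed touches only $\mathcal{Z}_{inst}$; relations \eqref{20equiv1} and \eqref{20equiv2} are exchanged by $q\mapsto q^{-1}$ combined with the $q_1\leftrightarrow q_2$ symmetry; and your derivation of \eqref{20equiv12} from \eqref{20equiv1}, \eqref{20equiv2} via the $m=2$ blowup relation \eqref{qNYCS2} at $(q_1,q_2)=(q^{-1},q)$, together with the identity $(qz;q^{-1},q^2)_{\infty}(q^{-1}z;q^{-2},q)_{\infty}=(z;q^{-1},q)_{\infty}$, is correct and essentially the step the paper performs (the paper uses \eqref{qNYCS3} and \eqref{qNY2}, so its Pochhammer arguments are unshifted, but this is immaterial).

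The gap is the core identity \eqref{20equiv1} itself, for which you offer only a plan with an explicitly acknowledged obstacle (``control this cancellation to all orders''), not a proof. Moreover, the one piece of evidence you compute --- the order-$z^1$ cancellation --- holds for \emph{arbitrary} $(q_1,q_2)$ and therefore cannot be the operative mechanism: the identity $\mathcal{Z}_{2,inst}=(z;q_1,q_2)_{\infty}\mathcal{Z}_{0,inst}$ fails for generic $(q_1,q_2)$ at higher orders, which is why the Proposition is stated only on the three loci $q_1q_2^2=1$, $q_1^2q_2=1$, $q_1q_2=1$ (compare the nontrivial relation between local $\mathbb{F}_0$ and $\mathbb{F}_2$ invariants cited from \cite{IKP02}). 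Your residue-recursion sketch never identifies where the resonance condition enters, so there is no reason to expect the induction to close. The paper's proof sidesteps the combinatorics entirely: Lemma \ref{determlemma} shows that the three $m=2$ blowup relations \eqref{qNYCS1}--\eqref{qNYCS3}, with the parent function excluded as in \eqref{qCSdetermrel}, over-determine the expansion coefficients of $\mathcal{Z}_{2,inst}$ at the two child parameter pairs (the relevant $2\times2$ determinant is nonzero for $q$ not a root of unity); one then verifies that the candidate $\td{\mathcal{Z}}_2:=(z;\cdot,\cdot)_{\infty}\mathcal{Z}_0$ satisfies the same system at $q_1q_2=1$ --- this is exactly where the already-established $m=0$ relations \eqref{qNYtaupm}, \eqref{qNYD2diff}, \eqref{qNYD4diff} and two elementary Pochhammer identities are used --- and concludes $\td{\mathcal{Z}}_2=\mathcal{Z}_2$ by uniqueness. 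Some such uniqueness/bootstrap input (or a genuinely new all-orders combinatorial identity) is indispensable; as written, your argument does not establish \eqref{20equiv1}, and hence not \eqref{20equiv2} or \eqref{20equiv12} either.
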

	Relations \eqref{20equiv1}, \eqref{20equiv2} clearly lead to similar relations between $c=-2$ tau functions
	$\uptau_2^{\pm}$ and $\uptau^{\pm}_0$. 
	
	In terms of topological strings this relation means a  relation between
	the geometry of local $\mathbb{F}_0=\mathbb{P}^1\times\mathbb{P}^1$ and local Hirzebruch surface $\mathbb{F}_2$. For example, the relation between Gopakumar-Vafa invariants of these manifolds is given in e.g. \cite[eq. (94)]{IKP02}. We have not found the statement of Proposition \ref{20equivprop} in the literature (except of \eqref{20equiv12}
	which appeared in \cite{BGM18}), but it maybe known. We prove it below using the blowup relations \eqref{qNYCS1}, \eqref{qNYCS2}, \eqref{qNYCS3}.
	
	From this proposition using \eqref{qZsymm}, \eqref{qloopassym}, \eqref{qclassym} we have property
	\begin{equation}
	\mathcal{Z}_2(u;q^{-1},q^2|z)=q^{-\frac{\log^2u}{4\log^2q}}\frac{1}{(zq;q^2)_{\infty}\theta(uq;q^2)}\mathcal{Z}_2(u;q^{-2},q|z),  
	\end{equation}
	this is the form of $q_1,q_2\leftrightarrow q_1^{-1},q_2^{-1}$ symmetry for $m=2$ in the case $q_1=q^{-1},q_2=q^2$.
	
	Consider relations \eqref{qNYCS1}, \eqref{qNYCS2}, \eqref{qNYCS3} for $m=2$ and exclude from them $\mathcal{Z}_m(u;q_1,q_2|\Lambda)$
	\begin{equation}
	\begin{aligned}
	&\sum_{n\in\mathbb{Z}}\mathcal{Z}_2(uq_1^{2n};q_1,q_2q_1^{-1}|q_1^{-\frac12}\Lambda) \mathcal{Z}_2(uq_2^{2n};q_1q_2^{-1},q_2|q_2^{-\frac12}\Lambda)=\\
	=&\sum_{n\in\mathbb{Z}}\mathcal{Z}_2(uq_1^{2n};q_1,q_2q_1^{-1}|q_1^{-\frac14}\Lambda) \mathcal{Z}_2(uq_2^{2n};q_1q_2^{-1},q_2|q_2^{-\frac14}\Lambda)=\\
	=&\sum_{n\in\mathbb{Z}}\mathcal{Z}_2(uq_1^{2n};q_1,q_2q_1^{-1}|\Lambda) \mathcal{Z}_2(uq_2^{2n};q_1q_2^{-1},q_2|\Lambda)\label{qCSdetermrel}.
	\end{aligned}
	\end{equation}
	
	These are equations on $\mathcal{Z}_{2,inst}$, i.e. on coefficients $c_k^{(1)}$, $c_k^{(2)}, k\in\mathbb{Z}_{\geq 0}$ of the power series 
	$\mathcal{Z}_{2,inst}(u;q_1,q_2q_1^{-1}|\Lambda)=\sum_{k=0}^{+\infty} c_k^{(1)}(u;q_1,q_2) \Lambda^{4k}$ and 
	$\mathcal{Z}_{2,inst}(u;q_1q_2^{-1},q_2|\Lambda)=\sum_{k=0}^{+\infty} c_k^{(2)}(u;q_1,q_2) \Lambda^{4k}$.
	This is because relations \eqref{qCSdetermrel} split into the relations corresponding to powers $\Lambda^{4k}, k\in\mathbb{Z}_{\geq0}$ 
	(up to the power $\Lambda^{-\frac{\log^2u}{\log q_1\log q_2}}$ from $\mathcal{Z}_{cl}$). 
	
	\begin{lemma}
		Relations \eqref{qCSdetermrel} recursively determine the coefficients $c_k^{(1)}, c_k^{(2)}, k\in\mathbb{N}$
		starting from  $c_0^{(1)}=c_0^{(2)}=1$.\label{determlemma}
	\end{lemma}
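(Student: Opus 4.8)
The plan is to argue by induction on $k$, extracting from the three equalities in \eqref{qCSdetermrel} the coefficient of $\Lambda^{4k}$ and showing that at each step $c_k^{(1)}$ and $c_k^{(2)}$ are determined by the already-known coefficients $c_{<k}^{(1)},c_{<k}^{(2)}$ through an invertible linear system, the base case being the normalization $c_0^{(1)}=c_0^{(2)}=1$. First I would strip off the overall classical power $\Lambda^{-\log^2 u/(\log q_1\log q_2)}$ common to all three sums, so that each side of \eqref{qCSdetermrel} becomes a genuine power series in $\Lambda^4$ whose coefficients are polynomial expressions in the $c_i^{(1)}(uq_1^{2n};q_1,q_2)$ and $c_j^{(2)}(uq_2^{2n};q_1,q_2)$, weighted by the explicit (and $\Lambda$-independent) $1$-loop factors and by the classical constants.

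The key structural input is a degree estimate for the $n$-th summand. Writing the classical exponent of \eqref{5d_def} and performing the shifts $u\mapsto uq_1^{2n}$, $u\mapsto uq_2^{2n}$ together with $(q_1,q_2)\mapsto(q_1,q_2q_1^{-1})$ and $(q_1,q_2)\mapsto(q_1q_2^{-1},q_2)$ respectively, a short Gaussian computation shows that the product of the two classical factors in the $n$-th summand carries the power $\Lambda^{4(d_0+n^2)}$, where $\Lambda^{4d_0}$ is the common prefactor just removed; the cross terms linear in $\log u$ and the $\log^2 u$ terms cancel, leaving exactly $n^2$. Consequently the $n$-th summand can only contribute to the coefficient of $\Lambda^{4k}$ through $c_{k-n^2}^{(i)}$ and lower, evaluated at the shifted arguments $uq_i^{2n}$. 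For $n\neq0$ this involves only $c_{<k}^{(i)}$, which are known by induction (as functions of $u$, hence at any shifted argument), while the $n=0$ summand produces $c_k^{(1)}$ and $c_k^{(2)}$ linearly, its top part being $c_k^{(1)}+c_k^{(2)}$ up to the $\Lambda$-rescaling, the remaining contributions being products of lower coefficients.

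It then remains to check that the resulting linear system for $(c_k^{(1)},c_k^{(2)})$ is non-degenerate, and here all three relations in \eqref{qCSdetermrel} are genuinely needed. They differ only by the rescalings $\Lambda\mapsto q_i^{-1/2}\Lambda$, $\Lambda\mapsto q_i^{-1/4}\Lambda$ and $\Lambda\mapsto\Lambda$, which multiply the $c_k^{(i)}$-term coming from the $n=0$ summand by $q_i^{-2k}$, $q_i^{-k}$ and $1$, and also rescale the $n=0$ classical constant by the corresponding $u$-dependent factors $A=q_1^{-2d_1(0)}q_2^{-2d_2(0)}$, $B=q_1^{-d_1(0)}q_2^{-d_2(0)}$ and $1$ (with $A=B^2$). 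Subtracting the third relation from the first and from the second gives two equations for the two unknowns (the remaining pairwise equality is then automatic); I would compute the $2\times2$ coefficient determinant and check that it factors through $(q_1^{-k}-q_2^{-k})$ times $B\bigl(B(q_1^{-k}q_2^{-k}-q_1^{-k}-q_2^{-k})+1\bigr)$, which is not identically zero. Inverting this system expresses $c_k^{(1)},c_k^{(2)}$ in terms of the known lower coefficients and closes the induction.

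I expect the main obstacle to be precisely the non-degeneracy of this $2\times2$ system: one must carefully track the $u$-dependent constants $A,B$ produced by rescaling the classical factor of the $n=0$ summand and confirm that the determinant does not vanish identically (it does vanish on a thin set of $(u,q_1,q_2)$, which is harmless since the $c_k^{(i)}(u)$ are rational in $u$ and the recursion determines them for generic $u$). The degree identity $d_1(n)+d_2(n)=d_0+n^2$, though only a Gaussian computation, is the conceptual heart: it guarantees simultaneously that no $c_k^{(i)}$ leaks into the $n\neq0$ summands and that exactly the three available scalings suffice to solve for the two families of coefficients.
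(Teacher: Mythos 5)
Your proof follows essentially the same route as the paper's: expand in powers of $\Lambda^4$ after stripping the common classical prefactor, use the $n^2$ degree shift of the classical factors to show $c_k^{(1)},c_k^{(2)}$ enter only through the $n=0$ summand, and invert the $2\times 2$ linear system obtained from the three rescalings. The only remark worth making is that the $u$-dependent constants $A,B$ you single out as the main obstacle are in fact identically $1$ — under the rescalings $\Lambda\mapsto q_1^{-1/4}\Lambda$ and $\Lambda\mapsto q_2^{-1/4}\Lambda$ the two classical factors of the $n=0$ summand pick up the reciprocal constants $\exp\bigl(\pm\log^2 u/(4\log(q_2/q_1))\bigr)$, which cancel — so the determinant is exactly the paper's $u$-independent $(q_1^{-k}-1)(q_2^{-k}-1)(q_2^{-k}-q_1^{-k})$, nonvanishing throughout the sector $|q_1|\lessgtr 1$, $|q_2|\gtrless 1$, and neither your genericity-in-$u$ fallback nor the general-$B$ determinant (which, incidentally, should carry $B^2q_1^{-k}q_2^{-k}$ rather than $Bq_1^{-k}q_2^{-k}$, i.e.\ it factors as $-B(Bq_1^{-k}-1)(Bq_2^{-k}-1)(q_1^{-k}-q_2^{-k})$) is actually needed.
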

	
	\begin{proof}
		Let us take the coefficient of the power $\Lambda^{4k}$ in the relation \eqref{qCSdetermrel}, then the coefficients $c_k^{(1)}$, $c_k^{(2)}$ 
		will appear only for $n=0$. Other coefficients in these relations are known due to the induction supposition,
		therefore we have two linear equations for two unknown variables $c^{(1)}_k$, $c_k^{(2)}$.
		The fundamental matrix of these equations is
		\begin{equation}
		\begin{pmatrix}
		q_1^{-k}-1 & q_2^{-k}-1\\
		q_1^{-2k}-1 & q_2^{-2k}-1 
		\end{pmatrix}, 
		\end{equation}
		and its determinant is equal to $(q_1^{-k}-1)(q_2^{-k}-1)(q_2^{-k}-q_1^{-k})$
		which is non-zero iff none of $q_1,q_2, q_1/q_2$ is a root of unity. 
	\end{proof}
	In the sector $|q_1|\lessgtr1, |q_2|\gtrless1$, these special cases are not realized.

	\begin{proof}[Proof of the Proposition \ref{20equivprop}]
		In the Subsection \ref{ssec:qtaum2} we have proved relation \eqref{qNYD4diff}.  
		Excluding $c=1$ tau function from the relations \eqref{qNYtaupm}, \eqref{qNYD2diff}, \eqref{qNYD4diff} 
		we obtain relations on $c=-2$ tau functions which are equivalent to the relations on the Nekrasov functions
		\begin{equation}
		\begin{aligned}
		&(1-z)^{-1}\sum_{n\in\mathbb{Z}}\mathcal{Z}_0(uq^{-2n};q^{-1},q^2|q^2z) \mathcal{Z}_0(uq^{2n};q^{-2},q|q^{-2}z)=\\
		=&\sum_{n\in\mathbb{Z}}\mathcal{Z}_0(uq^{-2n};q^{-1},q^2|qz) \mathcal{Z}_0(uq^{2n};q^{-2},q|q^{-1}z)=\\
		=&\sum_{n\in\mathbb{Z}}\mathcal{Z}_0(uq^{-2n};q^{-1},q^2|z) \mathcal{Z}_0(uq^{2n};q^{-2},q|z). 
		\end{aligned}
		\end{equation}
		Let us replace $\mathcal{Z}_0$ with $\td{\mathcal{Z}}_2$ formally defined by \eqref{20equiv1}, \eqref{20equiv2}. Then, using formulas 
		\begin{equation}
		\frac{(q^{-1}z;q,q^{-2})_{\infty}(qz;q^{-1},q^2)_{\infty}}{(z;q,q^{-2})_{\infty}(z;q^{-1},q^2)_{\infty}}=1, \quad
		\frac{(q^{-2}z;q,q^{-2})_{\infty}(q^2z;q^{-1},q^2)_{\infty}}{(z;q,q^{-2})_{\infty}(z;q^{-1},q^2)_{\infty}}=\frac{1}{1-z},
		\end{equation}
		we obtain that $\td{\mathcal{Z}}_2$ satisfies \eqref{qCSdetermrel} with $q_1q_2=1$.
		Therefore due to the Lemma \ref{determlemma} $\td{\mathcal{Z}}_2=\mathcal{Z}_2$  (for general $q$).
		Hence relations \eqref{20equiv1}, \eqref{20equiv2} are proved.
		Relation \eqref{20equiv12} follows from the  equations \eqref{qNYCS3} and \eqref{qNY2} for $j=0$ on $\mathcal{Z}_2$ and $\mathcal{Z}_0$ in the case $q_1q_2=1$ via the the simple identity
		\begin{equation}
		(z;q^{-2},q)_{\infty}(z;q^{-1},q^2)_{\infty}=(z;q^{-1},q)_{\infty},
		\end{equation}
		which is due to \eqref{qtrans} and \eqref{permult}.
	\end{proof}
	
	\paragraph{Case $m=1$.}
	As it was observed in \cite{BGM18} in this case Toda equations \eqref{eq:qTodaCS}
	are equivalent to the Painlev\'e $A_7^{(1)}$ equation
	\begin{equation}
	\overline{\overline{\tau}}\underline{\underline{\tau}}=\tau^2-z^{1/2}\overline{\tau}\underline{\tau}.
	\end{equation}
	Note that this equation is not equivalent to the Painlev\'e $A_7^{(1)'}$ equation \eqref{eq:qToda}.
	
	The following theorem is an $m=1$ analog of Theorem \ref{qKievthm}. 
	\begin{thm}
		The function $\tau_{1;0}$ given by the formula \eqref{Kiev} with $\mathcal{Z}=\mathcal{Z}_1(u;q^{-1},q|z)$ satisfies Toda-like equation \eqref{eq:qTodaCSsg} for $m=1$.
	\end{thm}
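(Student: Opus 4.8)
The plan is to reproduce the argument of Proposition~\ref{qdoubleprop} (the $m=0$ case) in the Chern--Simons setting: to extract from the blowup relations a system of bilinear identities relating short $c=-2$ tau functions to $\tau_{1;0}$ and to its B\"acklund transform $\tau_{1;1}$, and then to eliminate the short tau functions so as to land on the single-variable Toda-like equation \eqref{eq:qTodaCSsg} for $m=1$. First I would collect the three relations \eqref{qNYCS1}, \eqref{qNYCS2}, \eqref{qNYCS3} for $m=1$ together with their half-integer ($n\in\mathbb{Z}+1/2$) counterparts, which come from the same results of \cite{NY05}, \cite{GNY06}, \cite{NY09}. Specializing to $q_1=q^{-1}$, $q_2=q$, each relation becomes a convolution of two Nekrasov functions with parameters $(q^{-1},q^2)$ and $(q,q^{-2})$. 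Two features distinguish $m=1$ sharply from $m=0$: because the full $q_1,q_2\mapsto q_1^{-1},q_2^{-1}$ symmetry of $\mathcal{Z}_1$ is broken (cf.\ \eqref{qloopassym}, \eqref{qclassym}), relations \eqref{qNYCS1} and \eqref{qNYCS3} no longer coincide as they did in the $m=0$ reduction, so all three must be kept; and the $\Lambda$-shifts translate into the \emph{half-integer} powers $q^{\pm1/2}$ (from \eqref{qNYCS2} and \eqref{qNYCS3}) and $q^{\pm3/2}$ (from \eqref{qNYCS1}) of the variable $z$. These half-integer $z$-shifts are exactly what will generate the $q^{m/2}=q^{1/2}$ shift in the last term of \eqref{eq:qTodaCSsg}.

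Next I would set
\[
\uptau_1^{\pm}(u,s|z)=\sum_{n\in\mathbb{Z}}s^{n/2}\mathcal{Z}_1(uq^{2n};q^{\mp1},q^{\pm2}|z),
\]
in analogy with \eqref{qtaum2pm}, and Fourier transform each relation by $u\mapsto uq^{2N}$ weighted with $s^N$. As in the derivation of \eqref{qNYtaupm}--\eqref{qNYD1diff}, sorting the summation index by parity separates an integer power-of-$s$ part, which reassembles $\tau_{1;0}$, from a half-integer part, which reassembles $\tau_{1;1}$ and carries the $z^{1/4}$ prefactors of the $j=1$ sector. Combining the $j=0,1$ sectors of \eqref{qNYCS2} and \eqref{qNYCS3} should give, up to the precise signs and $z^{1/4}$-powers,
\[
\uptau_1^+(q^{1/2}z)\,\uptau_1^-(q^{-1/2}z)=\tau_{1;0}-z^{1/4}\tau_{1;1},\qquad \uptau_1^+(q^{-1/2}z)\,\uptau_1^-(q^{1/2}z)=\tau_{1;0}+z^{1/4}\tau_{1;1}
\]
(with the common arguments $u,s$ suppressed), while \eqref{qNYCS1} supplies the analogous identities with the doubly shifted arguments $q^{\pm3/2}z$. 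Together these are the $m=1$ analogues of \eqref{qNYD2diff}, \eqref{qNYD1diff} and \eqref{qNYD4diff}.

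Finally I would eliminate $\uptau_1^{\pm}$ exactly as in the proof of Proposition~\ref{qdoubleprop}. Writing $\overline{\tau_{1;0}}\,\underline{\tau_{1;0}}$ as a quartic expression in shifted copies of $\uptau_1^{\pm}$ and regrouping it as one quarter of the square of a symmetric combination minus one quarter of the square of an antisymmetric one, the symmetric square should collapse to $\tau_{1;0}^2$ and the antisymmetric square to the coupled $u$- and $z$-shifted last term of \eqref{eq:qTodaCSsg}, completing the proof. The main obstacle lies precisely here. For $m=1$ there is no unshifted blowup relation (all three carry half-integer $z$-shifts), so the clean factorization $\tau=\uptau^+\uptau^-$ of \eqref{qNYtaupm} is unavailable; moreover the left-hand side of \eqref{eq:qTodaCSsg} demands full shifts $q^{\pm1}$, whereas the relations natively deliver $q^{\pm1/2}$ and $q^{\pm3/2}$. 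One must therefore use \eqref{qNYCS1} (the $q^{\pm3/2}$ sector) to bridge between the $q^{\pm3/2}$ and $q^{\pm1/2}$ products and verify that the quartic regroups so that exactly the $q^{m/2}$-shifted B\"acklund term survives with the correct coefficient $z^{1/2}$. A subsidiary task, flagged earlier in the text, is to fix the precise prefactors of the half-integer ($j=1$) Chern--Simons relations, which are only implicit in \cite{NY05}, \cite{GNY06}, \cite{NY09} for $m=1$.
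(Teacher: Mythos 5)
Your overall architecture matches the paper's: Fourier-transform the Chern--Simons blowup relations, define short $c=-2$ tau functions with $\mathcal{Z}_1$ in place of $\mathcal{Z}$, derive bilinear relations expressing $\tau_{1;0}$ and $\tau_{1;1}$ through them, and then eliminate. But the specific intermediate identities you posit are false, and this breaks the elimination step you propose. The half-integer-sector relations for $m=1$ are \emph{not} the $j=1$ sectors of \eqref{qNYCS2}, \eqref{qNYCS3} with the same $\Lambda$-shifts: the relations the paper actually uses, \eqref{qNYCShi}, carry $\Lambda$-shifts $q_i^{\mp 1/4}$, i.e.\ $z$-shifts $q^{\pm 1}$, whereas the integer-sector relations \eqref{qNYCS2}, \eqref{qNYCS3} give $z$-shifts $q^{\pm 1/2}$ and \eqref{qNYCS1} gives $q^{\pm 3/2}$. (The naive $j=1$ analogue of \eqref{qNYCS2} from \cite{GNY06} has zero on the left-hand side and trivializes at $q_1q_2=1$, so new relations, justified by a separate geometric argument from \cite{NY09}, are genuinely needed here; this is more than ``fixing prefactors''.) Consequently the derived relations are $2\tau_{1;0}(z)=\uptau^+(q^{1/2}z)\uptau^-(q^{-1/2}z)+\uptau^+(q^{-1/2}z)\uptau^-(q^{1/2}z)=\uptau^+(q^{3/2}z)\uptau^-(q^{-3/2}z)+\uptau^+(q^{-3/2}z)\uptau^-(q^{3/2}z)$ together with $-2z^{1/4}\tau_{1;1}(z)=\uptau^+(qz)\uptau^-(q^{-1}z)-\uptau^+(q^{-1}z)\uptau^-(qz)$: the sum relation and the difference relation involve products at \emph{different} pairs of arguments, so they cannot be recombined into your claimed $\uptau^+(q^{1/2}z)\uptau^-(q^{-1/2}z)=\tau_{1;0}-z^{1/4}\tau_{1;1}$.

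Because of this, the ``quarter of a symmetric square minus quarter of an antisymmetric square'' regrouping of Proposition \ref{qdoubleprop} does not carry over. In the paper's Proposition \ref{qdoubleconjCS}, the product $\tau_{1;0}(qz)\tau_{1;0}(q^{-1}z)$, the square $\tau_{1;0}(z)^2$ (which must be written as the product of the two \emph{distinct} expressions above --- this is exactly where the consistency between \eqref{qNYCS1} and \eqref{qNYCS2}/\eqref{qNYCS3} enters as the hypothesis replacing \eqref{qNYD2diffp}), and the B\"acklund term $z^{1/2}\tau_{1;1}(q^{1/2}z)\tau_{1;1}(q^{-1/2}z)$ all become quartics in $\uptau^{\pm}$ evaluated at the four points $q^{\pm1/2}z$, $q^{\pm3/2}z$, and the required identity is a four-argument Pl\"ucker-type cancellation checked by labeling each monomial by the positions of the $+$ superscripts --- not a difference of squares. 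You correctly flag that \eqref{qNYCS1} must bridge the mismatch of shifts, but as written your proposal neither produces the correct bilinear relations nor the correct quartic regrouping, which together constitute the actual content of the proof.
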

	\begin{proof}The substitution of \eqref{Kiev} into \eqref{eq:qTodaCSsg} leads to a bilinear
		relation on function $\mathcal{Z}_1$. As before we want to deduce it from the
		Nakajima-Yoshioka blowup relations. To do that we need not only relations \eqref{qNYCS1}, \eqref{qNYCS2}, \eqref{qNYCS3}
		for the integer sector but also relations in the half-integer sector as it was for the proof of the Proposition \ref{qdoubleprop}.
		
		Let us consider such relations for general $q_1, q_2$.
		There is analog \cite[(1.43)]{GNY06}(proved by the Theorem 2.11 in \cite{NY09}) of relation \eqref{qNY2} for $j=1$
		\begin{equation}
		\sum_{n\in\mathbb{Z}+1/2}\mathcal{Z}_m(uq_1^{2n};q_1,q_2q_1^{-1}|\Lambda) \mathcal{Z}_m(uq_2^{2n};q_1q_2^{-1},q_2|\Lambda)=0,
		\end{equation}
		which becomes trivial in the case $q_1q_2=1$. We will use another relations which have the form 
		\begin{equation}
		\begin{aligned}
		q_1^{-1}q_2^{-1}\Lambda\mathcal{Z}_m(u;q_1,q_2|\Lambda)&=\sum_{n\in\mathbb{Z}+1/2}\mathcal{Z}_m(uq_1^{2n};q_1,q_2q_1^{-1}|q_1^{-\frac14}\Lambda) \mathcal{Z}_m(uq_2^{2n};q_1q_2^{-1},q_2|q_2^{-\frac14}\Lambda), \\
		-q_1q_2\Lambda\mathcal{Z}_m(u;q_1,q_2|\Lambda)&=\sum_{n\in\mathbb{Z}+1/2}\mathcal{Z}_m(uq_1^{2n};q_1,q_2q_1^{-1}|q_1^{\frac14}\Lambda) \mathcal{Z}_m(uq_2^{2n};q_1q_2^{-1},q_2|q_2^{\frac14}\Lambda).\label{qNYCShi}
		\end{aligned}
		\end{equation} 
		We have not found these relations in the literature but they follow from the results of \cite{NY09}.\footnote{These relations correspond to  $r=2$, $d=0,2$, $(c_1,[C])=1$ and $m=1$ in terms of \cite[(1.43)]{GNY06}. The proof is based on \cite[Thm. 2.11b)]{NY09}. Here it was assumed that $0<d<r$ but the proof works for $d=0,r$ as well, except the last argument based on the vanishing of $f_{2*}(\det \mathcal{S}^{\otimes d})$. Recall that here $f_2$ is the Grassmanian $Gr(n,r)$ bundle (due to \cite[Prop. 1.2]{NY09}) and  $\mathcal{S}$ is a universal rank $n$ bundle. Hence for $d=0,r$ the sheaf $f_{2*}(\det \mathcal{S}^{\otimes d})$ becomes $\mathcal{O}$ up to degree shift due to	Borel-Bott-Weil theorem for the Grassmannians.  We are grateful to H. Nakajima for the explanation on this point.}
		
		Returning to the case $q_1q_2=1$ we see that relations \eqref{qNYCS2} and \eqref{qNYCS3} coincide. We rewrite these relations at the level of CS-modified $c=-2$ tau functions (analogs of \eqref{qtaum2pm} with $\mathcal{Z}_1$ instead of $\mathcal{Z}$)  
		\begin{equation}
		\begin{aligned}
		2\tau_{1;0}(z)=\uptau^+(q^{1/2}z)\uptau^-(q^{-1/2}z)+\uptau^+(q^{-1/2}z)\uptau^-(q^{1/2}z)=\\
		=\uptau^+(q^{3/2}z)\uptau^-(q^{-3/2}z)+\uptau^+(q^{-3/2}z)\uptau^-(q^{3/2}z). \label{qNYDCS2diff}
		\end{aligned}
		\end{equation}
		Two relations in \eqref{qNYCShi} also coincide. In terms of tau functions we obtain
		\begin{equation}
		-2z^{1/4}\tau_{1;1}(z)=\uptau^+(qz)\uptau^-(q^{-1}z)-\uptau^+(q^{-1}z)\uptau^-(qz). \label{qNYDCS1diff}
		\end{equation}
		
		\begin{prop}
			Let $\uptau^{\pm}$ satisfy the second equality in \eqref{qNYDCS2diff}. Then  $\tau_{1;0}(z)$ and $\tau_{1;1}(z)$ defined by \eqref{qNYDCS2diff}, \eqref{qNYDCS1diff} correspondingly 
			satisfy Toda-like equation \eqref{eq:qTodaCS} for $m=1$,  $j=0$. \label{qdoubleconjCS}
		\end{prop}
		
		\begin{proof}
			Substituting \eqref{qNYDCS1diff}, \eqref{qNYDCS2diff} to \eqref{eq:qTodaCS} in this case,
			we obtain the equation
			\begin{equation}
			\begin{aligned}
			(\uptau^+(q^{3/2}z)\uptau^-(q^{1/2}z)+\uptau^+(q^{1/2}z)\uptau^-(q^{3/2}z)) (\uptau^+(q^{-1/2}z)\uptau^-(q^{-3/2}z)+\uptau^+(q^{-3/2}z)\uptau^-(q^{-1/2}z))=\\
			=(\uptau^+(q^{1/2}z)\uptau^-(q^{-1/2}z)+\uptau^+(q^{-1/2}z)\uptau^-(q^{1/2}z))(\uptau^+(q^{3/2}z)\uptau^-(q^{-3/2}z)+\uptau^+(q^{-3/2}z)\uptau^-(q^{3/2}z))-\\
			-(\uptau^+(q^{3/2}z)\uptau^-(q^{-1/2}z)-\uptau^+(q^{-1/2}z)\uptau^-(q^{3/2}z))(\uptau^+(q^{1/2}z)\uptau^-(q^{-3/2}z)-\uptau^+(q^{-3/2}z)\uptau^-(q^{1/2}z)).
			\end{aligned}
			\end{equation}
			To see that this identity holds, note that upon expanding the parentheses each summand
			is of the form $\uptau^{\eta_1}(q^{3/2}z)\uptau^{\eta_2}(q^{1/2}z)\uptau^{\eta_3}(q^{-1/2}z)\uptau^{\eta_4}(q^{-3/2}z)$, 
			where $\eta_{1,2,3,4}$ are signs and there are two signs "+" and two signs "-" in each summand. Therefore, we may label each summand by the
			positions of "+". In these notations, the previous relation is the identity
			\begin{equation}
		  (13)+(14)+(23)+(24)=(12)+(24)+(13)+(34)-(12)+(14)+(23)-(34). 
			\end{equation}
		\end{proof}
		Since we know that the functions $\uptau^\pm$ defined by CS-modified \eqref{qtaum2pm} and $\tau$ defined by CS-modified \eqref{Kiev} satisfy \eqref{qNYDCS2diff}, \eqref{qNYDCS1diff} then it follows from the above proposition that $\tau$ satisfy Toda-like equation \eqref{eq:qTodaCSsg} for $m=1$.\end{proof}
	
	\subsection{$q$-Painlev\'e $A_7^{(1)'}$ $c={-}2$ tau functions and $q$-Painlev\'e $A_3^{(1)}$ equation} 
	\label{ssec:clust2}
	Recall that $q$-Painlev\'e $A_3^{(1)}$ is a term for $q$-Painlev\'e VI equation. In this section, we study a surprising connection of $c=-2$ $A_7^{(1)'}$ tau functions (introduced above) to this equation.
	
	Let us rewrite equations \eqref{qNYD2diff}, \eqref{qNYD1diff} in the form
	\begin{align}
	\overline{\uptau_0^+}=\frac{\uptau_0^+\uptau_0^- -z^{1/4}\uptau_1^+\uptau_1^-}{\underline{\uptau_0^-}}, \label{cd0p}\\
	\overline{\uptau_0^-}=\frac{\uptau_0^+\uptau_0^- +z^{1/4}\uptau_1^+\uptau_1^-}{\underline{\uptau_0^+}}, \label{cd0m}
	\end{align}
	where we have introduced notations $\uptau_0^{\pm}=\uptau^{\pm}$, $\uptau_1^{\pm}(u,s|z)=s^{1/4}\uptau^{\pm}(uq,s|z)$.
	Also we have one more pair of equations, which are B\"acklund transformed equations \eqref{cd0p}, \eqref{cd0m}
	\begin{align}
	\overline{\uptau_1^+}=\frac{\uptau_1^+\uptau_1^- -z^{1/4}\uptau_0^+\uptau_0^-}{\underline{\uptau_1^-}} \label{cd1p},\\
	\overline{\uptau_1^-}=\frac{\uptau_1^+\uptau_1^- +z^{1/4}\uptau_0^+\uptau_0^-}{\underline{\uptau_1^+}}. \label{cd1m}
	\end{align}
	We have obtained a closed system of four $q$-difference equations of second order on the tuple $(\uptau_0^+,\uptau_0^-,\uptau_1^+,\uptau_1^-)$.
	This is the same as closed system of eight $q$-difference equations of first order on $(\uptau_0^+,\uptau_0^-,\uptau_1^+,\uptau_1^-,
	\underline{\uptau_0^+}, \underline{\uptau_0^-}, \underline{\uptau_1^+}, \underline{\uptau_1^-})$.
	Actually this system is a particular case of $q$-Painlev\'e VI equation. To show that, we will need
	bilinear (or tau) form of $q$-Painlev\'e VI equation.
	This form was basically introduced in \cite{TM06}, \cite{S98}. Since we are interested in the solutions of this equation, it is more convenient for us to follow the exposition of \cite{JNS17}. 
	
	The $q$-Painlev\'e VI equation in tau form is a system of eight first order $q$-difference equations 
	on the tuple $(\tau_1,\tau_2,\tau_3,\tau_4,\tau_5,\tau_6,\tau_7,\tau_8)$ 
	\cite[Eq.(3.16)-(3.23)]{JNS17}
	\begin{equation}
	\begin{aligned}
	&\tau_1\tau_2-t\tau_3\tau_4=(1-q^{-2\theta_t}t)\underline{\tau_5}\overline{\tau_6},\\
	&\tau_1\tau_2-q^{-2\theta_1}t\tau_3\tau_4=(1-q^{-2\theta_1}t)\tau_5\tau_6,\\
	&\tau_1\tau_2-\tau_3\tau_4=-q^{2\theta_t}(1-q^{-2\theta_1}t)\underline{\tau_7}\overline{\tau_8},\\
	&\tau_1\tau_2-q^{2\theta_t}\tau_3\tau_4=-q^{2\theta_t}(1-q^{-2\theta_t}t)\tau_7\tau_8,\\
	&\underline{\tau_5}\tau_6+tq^{-\theta_1-\theta_{\infty}+\theta_t-1/2}\underline{\tau_7}\tau_8=\underline{\tau_1}\tau_2,\\
	&\underline{\tau_5}\tau_6+tq^{-\theta_1+\theta_{\infty}+\theta_t-1/2}\underline{\tau_7}\tau_8=\tau_1\underline{\tau_2},\\
	&\underline{\tau_5}\tau_6+q^{\theta_0+2\theta_t}\underline{\tau_7}\tau_8=q^{\theta_t}\underline{\tau_3}\tau_4,\\
	&\underline{\tau_5}\tau_6+q^{-\theta_0+2\theta_t}\underline{\tau_7}\tau_8=q^{\theta_t}\tau_3\underline{\tau_4}, \label{qPVI}
	\end{aligned}
	\end{equation}
	where $t$ is an independent variable.  
	Let us assign to each tau function $\tau_i,\, i=1,\ldots 8$ a tuple ${\boldsymbol{\theta}}=(\theta_0,\theta_t,\theta_1,\theta_{\infty})$ as follows
\begin{center}
\begin{tabular}{|c|c|c|c|c|c|c|c|c|}
\hline Tau function & $\tau_1$ & $\tau_2$& $\tau_3$& $\tau_4$ &$\tau_5$&$\tau_6$& $\tau_7$ &$\tau_8$\\
\hline Tuple  & $\boldsymbol{\theta_{\infty}^{\uparrow}}$ & $\boldsymbol{\theta_{\infty}^{\downarrow}}$& $\boldsymbol{\theta_0^{\uparrow}}$& $\boldsymbol{\theta_0^{\downarrow}}$ &$\boldsymbol{\theta_1^{\downarrow}}$&$\boldsymbol{\theta_1^{\uparrow}}$& $\boldsymbol{\theta_t^{\downarrow}}$ &$\boldsymbol{\theta_t^{\uparrow}}$ \\
\hline
\end{tabular}\, , 
\end{center}
where $\boldsymbol{\theta_k^{\uparrow}}$ and $\boldsymbol{\theta_k^{\downarrow}}, \, k=0,t,1,\infty$ denotes that $(\boldsymbol{\theta_k^{\uparrow}})_l=\theta_l+\frac12\delta_{k,l}$ and $(\boldsymbol{\theta_k^{\downarrow}})_l=\theta_l-\frac12\delta_{k,l}$. For a few formulas below we will use notations like $\tau^{\uparrow}_{\infty}=\tau_1$, $(\boldsymbol{\theta^{\uparrow}_{\infty}})_l=\theta_{1,l}$ in parallel. 
	
	Let us now change the normalizations for convenience.  
	Denote the tau functions which satisfy the system \eqref{qPVI}  by $\tau_i^{JNS},\, i=1,\ldots 8$ and the independent  variable by $t^{JNS}$. Make the substitution  $t^{JNS}=q^{\theta_{t}+\theta_1} t$ and
	\begin{equation}
	\begin{aligned}
	\tau_i(t)&=\lmb(\boldsymbol{\theta_i})(q^{\theta_{i,t}+\theta_{i,1}}t)^{\theta_{i,0}^2+\theta_{i,t}^2}\prod_{\epsilon=\pm1}(q^{1+\epsilon(\theta_{i,1}-\theta_{i,t})}t;q;q)_{\infty} \tau_i^{JNS}(q^{\theta_{i,t}+\theta_{i,1}}t),&\, i&=1,\ldots4\\
	\tau_i(t)&=\lmb(\boldsymbol{\theta_i})(q^{\theta_{i,t}+\theta_{i,t}}q^{-\frac12}t)^{\theta_{i,0}^2+\theta_{i,t}^2}\prod_{\epsilon=\pm1}(q^{1+\epsilon(\theta_{i,1}-\theta_{i,t})}q^{-\frac12}t;q;q)_{\infty} \tau_i^{JNS}(q^{\theta_{i,t}+\theta_{i,1}}q^{-\frac12}t),&\, i&=5,\ldots 8 \label{substtau}
	\end{aligned}
	\end{equation}
	where the function $\lmb$ 
	should satisfy
	relations
	\begin{equation}
	    \begin{aligned}
	    \frac{\lmb(\boldsymbol{\theta^{\uparrow}_{0}})\lmb(\boldsymbol{\theta_{0}^{\downarrow}})}{\lmb(\boldsymbol{\theta^{\uparrow}_{\infty}})\lmb(\boldsymbol{\theta_{\infty}^{\downarrow}})}=q^{\frac12(\theta_t-\theta_1)},\quad 
	    \frac{\lmb(\boldsymbol{\theta_1^{\uparrow}})\lmb(\boldsymbol{\theta_1^{\downarrow}})}{\lmb(\boldsymbol{\theta_{\infty}^{\uparrow}})\lmb(\boldsymbol{\theta_{\infty}^{\downarrow}})}=1,\quad
	    \frac{\lmb(\boldsymbol{\theta_t^{\uparrow}})\lmb(\boldsymbol{\theta_t^{\downarrow}})}{\lmb(\boldsymbol{\theta_0^{\uparrow}})\lmb(\boldsymbol{\theta_0^{\downarrow}})}=1.
	    \end{aligned}
	\end{equation}
	The specific choice of $\lmb$ is not essential for the below considerations. We can take, for example,
	$\lmb(\boldsymbol{\theta})=q^{-\theta_1(\theta_0^2+\theta_t^2)-\theta_t(\theta_1^2+\theta_{\infty}^2)}$. Other possible choice, which is analytic in $q^{\theta_k},\,k=0,t,1,\infty$ is expressed in terms of the elliptic Gamma functions
	(see \eqref{ellGamma} and \eqref{eq:ellGamma_shift})
	\begin{equation}
	\lmb(\boldsymbol{\theta})^{-1}=\prod_{\epsilon=\pm1}\Gamma(q^{\frac14(\frac12+\theta_t+\epsilon(\theta_0+\frac14))};q^{\frac18},q^{\frac18})\Gamma(q^{\frac14(\frac12+\theta_1+\epsilon(\theta_{\infty}+\frac14))};q^{\frac18},q^{\frac18}). 
	\end{equation}
	Note that in the substitution \eqref{substtau} the argument in $\tau_i, i=1,\ldots 4,6,8$ is $q^{\theta_{t}+\theta_1} t$, 
	but in $\tau_i, i=5,7$ it differs by $q^{-1}$. 
	
	Then the system of equations \eqref{qPVI} transforms into the system
	\begin{equation}
	   \begin{aligned}
	   \tau^{\uparrow}_{\infty}\tau^{\downarrow}_{\infty}- q^{\theta_1}t^{1/2}\tau^{\uparrow}_0\tau^{\downarrow}_0=\overline{\tau_1^{\uparrow}}\tau_1^{\downarrow}\\
	   \tau^{\uparrow}_{\infty}\tau^{\downarrow}_{\infty}- q^{-\theta_1}t^{1/2}\tau_0^{\uparrow}\tau_0^{\downarrow}=\tau^{\uparrow}_1 \overline{\tau^{\downarrow}_1}\\
       \tau^{\uparrow}_0\tau^{\downarrow}_0-q^{\theta_t}t^{1/2}\tau^{\uparrow}_{\infty}\tau^{\downarrow}_{\infty}=\overline{\tau^{\uparrow}_{t}}\tau_t^{\downarrow}\\
       \tau^{\uparrow}_0\tau^{\downarrow}_0-q^{-\theta_t}t^{1/2}\tau^{\uparrow}_{\infty}\tau^{\downarrow}_{\infty}=\tau^{\uparrow}_t \overline{\tau^{\downarrow}_t}\\ 
       \tau^{\uparrow}_1\tau_1^{\downarrow}+ q^{-\theta_{\infty}-1/4}t^{1/2}\tau^{\uparrow}_t\tau_t^{\downarrow}=\underline{\tau_{\infty}^{\uparrow}}\tau_{\infty}^{\downarrow}\\
       \tau^{\uparrow}_1\tau_1^{\downarrow}+ q^{\theta_{\infty}-1/4}t^{1/2}\tau^{\uparrow}_t\tau_t^{\downarrow}=\tau_{\infty}^{\uparrow}\underline{\tau_{\infty}^{\downarrow}}\\
      \tau^{\uparrow}_t\tau_t^{\downarrow}+ q^{-\theta_0-1/4}t^{1/2} \tau^{\uparrow}_1\tau_1^{\downarrow}=\underline{\tau_0^{\uparrow}}\tau_0^{\downarrow}\\
      \tau^{\uparrow}_t\tau_t^{\downarrow}+ q^{\theta_0-1/4}t^{1/2} \tau^{\uparrow}_1\tau_1^{\downarrow}=\tau_0^{\uparrow}\underline{\tau_0^{\downarrow}}
       \label{qPVIm}
	   \end{aligned}
	\end{equation}
	We make this substitution mainly to remove non-monomial coefficients
	like $(1-q^{\ldots})$ from these equations and to make it more symmetric and natural. 
	We will use below only this normalization of the tau function. 
	\begin{prop}\label{VI2eq}
		Consider the tuple $(\tau_{\infty}^{\uparrow},\tau_{\infty}^{\downarrow},\tau_{0}^{\uparrow},\tau_0^{\downarrow},\tau_{1}^{\downarrow},\tau_1^{\uparrow},\tau_t^{\downarrow},\tau_t^{\uparrow})=(\uptau^+_0,\uptau^-_0,\uptau^+_1,\uptau^-_1,\underline{\uptau^+_0},\underline{\uptau^-_0},\underline{\uptau^+_1},\underline{\uptau^-_1)}$, where the functions $\uptau_0^{\pm}, \uptau_1^{\pm}$ satisfy \eqref{cd0p}, \eqref{cd0m}, \eqref{cd1p},\eqref{cd1m}. This tuple is a solution of 
		\eqref{qPVIm} in the case $q^{\theta_0}=q^{\theta_t}=q^{\theta_1}=q^{\theta_{\infty}}=i$
		under the identification $t^{1/2}=iz^{1/4}$. 
	\end{prop}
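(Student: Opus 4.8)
The proposition is an identity to be checked term by term, so the plan is a direct substitution of the tuple and of the specializations $q^{\theta_0}=q^{\theta_t}=q^{\theta_1}=q^{\theta_\infty}=i$, $t^{1/2}=iz^{1/4}$ into the eight equations \eqref{qPVIm}. The one genuinely delicate point, which I would settle first, is the meaning of the shift operators inside \eqref{qPVIm}. There $\overline{\,\cdot\,}$ and $\underline{\,\cdot\,}$ denote the shifts $t\mapsto q^{\pm 1}t$ of the $q$-Painlev\'e VI time $t$, whereas the bars appearing in the tuple (such as $\tau_1^{\downarrow}=\underline{\uptau_0^+}$) and in the hypotheses \eqref{cd0p}--\eqref{cd1m} are the $z$-shifts $z\mapsto q^{\pm 1}z$. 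Because $t^{1/2}=iz^{1/4}$ forces $z=t^2$, the time shift $t\mapsto qt$ acts as $z\mapsto q^2z$, i.e. it multiplies $z^{1/4}$ by $q^{1/2}$. Consequently a $t$-shift of \eqref{qPVIm} composed with the single $z$-underbar hidden in a tuple entry produces a \emph{net} $z$-shift by $q$; this collapse is what will align the two sides.

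With that understood, I would record the scalar prefactors. Since $q^{\theta_k}=i$ and $q^{-\theta_k}=-i$, one finds $q^{\theta_1}t^{1/2}=q^{\theta_t}t^{1/2}=-z^{1/4}$ and $q^{-\theta_1}t^{1/2}=q^{-\theta_t}t^{1/2}=z^{1/4}$, while for the last four equations $q^{-\theta_\infty-1/4}t^{1/2}=q^{-\theta_0-1/4}t^{1/2}=q^{-1/4}z^{1/4}$ and $q^{\theta_\infty-1/4}t^{1/2}=q^{\theta_0-1/4}t^{1/2}=-q^{-1/4}z^{1/4}$. Feeding these together with the tuple into the first four lines of \eqref{qPVIm} and using the net-shift collapse (so that, e.g., $\overline{\tau_1^{\uparrow}}=\overline{\underline{\uptau_0^-}}$ becomes the $z$-shift $\overline{\uptau_0^-}$), the four equations reduce \emph{verbatim} to the hypotheses: line~1 is \eqref{cd0m}, line~2 is \eqref{cd0p}, line~3 is \eqref{cd1m} and line~4 is \eqref{cd1p}.

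For the last four lines I would note that their right-hand sides carry a double $z$-underbar, since e.g. $\underline{\tau_\infty^{\uparrow}}=\uptau_0^+(q^{-2}z)$, and that the prefactor $q^{-1/4}z^{1/4}$ is precisely $(q^{-1}z)^{1/4}$. Each of these equations is then recognized as the image of one of \eqref{cd0m}, \eqref{cd0p}, \eqref{cd1m}, \eqref{cd1p} under the substitution $z\mapsto q^{-1}z$: concretely, applying $z\mapsto q^{-1}z$ to \eqref{cd0m} gives $\uptau_0^-\,\underline{\underline{\uptau_0^+}}=\underline{\uptau_0^+}\,\underline{\uptau_0^-}+q^{-1/4}z^{1/4}\,\underline{\uptau_1^+}\,\underline{\uptau_1^-}$, which is exactly line~5, and similarly lines~6, 7, 8 are the $z\mapsto q^{-1}z$ shifts of \eqref{cd0p}, \eqref{cd1m}, \eqref{cd1p}. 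Since the hypotheses hold identically in $z$, so do these shifted copies, and the verification is complete.

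The only real obstacle is the shift-and-phase bookkeeping: one must keep straight three distinct conventions --- the $q^{\pm1}$ $z$-shifts in the hypotheses \eqref{cd0p}--\eqref{cd1m}, the $q^{\pm2}$ $z$-shifts encoded by $\overline{\,\cdot\,},\underline{\,\cdot\,}$ in \eqref{qPVIm}, and the single $q^{-1}$ $z$-shift built into the tuple entries $\tau_1^{\downarrow}=\underline{\uptau_0^+}$, $\tau_t^{\downarrow}=\underline{\uptau_1^+}$, and so on --- while simultaneously tracking the powers of $i$ produced by $q^{\theta_k}=i$ and the $q^{-1/4}$ factors that must reassemble into $(q^{-1}z)^{1/4}$. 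Once these are pinned down the argument is purely mechanical; in particular, because \eqref{cd0p}--\eqref{cd1m} are assumed, no appeal to the explicit series \eqref{qtaum2pm} is required.
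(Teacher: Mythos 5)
Your proposal is correct and follows essentially the same route as the paper: substitute $t^{1/2}=iz^{1/4}$, observe that the $t$-shift becomes a $z\mapsto q^{2}z$ shift which combines with the single $z$-underbars in the tuple to give net single shifts, and then recognize the first four equations as \eqref{cd0p}--\eqref{cd1m} and the last four as their images under a unit $z$-shift (the paper shifts the equations by $z\mapsto qz$ where you shift the hypotheses by $z\mapsto q^{-1}z$, which is the same bookkeeping).
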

	\begin{proof}
		Let us substitute $t^{1/2}=iz^{1/4}$ to \eqref{qPVIm}.
		Note that this naturally leads to notation change $\overline{f(t)}=f(qt)=
		f(iqz^{1/2}):=f(-q^2z)=\overline{\overline{f(-z)}}$. Then we obtain  
		\begin{equation}
		\begin{aligned}
		&\tau^{\uparrow}_{\infty}\tau^{\downarrow}_{\infty}+z^{1/4}\tau^{\uparrow}_0\tau^{\downarrow}_0=\tau_1^{\downarrow}\overline{\overline{\tau_1^{\uparrow}}},\qquad
		&\tau^{\uparrow}_{\infty}\tau^{\downarrow}_{\infty}-z^{1/4}\tau^{\uparrow}_0\tau^{\downarrow}_0=\overline{\overline{\tau_1^{\downarrow}}}\tau_1^{\uparrow},\\
		&\tau^{\uparrow}_{0}\tau^{\downarrow}_{0}+z^{1/4}\tau^{\uparrow}_{\infty}\tau^{\downarrow}_{\infty}=\tau_t^{\downarrow}\overline{\overline{\tau_t^{\uparrow}}},\qquad
		&\tau^{\uparrow}_{0}\tau^{\downarrow}_{0}-z^{1/4}\tau^{\uparrow}_{\infty}\tau^{\downarrow}_{\infty}=\overline{\overline{\tau_t^{\downarrow}}}\tau_t^{\uparrow},\\
		&\overline{\tau_1^{\downarrow}}\overline{\tau_1^{\uparrow}}+z^{1/4}\overline{\tau_t^{\downarrow}}\overline{\tau_t^{\uparrow}}=\underline{\tau_{\infty}^{\uparrow}}\overline{\tau_{\infty}^{\downarrow}},\qquad
		&\overline{\tau_1^{\downarrow}}\overline{\tau_1^{\uparrow}}-z^{1/4}\overline{\tau_t^{\downarrow}}\overline{\tau_t^{\uparrow}}=\overline{\tau_{\infty}^{\uparrow}}\underline{\tau_{\infty}^{\downarrow}},\\
		&\overline{\tau_t^{\downarrow}}\overline{\tau_t^{\uparrow}}+z^{1/4}\overline{\tau_1^{\downarrow}}\overline{\tau_1^{\uparrow}}=\underline{\tau_{0}^{\uparrow}}\overline{\tau_{0}^{\downarrow}},\qquad
		&\overline{\tau_t^{\downarrow}}\overline{\tau_t^{\uparrow}}-z^{1/4}\overline{\tau_1^{\downarrow}}\overline{\tau_1^{\uparrow}}=\overline{\tau_{0}^{\uparrow}}\underline{\tau_{0}^{\downarrow}},
		\end{aligned} 
		\end{equation}
		where we have also transformed $z\mapsto qz$ in the last four equations.
		We can take the ansatz $(\tau_1^{\downarrow},\tau_1^{\uparrow},\tau_t^{\downarrow},\tau_t^{\uparrow})=(\underline{\tau_{\infty}^{\uparrow}},\underline{\tau_{\infty}^{\downarrow}},\underline{\tau_0^{\uparrow}},\underline{\tau_0^{\downarrow}})$
		under which the first four equations and the last became equivalent.
		But these are just equations \eqref{cd0p}, \eqref{cd0m}, \eqref{cd1p}, \eqref{cd1m} on
		$(\tau_{\infty}^{\uparrow},\tau_{\infty}^{\downarrow},\tau_0^{\uparrow},\tau_0^{\uparrow})=(\uptau^+_0,\uptau^-_0,\uptau^+_1,\uptau^-_1)$.
		Therefore, we obtain a solution of the $q$-Painlev\'e VI \eqref{qPVIm}.
	\end{proof}
	\begin{Remark}Note that the dynamics of $c=-2$ tau functions considered as $q$-Painlev\'e VI tau functions is a "square root" of the standard $q$-Painlev\'e VI dynamics. This is manifested in the relation $t=-z^{1/2}$. This "square root" belongs to the full symmetry group of the $q$-Painlev\'e VI equation but does not belong to the normal subgroup of translations. But for special values of parameters $q^\theta_i$ this "square root" can be viewed as a dynamics.
	
	The relation between $q$-Painlev\'e $A_7^{(1)'}$ equation and $q$-Painlev\'e $A_3^{(1)}$ equation is similar to the folding transformation \cite{TOS05}. We hope to discuss this elsewhere.
	\end{Remark}
	It was proposed in \cite{JNS17} that solutions of $q$-Painlev\'e VI can be written in terms of 
	a single tau function $\tau(\boldsymbol{\theta};\sg,s|z)$. This tau function is given by \eqref{Kiev}
	where partition function is 5d Nekrasov function with $4$ matter fields and with condition $q_1q_2=1$ 
	\begin{equation}
	\tau(\boldsymbol{\theta};\sg,s|t)
	=\lmb(\boldsymbol{\theta})\sum_{n\in\mathbb{Z}}s^n (q^{\theta_t+\theta_1}t)^{(\sg+n)^2}\times C(\boldsymbol{\theta}|\sg+n)
	\times\mathcal{Z}_{tv}(\boldsymbol{\theta},\sg+n|t).\label{qPVItau}
	\end{equation}
	The first multiplier here is $\mathcal{Z}_{cl}$,
	which differs from the standard $t^{(\sg+n)^2-\theta_0^2-\theta_t^2}$
	by the factor $t^{\theta_0^2+\theta_t^2}$, which agrees with the 
	substitution \eqref{substtau}.
	The second multiplier is
	$\mathcal{Z}_{1-loop}$ given by
	\begin{equation}
	\begin{aligned}
	&C(\boldsymbol{\theta}|\sg)=
	\frac{\prod\limits_{\epsilon,\epsilon'=\pm1}
		(q^{1+\epsilon\theta_{\infty}-\theta_1+\epsilon'\sg};q,q)_{\infty}(q^{1+\epsilon\theta_{0}-\theta_t+\epsilon'\sg};q,q)_{\infty}}
	{(q^{1-2\sg};q,q)_{\infty}(q^{1+2\sg};q,q)_{\infty}}.
	\end{aligned}
	\end{equation}
	The third multiplier $\mathcal{Z}_{tv}$ is given by the product of two $q$-Pochhammer symbols and the instanton part $\mathcal{Z}_{inst}$, given by the standard expression
	\begin{equation}
	\begin{aligned}
	&\mathcal{Z}_{tv}(\boldsymbol{\theta},\sg|t)=\prod_{\epsilon=\pm1}(q^{1+\epsilon(\theta_1-\theta_t)}t;q;q)_{\infty} \mathcal{Z}_{inst}(\boldsymbol{\theta},\sg|q^{\theta_t+\theta_1}t),\\
    &\mathcal{Z}_{inst}(\boldsymbol{\theta},\sg|t)=\sum_{\lmb^{(1)},\lmb^{(2)}}t^{|\lmb^{(1)}|+|\lmb^{(2)}|}
	\prod_{\epsilon,\epsilon'=\pm1}\frac{\mathsf{N}_{\emptyset,\lmb^{(\epsilon')}}(q^{\epsilon\theta_{\infty}-\theta_1-\epsilon'\sg};q^{-1},q)
		\mathsf{N}_{\lmb^{(\epsilon')},\emptyset}(q^{\epsilon'\sg-\theta_t-\epsilon\theta_0};q^{-1},q)}
	{\mathsf{N}_{\lmb^{(\epsilon)},\lmb^{(\epsilon')}}(q^{(\epsilon-\epsilon')\sg};q^{-1},q)}.\\
	\end{aligned}
	\end{equation}
	Note that the normalization of the 5d instanton
	partition function given by $\mathcal{Z}_{tv}$ is the same as the normalization arising from the topological vertex approach \cite[(3.47)]{MPTY14}. This normalization is consistent with the
	substitution \eqref{substtau}.
	There is a conjecture that $\mathcal{Z}_{tv}$ has
	$SO(8)$ symmetry. Below we will use the symmetries of the form
	\begin{align}
	&\mathcal{Z}_{tv}(\theta_0,\theta_t,\theta_1,\theta_{\infty},\sg|t)=\mathcal{Z}_{tv}(\theta_t,\theta_0,\theta_{\infty},\theta_{1},\sg|t),  \label{pchs} \\
	&\mathcal{Z}_{tv}(\theta_0,\theta_t,\theta_{1},\theta_{\infty},\sg|t)=\mathcal{Z}_{tv}(\theta_{\infty},\theta_1,\theta_{t},\theta_{0},\sg|t).
	\label{dpairtr}
	\end{align}

	\begin{conj}\cite{JNS17}
		Eight tau functions
		\begin{equation}
		\begin{aligned}
		&\tau^{\uparrow}_{\infty}=\tau(\boldsymbol{\theta^{\uparrow}_{\infty}};\sg,s|t),&\qquad &\tau^{\downarrow}_{\infty}=\tau(\boldsymbol{\theta^{\downarrow}_{\infty}};\sg,s|t),\\
		&\tau^{\uparrow}_{0}=\tau(\boldsymbol{\theta^{\uparrow}_{0}};\sg+1/2,s|t),&\qquad &\tau^{\downarrow}_{0}=\tau(\boldsymbol{\theta^{\downarrow}_{0}};\sg-1/2,s|t),\\
		&\tau^{\downarrow}_{1}=\tau(\boldsymbol{\theta^{\downarrow}_{1}};\sg,s|q^{-\frac12}t),&\qquad &\tau^{\uparrow}_{1}=\tau(\boldsymbol{\theta^{\uparrow}_{1}};\sg,s|q^{-\frac12}t),\\
		&\tau^{\downarrow}_{t}=\tau(\boldsymbol{\theta^{\downarrow}_{t}};\sg+1/2,s|q^{-\frac12}t),&\qquad &\tau^{\uparrow}_{t}=\tau(\boldsymbol{\theta^{\uparrow}_{t}};\sg-1/2,s|q^{-\frac12}t)
		\end{aligned}
		\end{equation}
		satisfy $q$-Painlev\'e VI equation in the tau form \eqref{qPVIm}.\label{JNSconj}
	\end{conj}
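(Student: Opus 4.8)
The plan is to establish Conjecture~\ref{JNSconj} by the same blowup-relation mechanism used for the pure gauge cases in Propositions~\ref{doubleprop} and~\ref{qdoubleprop}, now applied to the five-dimensional $SU(2)$ theory with $N_f=4$ fundamental matter whose Nekrasov function enters \eqref{qPVItau}. First I would substitute the Fourier series \eqref{qPVItau} into each of the eight bilinear equations \eqref{qPVIm}. Since every term is bilinear in the tau functions and each tau function is a sum $\sum_n s^n \mathcal{Z}_{tv}$, every equation turns into a double Fourier sum $\sum_{n_1,n_2}s^{n_1+n_2}\,\mathcal{Z}_{tv}\,\mathcal{Z}_{tv}$, with the two factors carrying the shifts $\boldsymbol{\theta}^{\uparrow/\downarrow}$ and the half-integer shifts of $\sg$ dictated by the table preceding \eqref{qPVIm}. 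Passing to the variables $n_{\pm}=\tfrac12(n_1\pm n_2)$ exactly as in \eqref{Ftrcalc} and requiring the identity to hold order by order in $s$, i.e. separately for each total charge $n_+$, reduces each of the eight equations to a single bilinear relation on the matter Nekrasov functions, summed over the relative index $n_-$.

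The second step is to recognise these Fourier-transformed relations as Nakajima--Yoshioka blowup equations for the $N_f=4$ theory specialised to $\epsilon_1+\epsilon_2=0$. For the matter case the relevant identities are the ones of \cite{NY05}, \cite{GNY06}, \cite{NY09}, in which the four hypermultiplet masses---here the combinations $q^{\pm\theta_0\pm\theta_t}$ and $q^{\pm\theta_1\pm\theta_\infty}$ appearing in $C(\boldsymbol{\theta}|\sg)$ and in $\mathcal{Z}_{inst}$---are transported through the blowup alongside the Coulomb modulus. Just as \eqref{qNY1}--\eqref{qNY3} supplied both a $j=0$ and a $j=1$ sector, the integer-shift blowup equations should reproduce the equations of \eqref{qPVIm} in which $\sg$ is shifted by an integer and the argument stays $t$, while the half-integer-shift equations should reproduce those in which $\sg$ is shifted by $\tfrac12$ and the argument by $q^{-1/2}$.

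The crux is the parameter matching. One must verify that the labels $\boldsymbol{\theta}\mapsto\boldsymbol{\theta}^{\uparrow/\downarrow}$, $\sg\mapsto\sg\pm\tfrac12$ and $t\mapsto q^{-1/2}t$ attached to the eight tau functions are exactly the shifts of masses, Coulomb modulus and instanton counting parameter induced by the $N_f=4$ blowup at $\epsilon_1+\epsilon_2=0$, and that the monomial coefficients $q^{\pm\theta_k}t^{1/2}$ of \eqref{qPVIm} are reproduced once the blowup prefactors are combined with the normalisation \eqref{substtau}; the asymmetry of $\mathcal{Z}_{1-loop}$ and $\mathcal{Z}_{cl}$ under $q_i\mapsto q_i^{-1}$ recorded in \eqref{qloopassym}, \eqref{qclassym} enters here. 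I would use the $SO(8)$ symmetries \eqref{pchs}, \eqref{dpairtr} of $\mathcal{Z}_{tv}$ to cut the eight verifications down to a couple of inequivalent ones.

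I expect this last step---pinning down the precise half-integer-sector blowup equations with matter and carrying out the mass bookkeeping---to be the main obstacle, since it is considerably heavier than the pure gauge computation of Section~\ref{sec:qtau} and the required half-integer relations do not seem to be stated explicitly in the literature. As a more elementary but laborious fallback, once the target bilinear relations on $\mathcal{Z}_{tv}$ are isolated one can instead verify them order by order in the instanton parameter $t$, reducing each order to a combinatorial identity for sums over pairs of Young diagrams, in the spirit of the proof of \eqref{eq:GIL} in \cite{GL16}.
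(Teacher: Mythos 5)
The statement you are attempting to prove is presented in the paper as a \emph{Conjecture} attributed to \cite{JNS17}; the paper gives no proof of it and only uses it (together with Proposition \ref{VI2eq}) to motivate Conjectures \ref{dynequal} and \ref{conj:prdx}. There is therefore no argument in the text to compare yours against, and your proposal must stand on its own. As you yourself concede, it is a programme rather than a proof: the blowup identities it rests on are not exhibited, so nothing is actually established.

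Beyond that, there is a structural misidentification in your second step. After the Fourier decomposition in $s$, every bilinear term of \eqref{qPVIm} becomes a sum $\sum_{n}\mathcal{Z}(\ldots;\sigma+n|\cdot)\,\mathcal{Z}(\ldots;\sigma-n|\cdot)$ in which \emph{both} factors are $q_1q_2=1$ (i.e.\ $c=1$) matter Nekrasov functions. The ordinary Nakajima--Yoshioka blowup relations \eqref{eq:Z=ZZ}, \eqref{qNY1}--\eqref{qNY3} can never yield such an identity: their right-hand side is always bilinear in partition functions with $\Omega$-backgrounds $(\epsilon_1,\epsilon_2-\epsilon_1)$ and $(\epsilon_1-\epsilon_2,\epsilon_2)$, which at $\epsilon_1+\epsilon_2=0$ are $c=-2$ objects, not $c=1$ ones. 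The relations you would actually need are of the $\mathbb{C}^2/\mathbb{Z}_2$ type \eqref{eq:Z2=ZZ} --- as the Introduction explains for the pure gauge case following \cite{BS14}, \cite{BS16q}, \cite{BPSS13} --- now with four hypermultiplets, with the mass shifts $\boldsymbol{\theta}^{\uparrow/\downarrow}$ built in, and in both integer and half-integer sectors. Those identities are precisely what is absent from the literature, so the ``main obstacle'' you flag is not mass bookkeeping inside a known family of blowup equations but the nonexistence (at the time of writing) of the required family itself. The order-by-order check in $t$ that you offer as a fallback would produce evidence for the conjecture, not a proof of it.
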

	
	The $q$-Painlev\'e VI equation is the difference equation of second order, so it is natural to expect that up to $q$-periodicity general  solution of the $q$-Painlev\'e VI equation belong to the two-parameter family given by the Conjecture \ref{JNSconj}.
	Hence for the case $q^{\theta_0}=q^{\theta_t}=q^{\theta_1}=q^{\theta_{\infty}}=i$ due to the Proposition \ref{VI2eq} we expect that
	there is a connection between the $c=-2$ tau functions given by \eqref{qtaum2pm} and the $c=1$ tau functions given by the \eqref{qPVItau}. 
	However this connection may be not just an equality of these tau functions with some parameters
	$s, \sg$ and $\td{s}, \td{\sg}$. It is because equations \eqref{qPVIm} have symmetry
	$\tau^{\epsilon}_k\mapsto f(u)\tau_k^{\epsilon},\, k=0,t,1,\infty,\,\epsilon=\uparrow,\downarrow$, where the function $f(u)$ is $q$-periodic in $u$ and the symmetry $\tau^{\uparrow}_k\mapsto h_k(u)\tau^{\uparrow}_{k},\tau^{\downarrow}_{k}\mapsto h_k^{-1}(u)\tau^{\downarrow}_{k},\, k=0,t,1,\infty$, where functions $h_k(u)$ are
	also $q$-periodic in $u$.
	
	Moving to the multiplicative notations $v_i= q^{\theta_i}, u=q^{2\sg}$ we have
	\begin{conj}\label{dynequal}
		There exist such functions $\td{u}=\td{u}(u)$, $\td{s}=\td{s}(s,\sg)$
		and $q$-periodic in $z$ functions $f(u;q)$, $h_k(u;q)$, $k=0,t,1,\infty$ such that  
		\begin{align}
		\tau(i,i,i,iq^{\pm1/2};\td{u},\td{s}|-z^{1/2})&=f(u;q)h_{\infty}^{\pm1}(u;q)\uptau_0^{\pm}(u,s|z),\\
		\tau(iq^{\pm1/2},i,i,i;\td{u}q^{\pm1},\td{s}|-z^{1/2})&=f(u;q)h_0^{\pm1}(u;q)\uptau_1^{\pm}(u,s|z),\\
		\tau(i,i,iq^{\mp1/2},i;\td{u},\td{s}|-q^{-1/2}z^{1/2})&=f(u;q)h_1^{\mp1}(u;q)\uptau_0^{\pm}(u,s|q^{-1}z),\\
		\tau(i,iq^{\mp1/2},i,i;\td{u}q^{\pm1},\td{s}|-q^{-1/2}z^{1/2})&=f(u;q)h_t^{\mp1}(u;q)\uptau_1^{\pm}(u,s|q^{-1}z).
		\end{align}
	\end{conj}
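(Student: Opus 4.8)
The plan is to read Conjecture \ref{dynequal} as the statement that two \emph{a priori} different solutions of one and the same $q$-Painlev\'e VI system \eqref{qPVIm}, taken at the special point $q^{\theta_0}=q^{\theta_t}=q^{\theta_1}=q^{\theta_{\infty}}=i$, must agree up to the gauge freedom of that system. On one side, Conjecture \ref{JNSconj} produces from the $c=1$ four-matter series \eqref{qPVItau} a two-parameter family of solutions, labelled by $\td{u}=q^{2\td{\sg}}$ and $\td{s}$. On the other side, Proposition \ref{VI2eq} exhibits the tuple assembled from the pure-gauge $c=-2$ functions $\uptau_0^{\pm},\uptau_1^{\pm}$ as a solution of the same system under $t^{1/2}=iz^{1/4}$. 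Because \eqref{qPVIm} is a second-order $q$-difference system whose generic solution is expected (see the discussion after Conjecture \ref{JNSconj}) to be exhausted, modulo $q$-periodic gauge, by the family of Conjecture \ref{JNSconj}, the $c=-2$ solution must coincide with a member of this family up to the two gauge symmetries $\tau_k^{\epsilon}\mapsto f(u)\tau_k^{\epsilon}$ and $\tau_k^{\uparrow}\mapsto h_k(u)\tau_k^{\uparrow},\ \tau_k^{\downarrow}\mapsto h_k^{-1}(u)\tau_k^{\downarrow}$. The four displayed identities are precisely the eight components of this single gauge-equivalence, written out along the dictionary of Proposition \ref{VI2eq}; so it suffices to \emph{produce} the data $\td{u},\td{s},f,h_k$.

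I would determine $\td{u}$ and $\td{s}$ before touching the gauge factors. First, I would match the quasi-periodicity of the two solutions under the lattice shift of the holonomy: the short $c=-2$ functions obey $\uptau^{\pm}(\sg+1)=s^{-1/2}\uptau^{\pm}(\sg)$ from \eqref{Kievsymm2}, while the $c=1$ tau function \eqref{qPVItau} obeys the companion rule \eqref{Kievsymm} with the full power $\td{s}^{-1}$; equating the two multipliers fixes $\td{u}$ as a function of $u$ and constrains $\td{s}=\td{s}(s,\sg)$. Here the folding interpretation flagged in the Remark after Proposition \ref{VI2eq} (the relation $t=-z^{1/2}$, analogous to a folding transformation \cite{TOS05}) is a useful guide: it predicts that $\td{u}(u)$ is the folding image of $u$. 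Second, I would pin down the residual normalization $\td{s}$ together with the overall scalar $f(u)$ by comparing the leading small-$z$ (small-$t$) terms of both sides, where each series collapses to its $\mathcal{Z}_{cl}$ factor times explicit double $q$-Pochhammer prefactors; the explicit algebraic-solution computation of Subsection \ref{ssec:qtaum2} provides an independent check of this normalization.

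With the parameters matched, I would form the eight ratios of left- to right-hand sides. Since both the numerators and the denominators solve \eqref{qPVIm}, each ratio is invariant under the $q$-shift in $z$ up to the structure allowed by the gauge group; hence the ratios are $q$-periodic in $u$ and factor exactly as the products $f(u)h_k^{\pm1}(u)$ demanded by the statement. This step is essentially formal once the first two are in place.

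The crux is that the whole argument rests on two statements not themselves established here. The first is Conjecture \ref{JNSconj}, which supplies the $c=1$ solution family; the second is the completeness claim that \eqref{qPVIm} has no solutions outside that family modulo $q$-periodic gauge. In the $q$-difference world the latter is genuinely subtle, since there is no Cauchy uniqueness theorem to invoke: one must instead control the solution space directly through its initial data on a fundamental annulus together with the cluster/monodromy description of $q$-Painlev\'e VI. Compounding this, the point $q^{\theta_k}=i$ is highly non-generic --- the four masses collide, which can create resonances in the $q$-Pochhammer factors of \eqref{qPVItau} and reducibility of the underlying linear problem --- so one must verify that the family of Conjecture \ref{JNSconj} neither degenerates nor fails to sweep out the full two-dimensional solution set at exactly this point. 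I expect this degeneracy analysis at $q^{\theta_k}=i$, rather than the parameter matching, to be the hard part of a complete proof.
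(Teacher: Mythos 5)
Your route is genuinely different from the paper's. You treat the statement as a uniqueness assertion: both the $c=-2$ tuple (Proposition \ref{VI2eq}) and the JNS family (Conjecture \ref{JNSconj}) solve \eqref{qPVIm} at $q^{\theta_k}=i$, so they must agree modulo the $q$-periodic gauge group of that system. The paper uses this only as \emph{motivation} for stating the conjecture and instead proceeds by direct term-by-term comparison of the two Fourier series: matching the exponents of $\mathcal{Z}_{cl}$ forces $\td{\sg}=\sg$ (i.e. $\td{u}=u$); matching the $z$-dependent coefficients of $z^{\frac12(\sg+n)^2}$ reduces the whole statement to a single identity between instanton partition functions, Conjecture \ref{conj:prdx} (checked analytically to order $z^5$); and matching the $z$-constant parts via the explicit one-loop factors yields $\td{s}=s^{1/2}$, $f=1$ and closed formulas for the $h_k$ in terms of theta functions. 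The paper's route buys two things yours does not: it actually \emph{produces} the data $\td{u},\td{s},f,h_k$ whose existence the conjecture asserts, and it isolates the only unproven input as the concrete, order-by-order checkable identity \eqref{prdx}. Your route, as you yourself note, rests on the completeness of the JNS family modulo gauge --- a statement with no available proof in the $q$-difference setting and especially delicate at the resonant point $q^{\theta_k}=i$ --- and in addition on Conjecture \ref{JNSconj} itself; neither input is needed for the paper's reduction. (Your quasi-periodicity matching via \eqref{Kievsymm2} would indeed give $\td{u}=u$ and constrain $\td{s}$ consistently with the paper's answer.) So while your strategy is a legitimate heuristic --- it is exactly how the authors arrive at the conjecture --- as a proof it replaces one conjecture by two harder ones, whereas the paper replaces it by one more elementary and directly verifiable one.
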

	Below we make this conjecture more precise, namely, give formulas for
	$\td{u}$, $\td{s}$ and functions $f(u;q)$, $h_k(u;q)$, $k=0,t,1,\infty$.
	
	Comparing the powers of $z$ in 
	any of the $4$ equalities of the conjecture we obtain from
	\eqref{qtaum2pm} and
	\eqref{qPVItau} that it is necessary to take $\td{\sg}=\sg$.
	
	The terms in the sums in l.h.s. and r.h.s. with $\mathcal{Z}_{cl}=z^{\frac12(\sg+n)^2}$
	are linearly independent for different $n\in\mathbb{Z}$.
	Therefore, at the next step we compare the coefficients in front of powers $z^{\frac12(\sg+n)^2}$ in the all $4$ equalities of the conjecture.
	
	When we compare it up to the $z$-constant functions we obtain that due to \eqref{dpairtr} it is necessary to take
	\begin{equation}
	\begin{aligned}
	&\mathcal{Z}_{tv}(i,i,i,iq^{\pm1/2},u;q^{-1},q|-z^{1/2})=\mathcal{Z}_{inst}(u;q^{-1},q^2|z),\\
	&\mathcal{Z}_{tv}(i,i,iq^{\pm1/2},i,u;q^{-1},q|-q^{-1/2}z^{1/2})=\mathcal{Z}_{inst}(u;q^{-1},q^2|q^{-1}z),
	\end{aligned}
	\end{equation}
	where we start to write the dependence on $q_1,q_2$ explicitly. However, shifting $z\mapsto q z$ in the second relation we see that it is equivalent to the first one due to symmetry \eqref{pchs}. We have checked by the computer calculation up to order $z^{5}$ analytically that the first relation is satisfied. I.e. we have
	\begin{conj}\label{conj:prdx}
		There is a relation between pure 5d Nekrasov instanton partition function
		with $\epsilon_2=-2\epsilon_1$ and 5d Nekrasov instanton partition function with $\epsilon_2=-\epsilon_1$ and special values of $v_i$, namely
		\begin{equation}
		(-qz^{1/2};q,q)^2_{\infty}\mathcal{Z}_{inst}(i,i,i,iq^{\pm1/2},u;q^{-1},q|z^{1/2})=\mathcal{Z}_{inst}(u;q^{-1},q^2|z).\label{prdx}    
		\end{equation}
	\end{conj}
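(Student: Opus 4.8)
The plan is to deduce \eqref{prdx} from the bilinear framework of this paper rather than by a direct combinatorial manipulation of the two instanton sums, whose summands (four-matter versus pure gauge, with different $\Omega$-backgrounds $\epsilon_2=-\epsilon_1$ and $\epsilon_2=-2\epsilon_1$) have visibly incompatible structure. First I would reduce the instanton identity to an equality of full tau functions. Both sides of \eqref{prdx} are the instanton parts of tau-function building blocks that, at the special point $q^{\theta_0}=q^{\theta_t}=q^{\theta_1}=q^{\theta_\infty}=i$, solve the \emph{same} $q$-Painlev\'e VI bilinear system \eqref{qPVIm}: the right-hand side through the short $q$-deformed $c=-2$ tau functions \eqref{qtaum2pm} (whose $\uptau^+$ block is exactly $\mathcal{Z}(uq^{2n};q^{-1},q^2|z)$) together with the proven Proposition \ref{VI2eq}, and the left-hand side through the Jimbo--Nagoya--Sakai series \eqref{qPVItau} together with Conjecture \ref{JNSconj}. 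Granting that the $q$-Painlev\'e VI dynamics determines its solution uniquely up to the $q$-periodic gauge freedoms $\tau^\epsilon_k\mapsto f(u)h_k^{\pm1}(u)\tau^\epsilon_k$ isolated before Conjecture \ref{dynequal}, the two families must coincide, which is the content of that conjecture.

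I would then strip off the three factors \eqref{Zstr} one at a time. Comparing the classical factors $\mathcal{Z}_{cl}\propto t^{(\sg+n)^2}$ against $z^{\frac12(\sg+n)^2}$ forces $\td\sg=\sg$ and fixes the relation $t^{1/2}=iz^{1/4}$ of Proposition \ref{VI2eq} (equivalently $t=-z^{1/2}$, which is why the prefactor $(qt;q,q)_\infty^2$ becomes $(-qz^{1/2};q,q)_\infty^2$ and the argument $q^{\theta_t+\theta_1}t$ becomes $z^{1/2}$). Comparing the one-loop factor $C(\boldsymbol\theta|\sg+n)$ against the $c=-2$ one-loop factor from \eqref{5d_def} fixes the periodic prefactors $f,h_k$; this is the step where the symmetries \eqref{pchs}, \eqref{dpairtr} of $\mathcal{Z}_{tv}$ enter and collapse the two choices of sign in $iq^{\pm1/2}$. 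What survives after these cancellations, read off in the linearly independent monomials $z^{\frac12(\sg+n)^2}$, is exactly the equality \eqref{prdx}. A necessary internal consistency check is that the left-hand series, a priori a series in $z^{1/2}$, contains no odd powers; I expect this to follow from the independence of the sign in $iq^{\pm1/2}$, which should act as $z^{1/2}\mapsto-z^{1/2}$ on the relevant factors.

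The main obstacle is that this route is conditional: Conjecture \ref{JNSconj} is not established here, and the uniqueness of the $q$-Painlev\'e VI dynamics up to the stated gauge freedom must also be proved. To make the argument unconditional I would instead mimic the proof of Proposition \ref{20equivprop}: define the left-hand side of \eqref{prdx} as a formal power series, show that it satisfies a Chern--Simons-type blowup recursion of the form \eqref{qNYCS1}--\eqref{qNYCS3} (equivalently the closed bilinear system \eqref{cd0p}--\eqref{cd1m}) which, by a determinant computation as in Lemma \ref{determlemma}, determines the pure $c=-2$ coefficients uniquely from $c_0^{(1)}=c_0^{(2)}=1$, and then invoke that uniqueness. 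The genuinely hard step here --- and, I believe, the reason \eqref{prdx} is recorded only as a conjecture --- is producing such a blowup relation directly for the four-matter function at the special masses $q^{\theta_i}=i$; absent a conceptual source for it (a matter-deformed geometric blowup equation in the sense of \cite{GNY06}, \cite{NY09}), one is reduced to the order-by-order verification already carried out up to $z^5$.
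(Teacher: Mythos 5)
The statement you are asked about is recorded in the paper only as a conjecture: the authors do not prove \eqref{prdx}, they derive it as a \emph{necessary} condition for the identification of the two tau-function families (Conjecture \ref{dynequal}), by exactly the three-step comparison of $\mathcal{Z}_{cl}$, $\mathcal{Z}_{1-loop}$ and $\mathcal{Z}_{tv}$ that you describe, and then verify it by computer up to order $z^5$. Your first route therefore reproduces the paper's own logic faithfully, including the correct identification of where it is conditional (Conjecture \ref{JNSconj} plus a uniqueness statement for the $q$-Painlev\'e VI dynamics modulo $q$-periodic gauge freedom), and your reading of the prefactor $(-qz^{1/2};q,q)_\infty^2$ and of the substitution $t=-z^{1/2}$ is accurate. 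Your second, would-be unconditional route --- finding a blowup-type recursion satisfied by the four-matter function at $q^{\theta_i}=i$ and invoking a uniqueness argument in the style of Lemma \ref{determlemma} --- is a genuine additional idea not present in the paper, and you correctly locate the obstruction: no such matter-deformed blowup relation is currently available, which is precisely why the statement remains a conjecture supported only by order-by-order checks. One small caveat: your expectation that the vanishing of half-integer powers of $z$ on the left side follows from the independence of the sign in $iq^{\pm1/2}$ is not what the paper establishes; the authors observe (again only numerically, to order $z^{7/2}$) that this vanishing persists for $\mathcal{Z}_{inst}(i,i,i,i\alpha,u;q^{-1},q|z)$ with \emph{arbitrary} $\alpha$, so it does not appear to be tied to the $\pm$ symmetry and would need a separate argument.
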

	
	Finally we compare the $z$-independent coefficients in front of power $z^{\frac12(\sg+n)^2}$.
	To process this step we use
	\begin{equation}
	\begin{aligned}
	&C(iq^{\pm1/2},i,i,i|\sg)=C(i,i,i,iq^{\pm1/2}|\sg)=
	\frac{1}{\theta(\pm q^{1/2+\sg};q)\prod_{\epsilon'=\pm1}(q^{2+2\epsilon'\sg};q,q^2)_{\infty}},\\
	&C(i,iq^{\pm1/2},i,i|\sg)=C(i,i,iq^{\pm1/2},i|\sg)=
	\prod_{\epsilon'=\pm1}\frac{1}{(q^{3/2\pm1/2+2\epsilon'\sg};q,q^2)_{\infty}}.
	\end{aligned}
	\end{equation}
	Using these formulas we obtain that it is necessary to take
	$\td{s}=s^{1/2}$, $f(u;q)=1$ and
	\begin{equation}
	\begin{aligned}
	&h_{\infty}(u;q)=\mu_{\infty}\theta(-u^{1/2}q^{1/2};q),&\quad &h_1(u;q)=\mu_1,\\
	&h_0(u;q)=\mu_{0}s^{-1/4}\theta(-u^{1/2}q;q),&\quad &h_t(u;q)=\mu_t s^{1/4},
	\end{aligned}
	\end{equation}
	where $\mu_{k}=\frac{\lmb(\boldsymbol{\theta_k^{+}})}{\lmb(\boldsymbol{\theta})}|\{v_l=i\},\,k,l=0,t,1,\infty$.
	Note that $h_1(u;q)$ is periodic in $\sg$ with period $1$.
	
	We have checked that Conjecture \ref{dynequal} follows from Conjecture \ref{conj:prdx}
	under the formulas for $\td{u}$, $\td{s}$ and $f(u;q)$, $h_k(u;q)$, $k=0,t,1,\infty$ given above.
	
	\begin{Remark}
		Note that $q$-Painlev\'e VI equation has the cluster nature (see \cite{BGM17}). This means that the dynamics described by this equation is given by a composition of mutations and permutations of vertices for the first quiver at the Fig. \ref{quiver}. As we have seen above
		$c=-2$ $q$-Painlev\'e $\mathrm{III}'_3$ tau  functions are a special case of $c=1$ $q$-Painlev\'e VI tau functions, 
		so they also admit cluster dynamics with the same quiver. To restore positivity the signs "-" in \eqref{cd0p}, \eqref{cd0m}, \eqref{cd1p}, \eqref{cd1m} are hidden into the cluster coefficcients.
		Note that $q$-Painlev\'e VI quiver and $q$-Painlev\'e~$\mathrm{III}'_3$ quiver (the second quiver at the Fig. \ref{quiver}) have much in common from the symmetry point of view.
		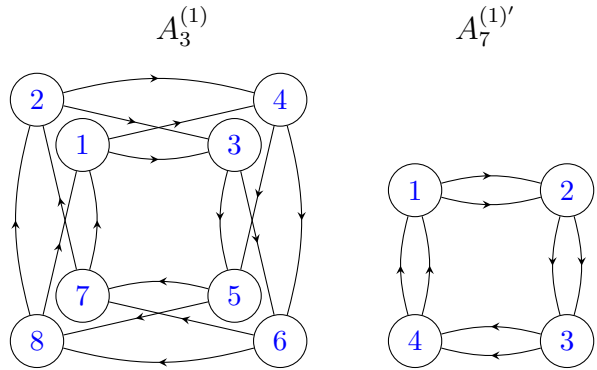
\begin{figure}[ht]
			\begin{center}
				\begin{tabular}{c c}
					$A_3^{(1)}$ & $A_7^{(1)'}$ \vspace{0.3cm}\\
					\begin{tikzpicture}[scale=2, font = \small]
					\node[shape=circle,draw=black] (A) at (0,1) {\color{blue}1};
					\node[shape=circle,draw=black] (A1) at(-0.3,1.3) {\color{blue}2};
					\node[shape=circle,draw=black] (B) at (1,1) {\color{blue}3};
					\node[shape=circle,draw=black] (B1) at(1.3,1.3) {\color{blue}4};
					\node[shape=circle,draw=black] (C) at (1,0) {\color{blue}5};
					\node[shape=circle,draw=black] (C1) at(1.3,-0.3) {\color{blue}6};
					\node[shape=circle,draw=black] (D) at (0,0) {\color{blue}7};
					\node[shape=circle,draw=black] (D1) at(-0.3,-0.3) {\color{blue}8};
					\path
					\foreach \from/\to in {A/B1,A1/B,B/C1,B1/C,C/D1,C1/D,D/A1,D1/A}
					{
						(\from) edge[mid arrow] (\to)					
					}
					\foreach \from/\to in {A/B,B/C,C/D,D/A}
					{
						(\from) edge[mid arrow,bend right=15] (\to)					
					}
					\foreach \from/\to in {A1/B1,B1/C1,C1/D1,D1/A1}
					{
						(\from) edge[mid arrow,bend left=15] (\to)					
					}
					;
					\end{tikzpicture}
					\hspace{0.5cm}
					&
					\begin{tikzpicture}[scale=2, font = \small]
					\node[shape=circle,draw=black] (A) at (0,1) {\color{blue}1};
					\node[shape=circle,draw=black] (B) at (1,1) {\color{blue}2};
					\node[shape=circle,draw=black] (C) at (1,0) {\color{blue}3};
					\node[shape=circle,draw=black] (D) at (0,0) {\color{blue}4};
					\path \foreach \from/\to in {A/B,B/C,C/D,D/A}
					{
						(\from) edge[mid arrow,bend right=15] (\to)
						(\from) edge[mid arrow,bend left=15] (\to)
					};			
					\end{tikzpicture}	
				\end{tabular}
			\end{center}
			\caption{Quiver of $q$-Painlev\'e VI and $q$-Painlev\'e $\mathrm{III}'_3$}
			\label{quiver}
		\end{figure}
	\end{Remark}
	
	\begin{Remark}
		Note that the formula \eqref{prdx} reflects the property that the coefficients with half-integer powers of $z$
		in the l.h.s. vanishes. We have checked by the computer calculations up to $z^{7/2}$ analytically that this property is satisfied in a more general situation, namely ($\alpha\in\mathbb{C}$)
		\begin{equation}
		(-qz;q,q)^2_{\infty}\mathcal{Z}_{inst}(i,i,i,i\alpha,u;q^{-1},q|z)=f(z^2).  
		\end{equation} 
	\end{Remark}
	
	\begin{Remark}
		We are not aware of any continuous analog of the relation \eqref{prdx}. In particular,
		in the $R\rightarrow0$ limit for the pure Nekrasov function we rescale $z\mapsto R^4z$ but in the case of Nekrasov function with $4$ matter fields $z$ is not rescaled.
		Also, $v_j=q^{\theta_j}$ in this limit should go to $1$ without rescaling $\theta_j$,
		but here all $v_j$ go to $i$.
	\end{Remark}
	
	\subsection{Connection with ABJ theory}\label{ssec:ABJM}
	
	Bonelli, Grassi, Tanzini in the paper \cite{BGT17} have proposed a version of the expression \eqref{Kiev} for the tau function of $q$-Painlev\'e $A_7^{(1)'}$ equation. We denote this tau function by $\tau_{\mathrm{BGT}}$. Contrary to the case studied in the present paper the formulas in \cite{BGT17} work only in the case $|q|=1$, therefore the function $\mathcal{Z}$ should be redefined by adding certain (non-perturbative) corrections, this is~a choice of another function $C$ in terms of Remark \ref{CRemark}. The function $\tau_{\mathrm{BGT}}$ depends only on one parameter, in terms of the formula \eqref{Kiev} this corresponds to the case $s=1$.
	
	By the topological string/spectral theory duality conjecture \cite{GHM14} the function $\tau_{\mathrm{BGT}}$ essentially equals to a spectral determinant of the operator 
	\begin{equation}
	\rho=(e^{\hat{p}}+e^{-\hat{p}}+e^{\hat{x}}+me^{-\hat{x}})^{-1}. 
	\end{equation}
	Here the operators $\hat{x},\hat{p}$ satisfy the commutation relation $[\hat{x},\hat{p}]=i\hbar$. Therefore, the operator $\rho$ is the inverse of the Hamiltonian of the affine relativistic Toda chain on two sites. 
	The relation between the parameters of the Hamiltonian and parameters of $\tau_{\mathrm{BGT}}$ are given by
	\begin{equation}
	\hbar = \frac{4\pi^2 i}{\log q},\quad m=\exp\left(\frac{-\hbar \log z }{2\pi}\right).
	\end{equation}
	
	Denote by $\Xi(\kappa,z)=\det(1+\kappa\rho)$ a spectral (Fredholm) determinant of the operator $\rho$. In terms of $\tau_{\mathrm{BGT}}$, the parameter $\kappa$ is expressed through $z,u,q$ by a quantum mirror map, see \cite{BGT17} for details. The topological string/spectral theory duality conjecture in this case means that
	\begin{equation}
	\tau_{\mathrm{BGT}}(u|z)=Z_{\rm CS} (z)\Xi(\kappa,z).
	\end{equation}
	The auxiliary function $Z_{\rm CS} $ is given in \cite{BGT17} by an explicit expression and satisfies
	\begin{equation}\label{eq:Z_CS:eq}
	\overline{Z_{\rm CS} ( z)}\underline{ Z_{\rm CS} (z)}=(z^{1/4}+z^{-1/4})Z^2_{\rm CS} (z).
	\end{equation}
	The function $Z_{\rm CS} $ is an analog of the algebraic solution, in the special case $\kappa=0$ we have $\tau_{\mathrm{BGT}}(z)=Z_{\rm CS} $. Note that the difference equation on our algebraic solution \eqref{eq:q-algebraic} has the form similar to \eqref{eq:Z_CS:eq}, but with $(1\mp z^{1/2})$ instead of $(z^{1/4}+z^{-1/4})$ in the r.h.s. 
	This is just a difference in normalization, it will not be important for our discussion in this section. 
	
	In the special case $z=q^M$, $M\in \mathbb{Z}$ the spectral determinant of the operator $\rho$ simplifies and equals to the grand canonical partition function of the ABJ theory. The parameter $M$ coincides with the difference of the ranks of two simple factors in the gauge group $U(N)\times U(N+M)$. In this case an interesting feature, the so-called Wronskian-like relations, were observed in \cite{GHM14'}. The function $\Xi(\kappa,z)$ can be factorised according to the parity
	of the eigenvalues of $\rho$, namely
	\begin{equation}
	\Xi(\kappa,z)=\Xi^+(\kappa,z)\Xi^-(\kappa,z).
	\end{equation}
	It was conjectured in \cite{GHM14'} that functions $\Xi^+, \Xi^-$ satisfy additional relations, which in our notations have the form
	\begin{equation}\label{eq:Xi+-:rel}
	\begin{aligned}
	iz^{1/4}\overline{\Xi_1^+} \underline{\Xi_1^-}-
	\overline{\Xi^+} \underline{\Xi^-}=(iz^{1/4}-1)\Xi^+\Xi^-,\\ 
	iz^{1/4}\underline{\Xi_1^+} \overline{\Xi_1^-}+
	\underline{\Xi^+} \overline{\Xi^-}=(iz^{1/4}+1)\Xi^+\Xi^-.
	\end{aligned}
	\end{equation}
	As before, $\Xi_1$ stands for B\"acklund transformation of  $\Xi$, in terms of the parameter $\kappa$ it is given by the	map $\kappa \rightarrow -\kappa$. This conjecture in \cite{GHM14'} was based on numerical checks.
	
	We conjecture that there exists a relation between  $\Xi^+, \Xi^-$ and $\uptau_{\mathrm{BGT}}^+, \uptau_{\mathrm{BGT}}^-$ introduced by an analog of the formula \eqref{qtaum2pm}. This conjecture could be viewed as a refinement of the topological string/spectral theory duality to the case of refined strings with parameters $t=q^2$. This relation is one of the main motivations of our paper.
	
	This conjecture is supported by a fact that multiplying equations \eqref{eq:Xi+-:rel} by appropriate auxiliary functions $Z_{\rm CS} ^+, Z_{\rm CS} ^-$, we obtain equations \eqref{qNYD2diff}, \eqref{qNYD1diff}. 
	Take auxiliary functions $Z_{\rm CS} ^{\pm}$ satisfying
	\begin{equation}
	\begin{aligned}
	\overline{Z_{\rm CS} ^+}\underline{Z_{\rm CS} ^-}=(1+iz^{1/4})Z_{\rm CS} ^+Z_{\rm CS} ^-,\\
	\underline{Z_{\rm CS} ^+}\overline{Z_{\rm CS} ^-}=(1-iz^{1/4})Z_{\rm CS} ^+Z_{\rm CS} ^-, 
	\end{aligned}
	\end{equation}
	and $Z_{\rm CS} ^+ Z_{\rm CS} ^-=Z_{\rm CS}$.
	Define $\uptau^{\pm}$ by $\Xi^{\pm}=Z_{\rm CS} ^{\pm}\uptau^{\pm}$,  then the functions $\uptau^{\pm}$ satisfy $\tau=\uptau^+\uptau^-$ and
	\begin{equation}
	\begin{aligned}
	iz^{1/4}\overline{\uptau_1^+}\underline{\uptau_1^-}-
	\overline{\uptau^+}\underline{\uptau^-}&=-(1+z^{1/2})\uptau^+\uptau^-,&\quad iz^{1/4}\overline{\uptau^+}\underline{\uptau^-}-
	\overline{\uptau_1^+}\underline{\uptau_1^-}&=-(1+z^{1/2})\uptau_1^+\uptau_1^-,\\
	iz^{1/4}\underline{\uptau_1^+}\overline{\uptau_1^-}+
	\underline{\uptau^+}\overline{\uptau^-}&=(1+z^{1/2})\uptau^+\uptau^-,&\quad 
	iz^{1/4}\underline{\uptau^+}\overline{\uptau^-}+
	\underline{\uptau_1^+}\overline{\uptau_1^-}&=(1+z^{1/2})\uptau_1^+\uptau_1^-,
	\end{aligned}
	\end{equation}
	where we also wrote B\"acklund transformed pair of equations.
	It is easy to see that these equations are equivalent to the equations \eqref{cd0p},\eqref{cd0m},\eqref{cd1p},\eqref{cd1m} (up to the change $z^{1/4} \mapsto -i z^{1/4}$ of the root branch). Note that the product $Z_{\rm CS}=Z_{\rm CS}^+Z_{\rm CS}^-$ satisfies a difference equation of the form \eqref{eq:Z_CS:eq} with the factor $(1+z^{1/2})$ since we work here in our normalization. The functions $Z_{\rm CS}^+, Z_{\rm CS}^-$ could be viewed as algebraic $c=-2$ tau functions constructed at the end of the Section \ref{ssec:qtaum2}.

	\section{Discussion}
	\label{sec:discuss}
	\paragraph{Painlev\'e VI.} In this paper we restrict ourselves to the parameterless Painlev\'e III and $q$-Painlev\'e III equations. It is natural to ask for generalizations to other Painlev\'e and $q$-Painlev\'e equations. 
	It looks like these generalizations should exist. 
	
	Consider, for example, Painlev\'e VI equation. One can define long $c=-2$ tau functions by the formulas similar to \eqref{taum201}, namely
	\begin{equation}\label{eq:uptauP6}
	\uptau(\boldsymbol{\theta};\sg,s|z)=\sum_{n\in\mathbb{Z}} s^n\mathcal{Z}_{c=-2}(\boldsymbol{\theta},\sg+2n|z).
	\end{equation}
	The simplest of the Nakajima-Yoshioka blowup relations in this case leads to the algebraic equation
	\begin{equation}
	\tau(\boldsymbol{\theta};\sg,s|z)=\uptau(\boldsymbol{\theta}+\frac12e_{23};\sg,s|z)\uptau(\boldsymbol{\theta}-\frac12e_{23};\sg,s|z)+\uptau(\boldsymbol{\theta}+\frac12e_{23};\sg+1,s|z)\uptau(\boldsymbol{\theta}-\frac12e_{23};\sg-1,s|z),\label{quadr3VI}
	\end{equation}
	where $e_{23}=(0,1,1,0)$ and
	$\tau$ is the Painlev\'e VI $c=1$ tau function. We obtained last relation just similarly to the case of Painlev\'e III($D_8^{(1)}$) discussed in the paper.

	
	There also exist analogous differential relations on these tau functions $\uptau$. One can deduce from them Toda-like equation on Painlev\'e VI $c=1$ tau function similarly to the Proposition~\ref{doubleprop}.
	
	\paragraph{Determinants and pfaffians.} Another approach to the Isomonodromy/CFT correspondence was proposed in \cite{MM17}. It was argued in loc. cit. that for resonant values of $\boldsymbol{\theta}$ and $\sigma$ the sum in the formula \eqref{Kiev}  becomes finite. In this case the answer is given by the Hankel determinant consisting of solutions of hypergeometric equations. The argument is based on insertion of screening operators and goes as in $\beta=2$ matrix models.
	
	For $c=-2$ case the insertion of screening operators leads to matrix models with $\beta=1$ or $\beta=4$, depending on the choice of the screening. To be more precise, the long tau function \eqref{eq:uptauP6} in the resonant case corresponds to $\beta=4$ and short tau function corresponds to $\beta=1$. In each case the tau function in the resonant case equals to a pfaffian.  We plan to discuss this elsewhere.
	
	\paragraph{Higher rank generalization.} Isomonodromy/CFT correspondence exists for any rank, as well as Nakajima-Yoshioka blowup relations. So it is natural to ask for such generalization of the results discussed in this paper. 
	
	For example, one can ask for the proof of the conjecture \cite{BGM18} of the solution of deautonomized Toda flows in terms of pure 5d $SU(N),\, N>2$ Nekrasov partition functions with Chern-Simons term.  Many blowup relations for this case were written in \cite{GNY06}, \cite{NY09}, but it seems that the conjecture of \cite{BGM18} can not be derived from them in a simple way. Note that this conjecture is proved for the analog of the algebraic solution in \cite{BG19}.
	
	\paragraph{Quantization.} It was conjectured in \cite{BGM17} that (Fourier) series of Nekrasov partition functions for generic $\epsilon_1,\epsilon_2$ satisfy quantum Painlev\'e equation in tau form. This conjecture is based on bilinear relations \eqref{eq:Z2=ZZ} for generic $\epsilon_1,\epsilon_2$. 
	
	A related problem is to deduce this to $c=-2$ tau functions introduced in this paper? Another related question, is it possible to deduce the relations \eqref{eq:Z2=ZZ} from usual Nakajima-Yoshioka blowup relations, in this paper we did this only for $\epsilon_1+\epsilon_2=0$. A third question is to rewrite Nakajima-Yoshioka blowup relations as equations on quantum tau functions.
	
	\paragraph{Riemann-Hilbert problem.} As was mentioned in the Introduction, arguing similarly to \cite{ILT14} one can construct matrix $Y(y)$ with prescribed monodromies which consists of sums of $c=-2$ Virasoro conformal blocks. These conformal blocks should include two degenerate fields at the points $y_0,y$ and $n$ primary fields in points $a_1,\ldots, a_n$. In order to formulate  Riemann-Hilbert problem completely it is necessary to specify behaviour of $Y(y)$ near $y_0$ and $a_i$. This seems to be an interesting open problem.
	

	\appendix
	
	\section{Some special functions}
	
	\label{sec:q}
	
	\paragraph{$q$-Pochhammer symbols and related $q$-special functions.}
	Here we collect some facts about $q$-special functions used in the paper. For the references
	about $q$-Pochhammer symbols and related functions, see \cite[Sec. 10]{AAR99}.
	
	Multiple infinite $q$-Pochhammer symbol is defined by
	\begin{equation}
	(z;q_1,\ldots q_N)_{\infty}=\prod_{i_1,\ldots i_N=0}^{\infty}\left(1-z\prod_{k=1}^Nq_k^{i_k}\right).
	\label{Pochhammer_def}
	\end{equation}
	This function is symmetric with respect to $q_k$.
	The product is well defined for arbitrary $z$ when $|q_k|<1$ for all $k$. 
	From the definition we obtain shift relations for the $q$-Pochhammer symbol
	\begin{equation}
	(z;q_1,\ldots q_N)_{\infty}/(zq_1;q_1,\ldots q_N)_{\infty}=(z;q_2,\ldots q_N)_{\infty}, \quad (z;q)_{\infty}/(zq;q)_{\infty}=1-z \label{qshift} 
	\end{equation}
	and also $n$-period relation 
	\begin{equation}
	\prod_{i=0}^{n-1}(zq_1^i;q_1^n,q_2,\ldots q_N)_{\infty}=(z;q_1,q_2,\ldots q_N)_{\infty}.\label{permult} 
	\end{equation}
	We also use the formula 
	\begin{equation}
	(z^2;q_1^2,\ldots q_N^2)_{\infty}=(z;q_1,\ldots q_N)_{\infty}(-z;q_1,\ldots q_N)_{\infty},\label{sqmult} 
	\end{equation}
	which follows from the splitting of each multiplier $\left(1-z^2\prod\limits_{k=1}^Nq_k^{2i_k}\right)=
	\left(1-z\prod\limits_{k=1}^Nq_k^{i_k}\right)\left(1+z\prod\limits_{k=1}^Nq_k^{i_k}\right)$.
	
	The $q$-Pochhammer symbol $(z;q_1,\ldots q_N)_{\infty}$ can be defined for the other region of parameters $q_k$
	using the formula
	\begin{multline}
	(z;q_1,\ldots q_N)_{\infty}=\exp\left(\sum_{i_1,\ldots i_N=0}^{\infty}\log\left(1-z\prod_{k=1}^Nq_k^{i_k}\right)\right)=
	\exp\left(-\sum_{i_1,\ldots i_N=0}^{\infty}\sum_{m=1}^{\infty}\frac{z^m}m\prod_{k=1}^Nq_k^{mi_k}\right)=\\
	=\exp\left(-\sum_{m=1}^{\infty}\frac{z^m}m\prod_{k=1}^N\frac1{1-q_k^m}\right).
	\label{Pochhammer_exp}
	\end{multline}
	The exponent expression converges in the region $|q_k|\neq 1$ but with a new additional requirement $|z|<1$.
	From this exponent definition we could define $q$-Pochhammer symbol in the region with some $|q_k|>1$ and $|z|<1$.
	Then in the appropriate region (i.e. $|z|,|zq_1|<1$) we have a relation
	\begin{equation}
	(z;q_1^{-1},q_2,\ldots q_N)_{\infty}=(zq_1;q_1,\ldots q_N)^{-1}_{\infty}. \label{qtrans}
	\end{equation}
	We can analytically continue this relation via \eqref{Pochhammer_def} to an arbitrary value of $z$.
	
	This paper uses only $N=1,2$ $q$-Pochhammer symbols.
	
	We also use a combination of $N=1$ $q$-Pochhammer symbols given by the $\theta$-function
	\begin{equation}\label{eq:theta}
	\theta(z;q)=(z;q)_{\infty}(qz^{-1};q)_{\infty}=\frac{1}{(q;q)_{\infty}}\sum_{k\in\mathbb{Z}}(-1)^k q^{\frac{k(k-1)}{2}}z^k,
	\end{equation}
	where the last equality is the Jacobi triple product.
	It follows from the definition and \eqref{qshift} that the $\theta$- function satisfies
	\begin{equation}
	\theta(qz;q)=-z^{-1}\theta(z;q)=\theta(z^{-1};q). \label{tshift}
	\end{equation}
	From the continuation \eqref{qtrans} we could define
	\begin{equation}
	\theta(z;q^{-1})=\theta^{-1}(qz;q). 
	\end{equation}
	The useful object is also the elliptic Gamma function, which is the combination of $N=2$ $q$-Pochhammer symbols
	\begin{equation}
	\Gamma(z;q_1,q_2)=\frac{(q_1q_2z^{-1};q_1,q_2)_{\infty}}{(z;q_1,q_2)_{\infty}}. \label{ellGamma}  
	\end{equation}
	It should not be confused with the ordinary Gamma or Barnes $\mathsf{G}$-function,
	its trigonometric or multiple analogs, the last are defined in the next paragraph.
	 Elliptic Gamma function satisfy relations
	 \begin{equation}
	 \Gamma(q_1z;q_1,q_2)=\theta(z;q_2) \Gamma(z;q_1,q_2) ,\quad
	 \Gamma(q_2z;q_1,q_2)=\theta(z;q_1) \Gamma(z;q_1,q_2),
	 \end{equation}
	 which follow from the definition, so as the useful relation
	 \begin{equation}\label{eq:ellGamma_shift}
	    \frac{\Gamma(uq^2;q,q)\Gamma(u;q,q)}{\Gamma(uq;q,q)^2}=-u^{-1}. 
	 \end{equation}
	
	\paragraph{Multiple gamma functions.}
	Here we collect some facts about multiple Gamma functions.
	These functions are in some sense $q\rightarrow1$ limit of
	$q$-Pochhammer symbols and admit many analogous properties.
	
	Following \cite[App. E]{NY03L}, introduce the function
	\begin{equation}
	\gamma_{\epsilon_1,\epsilon_2}(x;\Lambda):=
	\frac{d}{ds}|_{s=0}\frac{\Lambda^s}{\Gamma(s)}\int_{0}^{+\infty}\frac{dt}{t} t^s \frac{e^{-tx}}{(e^{\epsilon_1t}-1)(e^{\epsilon_2t}-1)}. \label{gammaNYdef}
	\end{equation}
	This integral converges at $t=0$ for $\operatorname{Re}s>2$. The analytic continuation is done by standard methods, see below. Also it is necessary to assume $\operatorname{Re}x>0$ and $\operatorname{Re}\epsilon_1,\operatorname{Re}\epsilon_2\neq 0$ for convergence. This function can be analytically continued as a function of $x$
	
	This function is homogeneous \begin{equation}\label{eq:homog:eps12}
	\gamma_{\epsilon_1,\epsilon_2}(x;\Lambda)=\gamma_{\epsilon_1/\alpha,\epsilon_2/\alpha}(x/\alpha;\Lambda/\alpha),
	\end{equation}
	if $\operatorname{Re}(\alpha/\epsilon_i) \operatorname{Re}\epsilon_i>0$, $i=1,2$.

	We can remove its dependence on $\Lambda$ by the calculation
	\begin{equation}
	\begin{aligned}
	&\gamma_{\epsilon_1,\epsilon_2}(x;\Lambda)-\gamma_{\epsilon_1,\epsilon_2}(x;1)=
	\log\Lambda \lim_{s\rightarrow0}\Gamma^{-1}(s)\int_0^{+\infty}\frac{dt}{t} t^s \frac{e^{-tx}}{(e^{\epsilon_1t}-1)(e^{\epsilon_2t}-1)}=\\
	&=\frac{\log\Lambda}{\epsilon_1\epsilon_2} \lim_{s\rightarrow0}\Gamma^{-1}(s)\int_0^{+\infty}\frac{dt}{t} t^s 
	e^{-tx}\left(t^{-2}-\frac12(\epsilon_1+\epsilon_2)t^{-1}+\frac{1}{12}(\epsilon_1^2+\epsilon_2^2+3\epsilon_1\epsilon_2)+O(t)\right)=\\
	&=\frac{\log\Lambda}{\epsilon_1\epsilon_2} \lim_{s\rightarrow0} \frac{\Gamma(s-2)x^{2-s}-\frac12(\epsilon_1+\epsilon_2)\Gamma(s-1)x^{1-s}+
		\frac{1}{12}(\epsilon_1^2+\epsilon_2^2+3\epsilon_1\epsilon_2)\Gamma(s)x^{-s}}{\Gamma(s)}=\\
	&=\frac{\log\Lambda}{\epsilon_1\epsilon_2}\left(\frac12x^2+\frac12(\epsilon_1+\epsilon_2)x+\frac{1}{12}(\epsilon_1^2+\epsilon_2^2+3\epsilon_1\epsilon_2)\right).
	\label{extr_dep_2}
	\end{aligned}
	\end{equation}
	
	We can introduce also its $1$-parameter analog $\gamma_{\epsilon}$ 
	\begin{equation}
	\gamma_{\epsilon}(x;\Lambda):=
	\frac{d}{ds}|_{s=0}\frac{\Lambda^s}{\Gamma(s)}\int_{0}^{+\infty}\frac{dt}{t} t^s \frac{e^{-tx}}{e^{\epsilon t}-1},\quad \operatorname{Re}x>0.
	\label{gammaNY1def}
	\end{equation}
	This function is also homogeneous (for $\operatorname{Re}(\alpha/\epsilon) \operatorname{Re}\epsilon>0$)
	\begin{equation}\label{eq:homog:eps}
	\gamma_{\epsilon}(x;\Lambda)=\gamma_{\epsilon/\alpha}(x/\alpha;\Lambda/\alpha),
	\end{equation}
	and satisfies
	\begin{equation}
	\begin{aligned}
	&\gamma_{\epsilon}(x;\Lambda)-\gamma_{\epsilon}(x;1)=
	\log\Lambda \lim_{s\rightarrow0}\Gamma^{-1}(s)\int_0^{+\infty}\frac{dt}{t} t^s \frac{e^{-tx}}{e^{\epsilon t}-1}=\\
	&=\frac{1}{\epsilon}\log\Lambda \lim_{s\rightarrow0}\Gamma^{-1}(s)\int_0^{+\infty}\frac{dt}{t} t^s 
	e^{-tx}\left(t^{-1}-\epsilon/2+O(t)\right)=\\
	&=\frac{1}{\epsilon}\log\Lambda \lim_{s\rightarrow0} \frac{\Gamma(s-1)x^{1-s}-\frac12\epsilon\Gamma(s)x^{-s}}{\Gamma(s)}=
	-\frac1\epsilon(x+\epsilon/2)\log \Lambda.\label{extr_dep_1}
	\end{aligned}
	\end{equation}
	
	The function $\exp(\gamma_{\epsilon_1,\epsilon_2}(x;1))$ is symmetric with respect to the $\epsilon_1$, $\epsilon_2$. Functions $\exp(\gamma_{\epsilon_1,\epsilon_2}(x;1))$ and $\exp(\gamma_{\epsilon}(x;1))$  satisfy analogs of the relations
	\eqref{qshift}, \eqref{permult}, \eqref{qtrans}:
	\begin{eqnarray}
	\exp{\gamma_{\epsilon_1,\epsilon_2}(x;1)}&=&\exp{\gamma_{\epsilon_2}(x+\epsilon_1;1)} \exp{\gamma_{\epsilon_1,\epsilon_2}(x+\epsilon_1;1)},\label{gammashift}\\
	\exp{\gamma_{\epsilon_1,\epsilon_2}(x;1)}&=&\exp{\gamma_{2\epsilon_1,\epsilon_2}(x;1)}\exp{\gamma_{2\epsilon_1,\epsilon_2}(x-\epsilon_1;1)},\label{gammapermult}\\
	\exp{\gamma_{-\epsilon_1,\epsilon_2}(x;1)}&=&\exp(-\gamma_{\epsilon_1,\epsilon_2}(x-\epsilon_1;1)),\label{gammatrans}
	\end{eqnarray}
	and 
	\begin{eqnarray}
	\exp{\gamma_{\epsilon}(x;1)}&=&\frac{1}{x+\epsilon} \exp{\gamma_{\epsilon}(x+\epsilon;1)},\label{gamma1shift}\\
	\exp{\gamma_{\epsilon}(x;1)}&=&\exp{\gamma_{2\epsilon}(x;1)}\exp{\gamma_{2\epsilon}(x-\epsilon;1)},\label{gamma1permult}\\
	\exp{\gamma_{-\epsilon}(x;1)}&=&\exp(-\gamma_{\epsilon}(x-\epsilon;1)).\label{gamma1trans}
	\end{eqnarray}
	These relations follow directly from the definition \eqref{gammaNYdef}.
	We also have relations
	\begin{eqnarray}
	\exp{\gamma_{\epsilon_1/\alpha,\epsilon_2/\alpha}(x/\alpha;1)}=\alpha^{\frac{x^2+(\epsilon_1+\epsilon_2)x+\frac{1}{6}(\epsilon_1^2+\epsilon_2^2+3\epsilon_1\epsilon_2)}{2\epsilon_1\epsilon_2}}\exp{\gamma_{\epsilon_1,\epsilon_2}(x;1)},\label{gammascal}\\
	\exp{\gamma_{\epsilon/\alpha}(x/\alpha;1)}=\alpha^{-x/\epsilon-1/2}\exp{\gamma_{\epsilon}(x;1)}\label{gamma1scal},
	\end{eqnarray}
	which follow from \eqref{extr_dep_2}, 
	\eqref{extr_dep_1} and homogeneity \eqref{eq:homog:eps12}, \eqref{eq:homog:eps} (under the corresponding restrictions).
	
	In the case $\operatorname{Re}x>0, \operatorname{Re}\epsilon_1,\epsilon_2,\epsilon<0$ we can rewrite definitions \eqref{gammaNYdef}, \eqref{gammaNY1def} 
	\begin{align}
	\exp(\gamma_{\epsilon_1,\epsilon_2}(x;1))&=\exp\left(\frac{d}{ds}|_{s=0}\sum_{m,n=0}^{+\infty}(x-\epsilon_1m-\epsilon_2n)^{-s}\right), \label{zeta2}\\
	\exp(-\gamma_{\epsilon}(x;1))&=\exp\left(\frac{d}{ds}|_{s=0}\sum_{n=0}^{+\infty}(x-\epsilon n)^{-s}\right). 
	\end{align}
	These formulas could be used for analytic continuation of given functions on the region $\operatorname{Re}x<0$ similar to \eqref{Pochhammer_exp}. Moreover, the latter expressions are just multiple Gamma functions, so we have (under condition $\operatorname{Re}\epsilon_1,\epsilon_2,\epsilon<0$)
	\begin{align}
	\exp(\gamma_{\epsilon_1,\epsilon_2}(x;1))=
	\Gamma_2(x;-\epsilon_1,-\epsilon_2),  \label{gamma2}\\
	\exp(-\gamma_{\epsilon}(x;1))=\Gamma_1(x;-\epsilon)
	=\frac{(-\epsilon)^{-x/\epsilon-1/2}}{\sqrt{2\pi}}\Gamma(-x/\epsilon)\label{gamma1}. 
	\end{align}
	
	\begin{Remark}
		There is a difference between the functions $\gamma_{\epsilon}, \gamma_{\epsilon_1,\epsilon_2}$ and multiple gamma functions $\Gamma_1$, $\Gamma_2$. 
		The latter functions are analytic in $\epsilon_i$ everywhere except zeroes, but the functions~$\gamma$ are not defined for purely imaginary values of $\epsilon_i$ (see \eqref{gammaNY1def} and \eqref{gammaNYdef} and compare with the requirement $|q_i|\neq1$ for \eqref{Pochhammer_exp}).
		Therefore, the connection depends on  $\operatorname{Re}\epsilon_{1,2},\epsilon\lessgtr0$. For example, from \eqref{gamma1trans} and \eqref{gamma1}, we obtain for $\operatorname{Re}\epsilon>0$ that 
		$\exp(-\gamma_{\epsilon}(x;1))=\epsilon^{-x/\epsilon-1/2}\frac{\sqrt{2\pi}}{\Gamma(1+x/\epsilon)}$.
    \end{Remark}
	
	\section{Perturbative part of the Nekrasov function}
	\label{sec:perturb}
	\paragraph{Perturbative part of the 5d Nekrasov function.}
	We consider the so-called perturbative part of the 5d Nekrasov function according to \cite{NY05} \cite[Section 4.2]{NY05} for the rank $2$ case.
	The full partition function is $\mathcal{Z}_{NY}=\mathcal{Z}_{pert}\mathcal{Z}_{inst}$ and perturbative term is given by
	\begin{equation}
	\mathcal{Z}_{pert}(a;\epsilon_1, \epsilon_2; R|\Lambda)=\exp(-\td{\gamma}_{\epsilon_1,\epsilon_2;R}(a|\Lambda)-\td{\gamma}_{\epsilon_1,\epsilon_2;R}(-a|\Lambda)),  
	\end{equation}
	where
	\begin{equation}
	\begin{aligned}
	&\td{\gamma}_{\epsilon_1,\epsilon_2;R}(x|\Lambda)=\frac{1}{2\epsilon_1\epsilon_2}\left(x^2+(\epsilon_1+\epsilon_2)x+\frac{\epsilon_1^2+\epsilon_2^2+3\epsilon_1\epsilon_2}{6}\right)\log(\Lambda)+\\
	&+\frac{1}{\epsilon_1\epsilon_2}\left(-\frac{R}{12}(x+(\epsilon_1+\epsilon_2)/2)^3+\frac{\pi^2}{6R}(x+(\epsilon_1+\epsilon_2)/2)-\frac{\zeta(3)}{R^2}\right)
	+\sum_{n=1}^{+\infty}\frac1n\frac{e^{-R nx}}{(e^{R n \epsilon_1}-1)(e^{R n \epsilon_2}-1)}.
	\end{aligned}
	\end{equation}
	
	Now we relate $\mathcal{Z}_{pert}$ with $\mathcal{Z}_{cl}$ and $\mathcal{Z}_{1-loop}$ from \eqref{Zstr}. The aim is to obtain formulas \eqref{5d_def} for $\mathcal{Z}_{cl}$ and $\mathcal{Z}_{1-loop}$
	from the above formula for $\mathcal{Z}_{pert}$. Simplifying the expression for $\mathcal{Z}_{pert}$ and using \eqref{Pochhammer_exp}, we obtain
	\begin{equation}
	\mathcal{Z}_{pert}(a;\epsilon_1, \epsilon_2; R|\Lambda)=C(\epsilon_1,\epsilon_2;R|\Lambda)\times
	(\Lambda e^{-\frac{R}4(\epsilon_1+\epsilon_2)})^{-\frac{a^2}{\epsilon_1\epsilon_2}}\times(e^{-Ra};q_1,q_2)_{\infty}(e^{Ra};q_1,q_2)_{\infty},
	\end{equation}
	where
	\begin{equation}
	C(\epsilon_1,\epsilon_2;R|\Lambda)=e^{\frac{2\zeta(3)}{\epsilon_1\epsilon_2R^2}-\frac{\pi^2}{6R}\frac{\epsilon_1+\epsilon_2}{\epsilon_1\epsilon_2}+R\frac{(\epsilon_1+\epsilon_2)^3}{48\epsilon_1\epsilon_2}} 
	\Lambda^{-\frac16(1+\frac{(\epsilon_1+\epsilon_2)^2}{\epsilon_1\epsilon_2})}.
	\end{equation}
	The second and third multipliers, separated by $\times$, coincide with  $\mathcal{Z}_{cl}$ and $\mathcal{Z}_{1-loop}$ 
	correspondingly (see \eqref{5d_def}).
	
	In this paper, our main object of interest is Nakajima-Yoshioka blowup relations which have the structure (\cite[(2.5)-(2.7)]{NY05})
	\begin{equation}
	\beta_d\mathcal{Z}_{NY}(a;\epsilon_1,\epsilon_2;R|\Lambda)=\widehat{\mathcal{Z}}_d(a;\epsilon_1,\epsilon_2;R|\Lambda),
	\end{equation}
	where $d=0,1,2$, $\beta_d$ are some factors,  and (\cite[(4.14)]{NY05})
	\begin{multline}
	\widehat{\mathcal{Z}}_d(a;\epsilon_1,\epsilon_2;R|\Lambda)=e^{\frac{2d-1}{24}R(\epsilon_1+\epsilon_2)} \sum_{n}\mathcal{Z}_{NY}(a+2\epsilon_1n;\epsilon_1,\epsilon_1-\epsilon_2;R|e^{\epsilon_1R(d-1)}\Lambda)
	\\ \times \mathcal{Z}_{NY}(a+2\epsilon_2n;\epsilon_1-\epsilon_2,\epsilon_2;R|e^{\epsilon_2R(d-1)}\Lambda),  
	\end{multline}
	We use another normalization $\mathcal{Z}$ without the factor $C(\epsilon_1,\epsilon_2;R|\Lambda)$:
	$\mathcal{Z}_{NY}=C\mathcal{Z}$. Since  $C(\epsilon_1,\epsilon_2;R|\Lambda)$ does not depend on $a$, it will not change substantially the structure of
	Nakajima-Yoshioka blowup relations.
	Moreover, since 
	\begin{equation}
	C(\epsilon_1,\epsilon_2;R|\Lambda)=e^{\frac{2d-1}{24}R(\epsilon_1+\epsilon_2)}C(\epsilon_1,\epsilon_2-\epsilon_1;R|e^{\frac14\epsilon_1R(d-1)}\Lambda)
	C(\epsilon_1-\epsilon_2,\epsilon_2;R|e^{\frac14\epsilon_2R(d-1)}\Lambda), 
	\end{equation}
	removing of the term $C$ in the $\mathcal{Z}_{pert}$ just removes the term $e^{\frac{2d-1}{24}R(\epsilon_1+\epsilon_2)}$ in the Nakajima-Yoshioka blowup relations.
	So we obtain blowup relations \eqref{qNY1}, \eqref{qNY2}, \eqref{qNY3} on $\mathcal{Z}$, the factors $\beta_d$ are given in the l.h.s. of these relations.
	
	
	\paragraph{Perturbative part of the 4d Nekrasov function.}
	Perturbative part of the 4d Nekrasov function is given by \cite{NY03L}
	\begin{equation}
	\mathcal{Z}_{pert}(a;\epsilon_1, \epsilon_2|\Lambda)=\exp(-\gamma_{\epsilon_1,\epsilon_2}(a;\Lambda)-\gamma_{\epsilon_1,\epsilon_2}(-a;\Lambda)), 
	\end{equation}
	where the function $\gamma_{\epsilon_1,\epsilon_2}$ is given by \eqref{gammaNYdef}.
	
	Extracting the dependence of the perturbative term on $\Lambda$ with the help of \eqref{extr_dep_2}, we obtain
	\begin{equation}
	\mathcal{Z}_{pert}(a;\epsilon_1, \epsilon_2|\Lambda)=C(\epsilon_1,\epsilon_2|\Lambda)\times\Lambda^{-\frac{a^2}{\epsilon_1\epsilon_2}}
	\times\exp(-\gamma_{\epsilon_1,\epsilon_2}(a;1)-\gamma_{\epsilon_1,\epsilon_2}(-a;1)).
	\end{equation}
	The second and third multipliers coincide with $\mathcal{Z}_{cl}$ and $\mathcal{Z}_{1-loop}$ from \eqref{4d_def}. 
	The first multiplier
	\begin{equation}
	C(\epsilon_1,\epsilon_2|\Lambda)=\Lambda^{-\frac{1}{6\epsilon_1\epsilon_2}(\epsilon_1^2+\epsilon_2^2+3\epsilon_1\epsilon_2)}
	\end{equation}
	could be omitted just analogously to 5d case, because it satisfies
	\begin{equation}
	C(\epsilon_1,\epsilon_2|\Lambda)=C(\epsilon_1,\epsilon_2-\epsilon_1|\Lambda)C(\epsilon_1-\epsilon_2,\epsilon_2|\Lambda),
	\end{equation}
	and $\Lambda$ is not shifted in differential Nakajima-Yoshioka blowup relations \cite{NY03L} in the term $C(\epsilon_1,\epsilon_2|\Lambda)$, see  \cite[Sec. 4.5]{NY03L}. 
	
	\paragraph{Perturbative part of the 4d Nekrasov function: $c=1$ and $c=-2$ specialization.}
	Now consider special cases of $\epsilon_1, \epsilon_2$ we are mostly interested in this paper.
	Namely, $\epsilon_2=-\epsilon_1$ corresponding to the $c=1$ in the CFT notations 
	and $\epsilon_2=-2\epsilon_1$ corresponding to the $c=-2$ (see first formula from \eqref{cD}).
	We rewrite in these cases $1$-loop factor in terms of ordinary Barnes $\mathsf{G}$-function (as in  original formula for tau function from \cite{GIL12}).
	From \cite[Sec. 5]{Ad01} we have
	\begin{equation}
	\Gamma_2(x;1,1)=\mathsf{G}^{-1}(x)(2\pi)^{z/2-1}e^{\zeta'(-1)}. 
	\end{equation}
	This means that for $c=1$ $\mathcal{Z}_{cl}=z^{\sg^2}$ and up to a constant
	\begin{equation}
	\mathcal{Z}_{1-loop}=\frac{1}{\mathsf{G}(1+2\sg)\mathsf{G}(1-2\sg)},\label{loopG} 
	\end{equation}
	where we used \eqref{gammascal} and \eqref{gammatrans}.
	Analogously for $c=-2$ we have that $\mathcal{Z}_{cl}=(z/4)^{\sg^2/2}$ and up to the constant
	\begin{equation}
	\begin{aligned}
	\mathcal{Z}^+_{1-loop}=\frac{1}{\mathsf{G}(1/2+\sg)\mathsf{G}(1+\sg)\mathsf{G}(1/2-\sg)\mathsf{G}(1-\sg)},\\
	\mathcal{Z}^-_{1-loop}=\frac{1}{\mathsf{G}(1+\sg)\mathsf{G}(3/2+\sg)\mathsf{G}(1-\sg)\mathsf{G}(3/2-\sg)},
	\label{loop2G}
	\end{aligned}
	\end{equation}
	where we additionaly used \eqref{gammapermult}.
	\begin{Remark}
		Note that the constants we may have missed when expressing the perturbative term
		in terms of ordinary Barnes $\mathsf{G}$-function are not so harmless, since they should
		be consistent with the Nakajima-Yoshioka blowup relation. However, for given expressions \eqref{loopG}, \eqref{loop2G}
		the relative constant is given by the Barnes $\mathsf{G}$-function multiplication formula.
	\end{Remark}


	\noindent \textsc{Landau Institute for Theoretical Physics, Chernogolovka, Russia,\\
		Center for Advanced Studies, Skolkovo Institute of Science and Technology, Moscow, Russia,\\
		National Research University Higher School of Economics, Moscow, Russia,\\
		Institute for Information Transmission Problems, Moscow, Russia,\\
		Independent University of Moscow, Moscow, Russia}
	
	\emph{E-mail}:\,\,\textbf{mbersht@gmail.com}\\

	\noindent\textsc{National Research University Higher School of Economics, Moscow, Russia\\
		Center for Advanced Studies, Skolkovo Institute of Science and Technology, Moscow, Russia}
	
	\emph{E-mail}:\,\,\textbf{shch145@gmail.com}

\begin{thebibliography}{99}	
		\addcontentsline{toc}{section}{\numberline{}References\hfill}
		
		\bibitem[Ad01]{Ad01}
		V. S. Adamchik, \textit{On  the  Barnes  function}, Proc. 2001 Int. Symp. Symbolic and Algebraic Computation,
		Academic Press, (2001), 15--20.
		
		\bibitem[AAR99]{AAR99}
		G. Andrews, R. Askey, R. Roy, \textit{Special functions},
		Cambridge University Press, (1999).
		
		\bibitem[AY09]{AY09}
		H. Awata, Y. Yamada, \textit{Five-dimensional AGT Conjecture and the Deformed Virasoro Algebra}, JHEP \textbf{1001} (2010), 125;
		[\href{http://arxiv.org/abs/0910.4431}{{\tt arXiv:0910.4431}}].
		
		\bibitem[BFL13]{BFL13}
		M. Bershtein, B.Feigin, A.Litvinov, \textit{Coupling of two conformal field theories and Nakajima-Yoshioka blow-up equations},  
		Lett. Math. Phys. \textbf{106 1} (2016) 29--56;
		[\href{http://arxiv.org/abs/1310.7281}{{\tt arXiv:1310.7281}}].
		
		\bibitem[BGM17]{BGM17}
		M. Bershtein, P. Gavrylenko, A. Marshakov, \textit{Cluster integrable systems, q-Painleve equations and their quantization},
		JHEP \textbf{1802}, (2018), 077;
		[\href{http://arxiv.org/abs/1711.02063}{{\tt arXiv:1711.02063}}].
		
		\bibitem[BGM18]{BGM18}
		M. Bershtein, P. Gavrylenko, A. Marshakov, \textit{Cluster Toda chains and Nekrasov functions}; TMF, \textbf{198:2} (2019), 179–214; Theoret. and Math. Phys., \textbf{198:2} (2019), 157–188
        [\href{http://arxiv.org/abs/1804.10145}{{\tt arXiv:1804.10145}}].
		
		\bibitem[BG19]{BG19}
		M. Bershtein, R. Gonin, \textit{ Twisted Representation of Algebra of $q$-Difference Operators, Twisted $q$-$W$ Algebras and Conformal Blocks};
		[\href{http://arxiv.org/abs/1906.00600}{\texttt{arXiv:1906.00600}}]. 
		
		\bibitem[BS14]{BS14} 
		M. Bershtein, A. Shchechkin, \textit{Bilinear equations on Painlev\'e tau functions from CFT}, Comm. Math. Phys. \textbf{339}, (2015), 1021--1061;
		[\href{http://arxiv.org/abs/1406.3008}{{\tt arXiv:1406.3008}}].
		
		\bibitem[BS16q]{BS16q} M. Bershtein, A. Shchechkin, \textit{$q$-deformed Painlev\'e tau function and $q$-deformed conformal blocks},
		J. Phys. A.  \textbf{50 8} (2017) 085202;  [\href{https://arxiv.org/abs/1608.02566}{{\tt arXiv:1608.02566}}].
		
		\bibitem[BS16b]{BS16b} M. Bershtein, A. Shchechkin, \textit{B\"acklund transformation of Painlev\'e III($D_8$) $\tau$ function},
		J. Phys. A.  \textbf{50 11} (2017) 115205;  [\href{https://arxiv.org/abs/1608.02568}{{\tt arXiv:1608.02568}}].
		
		\bibitem[BGT17]{BGT17}
		G. Bonelli, A. Grassi, A. Tanzini, \textit{Quantum curves and $q$-deformed Painlevé equations} Lett. Math, Phys (2019);
		[\href{https://arxiv.org/abs/1710.11603}{{\tt arXiv:1710.11603}}].
		
		\bibitem[BPSS13]{BPSS13}U. Bruzzo, M. Pedrini, F. Sala and R. Szabo, \textit{Framed sheaves on root stacks and supersymmetric gauge theories on ALE spaces}, Adv. Math. \textbf{288} (2016), 1175–-1308; [\href{https://arxiv.org/abs/1312.5554}{{\tt arXiv:1312.5554}}].
		
		
		\bibitem[F17]{F17}
		B. L. Feigin, \textit{Extensions of vertex algebras. Constructions and applications},
		Russ. Math. Surv. \textbf{72} (2017) 707.
		
		\bibitem[FGST06]{FGST06} B.L. Feigin, A.M. Gainutdinov, A.M. Semikhatov, and I.Yu. Tipunin,
		\textit{Logarithmic extensions of minimal models: characters and modular transformations}, Nuclear Phys. B \textbf{757:3} (2006), 303--343; [\href{https://arxiv.org/abs/hep-th/0606196}{{\tt arXiv:hep-th/0606196}}].
		
		
		\bibitem[G09]{G09}
		D. Gaiotto, \textit{Asymptotically free N=2 theories and irregular conformal blocks};
		[\href{https://arxiv.org/abs/0908.0307}{{\tt arXiv:0908.0307}}].
		
		\bibitem[GIL12]{GIL12}
		O. Gamayun, N. Iorgov, O. Lisovyy, \textit{Conformal field theory of Painlev\'e VI}, 
		JHEP \textbf{1210}, (2012), 38;
		[\href{https://arxiv.org/abs/1207.0787}{{\tt arXiv:1207.0787}}].
		
		\bibitem[GIL13]{GIL13}
		O. Gamayun, N. Iorgov, O. Lisovyy, \textit{How instanton combinatorics solves Painlev\'e VI,V and III's},
		J. Phys. A: Math. Theor. \textbf{46} (2013) 335203; 
		[\href{https://arxiv.org/abs/1302.1832}{{\tt arXiv:1302.1832}}].
		
		
		
		\bibitem[GL16]{GL16}
		P. Gavrylenko, O. Lisovyy, \textit{Fredholm determinant and Nekrasov sum representations of isomonodromic tau functions}, Comm. Math. Phys. \textbf{363 1}, (2018), 1--58; 
		[\href{https://arxiv.org/abs/1608.00958}{{\tt arXiv:1608.00958}}].
		
		
		\bibitem[GM16]{GM16}
		P.~Gavrylenko and A.~Marshakov,
		\textit{Free fermions, W-algebras and isomonodromic deformations,}
		Theor.~Math.~Phys.  {\bf 187} (2016) 649; [\href{https://arxiv.org/abs/1605.04554}{{\tt arXiv:1605.04554}}].
		
		
		\bibitem[GNY06]{GNY06}
		L. G\"ottshe, H. Nakajima, K. Yoshioka, \textit{K-theoretic Donaldson invariants via instanton counting},
		Pure Appl. Math. Quart. \textbf{5} (2009) 1029--1111;
		[\href{https://arxiv.org/abs/math/0611945}{{\tt arXiv:math/0611945}}].
		
		\bibitem[GHM14]{GHM14}
		A. Grassi, Y. Hatsuda and M. Marino, \textit{Topological Strings from Quantum Mechanics}, Ann. Henri Poincar\'e \textbf{17 11}, 3177--3235; [\href{https://arxiv.org/abs/1410.3382}{{\tt arXiv:1410.3382}}].
		
		\bibitem[GHM14']{GHM14'}
		A. Grassi, Y. Hatsuda and M. Marino, \textit{Quantization conditions and functional equations in ABJ(M) theories}, J. Phys. \textbf{A49} (2016) 115401; [\href{https://arxiv.org/abs/1410.7658}{{\tt arXiv:1410.7658}}]. 
		
		\bibitem[Gr84]{Gr84}
		V. Gromak, \textit{Reducibility of the Painlev\'e equations}, Diff. Uravn
		\textbf{20 10} (1984), 1674--1683.
		
		
		\bibitem[ILT14]{ILT14}
		N. Iorgov, O. Lisovyy, J. Teschner, \textit{Isomonodromic $\tau$ functions from Liouville conformal blocks}, 
		Comm. Math. Phys.  \textbf{336 2} (2015), 671--694; [\href{https://arxiv.org/abs/1401.6104}{{\tt  arXiv:1401.6104}}].
		
		\bibitem[IKP02]{IKP02}
		A. Iqbal and A. K. Kashani-Poor, \textit{Instanton counting and Chern-Simons theory},
		Adv. Theor. Math. Phys. \textbf{7 3} (2003), 457–-497; 
		[\href{https://arxiv.org/abs/hep-th/0212279}{{\tt arXiv:hep-th/0212279}}].
		
		\bibitem[JNS17]{JNS17}
		M. Jimbo, H. Nagoya and H. Sakai,
		\textit{CFT approach to the $q$-Painlev\'e VI equation},
		J.~Int. Syst. \textbf{2} (2017) 1; 
		[\href{https://arxiv.org/abs/1706.01940}{{\tt  arXiv:1706.01940}}].
		
		
		
		\bibitem[K00]{K00}H.G.~Kausch, \textit{Symplectic fermions} Nucl.Phys. \textbf{B583} (2000) 513--541; [\href{https://arxiv.org/abs/hep-th/0003029}{{\tt arXiv:hep-th/0003029}}].
		
		\bibitem[MN18]{MN18}Y. Matsuhira, H. Nagoya, \textit{Combinatorial expressions for the tau functions of q-Painlev\'e V and III equations}; [\href{https://arxiv.org/abs/arXiv:1811.03285}{{\tt arXiv:arXiv:1811.03285}}].
		
		
		\bibitem[MM17]{MM17}
		A. Mironov, A. Morozov, \textit{On determinant representation and integrability of Nekrasov functions},
		Phys. Lett. B \textbf{773} (2017) 34--46;  [\href{https://arxiv.org/abs/1707.02443}{{\tt arXiv:1707.02443}}].
		
		
		\bibitem[MPTY14]{MPTY14}
		V. Mitev, E. Pomoni, M. Taki, F. Yagi, \textit{Fiber-Base Duality and Global Symmetry Enhancement},
		JHEP \textbf{04} (2015) 052;
		[\href{https://arxiv.org/abs/1411.2450}{{\tt arXiv:1411.2450}}].
		
		\bibitem[N15]{N15} H.~Nagoya, \textit{Irregular conformal blocks, with an application to the fifth and fourth Painlevé equations}, J. Math. Phys. \textbf{56}, 123505 (2015); [\href{https://arxiv.org/abs/1505.02398}{{\tt arXiv:1505.02398}}]. 
		
		\bibitem[NY03]{NY03}
		H.~{Nakajima} and K.~{Yoshioka}, \textit{Instanton counting on blowup. I. 4-dimensional pure gauge theory}, 
		Inventiones mathematicae \textbf{162 2} (2005), 313--355;
		[\href{https://arxiv.org/abs/math/0306198}{{\tt arXiv:math/0306198}}].
		
		\bibitem[NY03L]{NY03L}
		H.~{Nakajima} and K.~{Yoshioka}, \textit{Lectures on Instanton Counting}, In: Algebraic Structures and Moduli Spaces, CRM Proc. \& Lect. Notes 38, AMS, 2004, 31--101;
		[\href{https://arxiv.org/abs/math/0311058}{{\tt arXiv:math/0311058}}].
		
		\bibitem[NY05]{NY05}
		H.~{Nakajima} and K.~{Yoshioka}, \textit{Instanton counting on blowup. II. K-theoretic partition function},
		Transform. Groups \textbf{10 3--4, } (2005),  489--519;
		[\href{https://arxiv.org/abs/math/0505553}{{\tt arXiv:math/0505553}}].
		
		\bibitem[NY09]{NY09}
		H.~{Nakajima} and K.~{Yoshioka}, \textit{Perverse coherent sheaves on blow-up. III. Blow-up formula from wall-crossing},
		Kyoto J. Math. \textbf{51 2} (2011), 263;
		[\href{https://arxiv.org/abs/0911.1773}{{\tt arXiv:0911.1773}}].
		
		
		
		
		\bibitem[S01]{S01} H. Sakai, \textit{Rational surfaces associated with affine root systems and geometry of the Painlev\'e equations} Comm. Math.Phys. \textbf{220(2)} (2001) 165--229.
		
		\bibitem[S98]{S98}
		H. Sakai, \textit{Casorati determinant solutions for the q-difference sixth Painlev\'e equation}, 
		Nonlinearity \textbf{11} (1998) 823--833.
		
		
		\bibitem[T04]{T04}
		Y. Tachikawa, \textit{Five-dimensional Chern-Simons terms and Nekrasov's instanton counting}, 
		JHEP \textbf{0402} (2004) 050;
		[\href{https://arxiv.org/abs/hep-th/0401184}{{\tt arXiv:hep-th/0401184}}].
		
		\bibitem[TM06]{TM06}
		T. Tsuda and T. Masuda, \textit{q-Painlev\'e VI equation arising from q-UC hierarchy}, 
		Comm. Math. Phys. \textbf{262} (2006) 595--609.
		
		\bibitem[TOS05]{TOS05}T. Tsuda, K. Okamoto, H. Sakai \emph{Folding transformations of the Painlevé equations}, Math. Ann. \textbf{331 4} (2005) 713--738.  
		
		
		\bibitem[Y14]{Y14}
		S. Yanagida, \textit{Norm of the Whittaker vector of the deformed Virasoro algebra};
		[\href{https://arxiv.org/abs/1411.0462}{{\tt arXiv:1411.0462}}]. 
		
	\end{thebibliography}
\end{document}